\newenvironment{proof}{\noindent{\bf Proof : \ }}{\hfill$\Box$\par\medskip}
\newtheorem{theorem}{Theorem}[section]
\newtheorem{corollary}[theorem]{Corollary}
\newtheorem{lemma}[theorem]{Lemma}
\newtheorem{definition}[theorem]{Definition}
\newtheorem{claim}[theorem]{Claim}
\newtheorem{observation}[theorem]{Observation}
\newenvironment{proofof}[1]{\begin{trivlist} \item {\bf Proof
#1:~~}}
  {\qed\end{trivlist}}
\renewenvironment{proofof}[1]{\par\medskip\noindent{\bf Proof of #1: \ }}{\hfill$\Box$\par\medskip}
\newcommand{\namedref}[2]{\hyperref[#2]{#1~\ref*{#2}}}
\newcommand{\thmlab}[1]{\label{thm:#1}}
\newcommand{\thmref}[1]{\namedref{Theorem}{thm:#1}}
\newcommand{\lemlab}[1]{\label{lemm:#1}}
\newcommand{\lemref}[1]{\namedref{Lemma}{lemm:#1}}
\newcommand{\obslab}[1]{\label{obs:#1}}
\newcommand{\obsref}[1]{\namedref{Observation}{obs:#1}}
\newcommand{\figlab}[1]{\label{fig:#1}}
\newcommand{\figref}[1]{\namedref{Figure}{fig:#1}}
\newcommand{\seclab}[1]{\label{sec:#1}}
\newcommand{\secref}[1]{\namedref{Section}{sec:#1}}
\newcommand{\tablab}[1]{\label{tab:#1}}
\newcommand{\tabref}[1]{\namedref{Table}{tab:#1}}
\newenvironment{remindertheorem}[1]{\medskip \noindent {\bf Reminder of  #1.  }\em}{}
\def \HAM    {\mdef{\mathsf{HAM}}}
\newcommand{\COMMENTED}[1]{{}}
\newcommand{\PPr}[1]{\ensuremath{\mathbf{Pr}\Big[#1\Big]}}
\newcommand{\bigO}[1]{\ensuremath{\mathcal{O}\left(#1\right)}}
\newcommand{\eps}{\epsilon}
\newcommand{\level}[1]{\ensuremath{\mathit{level}\left(#1\right)}}
\newcommand{\fingerprint}[1]{\ensuremath{\mathit{fingerprints}\left(#1\right)}}
\def \check {\mdef{\mathsf{NearPalindrome}}}
\def \recover {\mdef{\mathsf{Recover}}}
\newcommand{\mdef}[1]{{\ensuremath{#1}}\xspace}  
\newcommand{\myfunc}[1]{\mdef{\mathsf{#1}}}      
\DeclareMathOperator*{\polylog}{polylog}
\newcommand{\superscript}[1]{\ensuremath{^{\mbox{\tiny{\textit{#1}}}}}\xspace}
\def \th {\superscript{th}}     
\def \etal{{\it et~al.}}
\def \polylog  {\mdef{\myfunc{polylog}}}             
\newcommand{\flr}[1]{\mdef{\left\lfloor#1\right\rfloor}}              
\newcommand{\ceil}[1]{\mdef{\left\lceil#1\right\rceil}}               
\newcommand{\ignore}[1]{}
\newif\ifnotes\notestrue 
\newcommand{\elena}[1]{\textcolor{red}{{\bf (Elena:} {#1}{\bf ) }} \marginpar{\tiny\bf
             \begin{minipage}[t]{0.5in}
               \raggedright E:
                \end{minipage}}}
\newcommand{\erfan}[1]{\textcolor{blue}{{\bf (Erfan:} {#1}{\bf ) }} \marginpar{\tiny\bf
             \begin{minipage}[t]{0.5in}
               \raggedright E:
                \end{minipage}}}
\newcommand{\samson}[1]{\textcolor{green}{{\bf (Samson:} {#1}{\bf ) }} \marginpar{\tiny\bf
             \begin{minipage}[t]{0.5in}
               \raggedright S:
            \end{minipage}}}            							
\newcommand{\elena}[1]{}
\newcommand{\erfan}[1]{}
\newcommand{\samson}[1]{}
\title{Streaming for Aibohphobes: Longest Palindrome with Mismatches}
\author{
Elena Grigorescu\thanks{Department of Computer Science, Purdue University, West Lafayette, IN. 
Email: {\tt elena-g@purdue.edu}. Research supported in part by NSF CCF-1649515.}
\and
Erfan Sadeqi Azer\thanks{School of Informatics and Computing, Indiana University Bloomington, Bloomington, IN.
Email: {\tt esadeqia@indiana.edu}.}
\and
Samson Zhou\thanks{Department of Computer Science, Purdue University, West Lafayette, IN. 
Email: {\tt samsonzhou@gmail.com}. Research supported by NSF CCF-1649515.}
}
\begin{document}
\maketitle
\begin{abstract}
A palindrome is a string that reads the same as its reverse, such as ``aibohphobia'' (fear of palindromes).
Given an integer $d>0$, a {\em $d$-near-palindrome} is a string of Hamming distance at most $d$ from its reverse.
 
We study the natural problem of identifying a longest $d$-near-palindrome in data streams. The problem is relevant to the analysis of DNA databases, and to the task of repairing recursive structures in documents such as XML and JSON.

We present an algorithm that returns a $d$-near-palindrome whose length is within a multiplicative $(1+\eps)$-factor of the longest $d$-near-palindrome. 
Our algorithm also returns the set of mismatched indices of the $d$-near-palindrome,  using $\bigO{\frac{d\log^7 n}{\eps\log(1+\eps)}}$ bits of space, and $\bigO{\frac{d\log^6 n}{\eps\log(1+\eps)}}$ update time per arriving symbol. 
We show that $\Omega(d\log n)$ space is necessary for estimating the length of longest $d$-near-palindromes  with high probability.

We further obtain an additive-error approximation algorithm and a comparable lower bound, as well as an {\em exact} two-pass algorithm that  solves the longest $d$-near-palindrome problem using $\bigO{d^2\sqrt{n}\log^6 n}$  bits of space.
\end{abstract}
\section{Introduction}
A palindrome is a string that  reads the same as its reverse, such as the common construct  ``racecar'', or the deliberate construct ``aibohphobia''.  Given a metric and an integer $d>0$, we say that a string is a {\em  $d$-near-palindrome} if it is at distance at most $d$ from its reverse.
In this paper, we study the problem of identifying the longest $d$-near-palindrome substring in the {\em streaming} model,  under the Hamming distance.  
In the streaming model,  the input data arrives one symbol at a time, and we are allowed to perform computation using only a small amount of working memory. 
Specifically, our goal is to approximate the length of a longest near-palindrome in a string of length $n$, using only $o(n)$ space. 
A related question regarding approximating the length of a longest palindrome in RNA sequences under removal of elements was explicitly asked at the Bertinoro Workshop on Sublinear Algorithms 2014 \cite{SublinearOpen61}.

Finding near-palindromes is widely motivated in string processing of databases relevant to bioinformatics.
Specifically, since the development of the Human Genome Project, advances in biological algorithms have quickened the sequencing for genes and proteins, leading to increasingly large databases of strings representing both nucleic acids for DNA or RNA, and amino acids for proteins. 
Tools to analyze these sequences, such as the basic local alignment search tool (BLAST) \cite{Altschul90}, often require the removal of ``low-complexity'' regions (long repetitive or palindromic structures). 
However, these long sequences frequently contain small perturbations through mutation or some other form of corruption (including human error),  so that identifying ``near''-palindromes under either Hamming distance or edit distance is important for preprocessing sequences before applying  heuristic tools.
In particular, the streaming model is relevant to contemporary data-sequencing technologies for near-palindromes, as further discussed in \cite{ContiCH04, HerbordtMSGC07}.
\subsection*{Our contributions}
We initiate the study of  finding near-palindromes in the streaming model, and provide several algorithms 
for the longest near-palindrome substring. 

Given a stream $S$ of length $n$ and integer $d=o(\sqrt{n})$, let $\ell_{max}$ be the length of a longest $d$-near-palindrome substring in $S$. 

\newcommand{\thmham}{There exists a one-pass streaming algorithm that returns a $d$-near-palindrome of length at least $\frac{1}{1+\eps}\cdot {\ell_{max}}$, with probability $1-\frac{1}{n}$. 
The algorithm uses $\bigO{\frac{d\log^7 n}{\eps\log(1+\eps)}}$ bits of space and update time $\bigO{\frac{d\log^6 n}{\eps\log(1+\eps)}}$ per arriving symbol.
}
\begin{theorem} \thmlab{thm:ham}
\thmham
\end{theorem}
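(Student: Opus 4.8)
The plan is to treat every symbol position and every inter-symbol gap as a potential palindrome center, and for each center $c$ to find the largest radius $r$ for which the length-$2r$ window around $c$ is a $d$-near-palindrome. The structural fact I would lean on is \emph{monotonicity}: if the window centered at $c$ of radius $r$ is a $d$-near-palindrome, then so is every centered subwindow of radius $r' \le r$, since shrinking the radius can only delete mismatched pairs and never create new ones. This monotonicity is what lets me tolerate a coarse discretization of the radius: I only probe radii at the scale boundaries $(1+\eps)^j$ for $j = 0,1,\dots,\bigO{\log_{1+\eps} n}$, and reporting the best scale-aligned radius costs at most a multiplicative $(1+\eps)$ factor against $\ell_{max}$, which is exactly the guarantee claimed.

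Because the stream is revealed left to right, I cannot store the whole prefix, so I would keep the mismatch sketch $\fingerprint{\cdot}$ together with the hierarchical $\level{\cdot}$ data only at a sparse set of checkpoints. For each of the $\bigO{\frac{\log n}{\log(1+\eps)}}$ scales I retain the $\bigO{1/\eps}$ most recent checkpoints relevant to that scale, so that any window whose right endpoint is the current position and whose length falls in that scale has its left endpoint within an $\eps$-fraction of a stored checkpoint. Summing over scales gives $\bigO{\frac{\log n}{\eps\log(1+\eps)}}$ active sketches at any time, and each sketch must let me, for a queried radius, run the test \check\ and, when it passes, invoke \recover\ to output the at most $d$ mismatch indices.

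The heart of the argument is the sketch that, comparing the reversed left half against the right half as both are revealed incrementally, decides whether the Hamming distance exceeds $d$ and otherwise recovers the mismatches. I would build it from Karp--Rabin polynomial fingerprints layered over $\bigO{\log n}$ levels of position subsampling, so that mismatches are isolated into distinct buckets and decoded; each level contributes $\bigO{d\cdot\mathrm{poly}(\log n)}$ field elements of $\bigO{\log n}$ bits, giving a per-sketch cost of $\bigO{d\log^6 n}$ bits. Multiplying by the number of active sketches yields the claimed space $\bigO{\frac{d\log^7 n}{\eps\log(1+\eps)}}$; the per-symbol update touches each sketch once and is a $\log n$ factor cheaper than its size, giving the stated update time $\bigO{\frac{d\log^6 n}{\eps\log(1+\eps)}}$. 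For correctness I would maintain, per active sketch, the largest scale-aligned radius passing \check, output the global maximum, and run \recover\ on the reported window; monotonicity together with checkpoint density forces this value to be within $(1+\eps)$ of $\ell_{max}$. Choosing the fingerprint field of size $\mathrm{poly}(n)$ makes each comparison err with probability $\bigO{n^{-c}}$, and a union bound over the $\mathrm{poly}(n)$ comparisons gives overall success $1-\tfrac1n$.

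I expect the main obstacle to be the $d$-mismatch sketch under the palindrome's two-sided, simultaneously growing access pattern. Unlike fixed-pattern $k$-mismatch, here both compared strings extend with every new symbol, so the sketch must support incremental updates that lengthen the comparison on both sides while keeping recovery correct precisely up to the point where the distance crosses $d$. Reconciling this with the checkpoint discretization---so that the window actually recovered is genuinely a $d$-near-palindrome whose length is near optimal, rather than a window whose mismatches happen to concentrate in the truncated portion---is the delicate part, and is where the monotonicity fact and the careful choice of checkpoint spacing must be combined.
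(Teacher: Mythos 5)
Your architecture is essentially the paper's: geometric checkpoints in the style of Berenbrink \etal\ giving $\bigO{\frac{\log n}{\eps\log(1+\eps)}}$ active sketches, each an $\bigO{d\log^6 n}$-bit structure of Karp--Rabin fingerprints over congruence-class subsamples modulo random primes, a test-and-recover routine per checkpoint, and a union bound over $\mathrm{poly}(n)$ fingerprint comparisons; your monotonicity fact (centered subwindows of a $d$-near-palindrome remain $d$-near-palindromes) is precisely what the paper's correctness argument uses when it argues that the window $S[c,m+(m-c)]$, centered at the optimal midpoint $m$ and left-anchored at a nearby checkpoint $c$, must be accepted. However, the step you explicitly defer --- the ``two-sided, simultaneously growing access pattern'' --- is the load-bearing mechanism, and your framing of it (``the sketch must support incremental updates that lengthen the comparison on both sides'') aims at the wrong target: no two-sided incremental sketch exists or is needed. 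The paper maintains a single set of \emph{master prefix} fingerprints $\phi^F_{r,p}(S[1,x])$ and $\phi^R_{r,p}(S[1,x])$ for every class $r$ modulo each first-level prime $p$ (and modulo $pq$ for second-level primes $q\in[\log n,3\log n]$), snapshots them at each checkpoint, and synthesizes the window fingerprints $\phi^F_{r,p}(S[c,x])$ and $\phi^R_{r,p}(S[c,x])$ on demand via the sliding property. The test in \check then compares the forward fingerprint of the window against its own exponent-adjusted \emph{reverse} fingerprint (the reversal identity), so the palindrome condition is checked as a self-test on one window rather than as a comparison of a growing left half against a growing right half; the test is simply re-run at each arrival $x$ with $x-c>\tilde{\ell}$. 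Without this prefix-snapshot/sliding/reversal mechanism your sketch is not implementable, so the gap is genuine.

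Two secondary under-specifications. First, ``isolate mismatches into buckets and decode'' alone does not yield a sound test: a window with $\HAM(S[c,x],S^R[c,x])\gg d$ can have few \emph{isolated} mismatches, so naive decoding could under-count and wrongly accept. The paper's \check is two-stage: it first rejects any window with $\HAM\ge 2d$ using the statistic $\Delta(c,x)=\max_j\Delta_j(c,x)>(1+\beta)d$ with $\beta=\frac{1}{16}$ (\lemref{lem:reject}), and only on the survivors (distance at most $2d$) does isolation succeed for \emph{every} mismatch with high probability (\lemref{lem:accept}), with locations pinned down by the Chinese Remainder Theorem across the second-level primes (\lemref{lem:computemismatch}). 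Second, you discretize both the checkpoint positions and the probe radii at granularity $(1+\eps)$, which compounds to a $(1+\eps)^2$ loss; the paper avoids this by testing at every arriving position (losing only through checkpoint spacing) and by setting $\alpha=\sqrt{1+\eps}-1$ in the checkpoint hierarchy so that the final ratio telescopes to exactly $1+\eps$. Both fixes are routine, but the missing prefix-fingerprint mechanism in the previous paragraph is the idea your proposal cannot do without.
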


\newcommand{\thmadd}{There exists a one-pass streaming algorithm that returns a $d$-near-palindrome of length at least $\ell_{max}-E$, with probability $1-\frac{1}{n}$. 
The algorithm uses $\bigO{\frac{dn\log^6 n}{E}}$ bits of space and update time $\bigO{\frac{dn\log^5 n}{E}}$ per arriving symbol.
}
\begin{theorem} \thmlab{thm:add}
\thmadd
\end{theorem}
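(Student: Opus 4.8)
The plan is to reuse the fingerprinting-and-sketching machinery behind the multiplicative algorithm (\thmref{thm:ham}) almost verbatim, changing only the schedule of positions at which sketches are retained. In that algorithm the retained checkpoints sit at geometrically growing offsets from the current stream position, so that a palindrome of any length is pinned down to within a $(1+\eps)$ factor; for an additive guarantee I would instead keep a checkpoint at every multiple of $g := \flr{E/2}$. Concretely, I maintain the forward and reverse Karp--Rabin prefix sketches at each position $q\in\{g,2g,3g,\dots\}$ that has already arrived, together with the running sketch at the current position $p$. As each symbol $S[p]$ arrives, I run, for every stored checkpoint $q<p$, the test $\check$ on the block $S[q+1..p]$ viewed as a candidate $d$-near-palindrome, and on success invoke $\recover$ to output its at most $d$ mismatch indices; the longest verified block is reported. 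I note in passing that the cruder route of invoking \thmref{thm:ham} with $\eps=E/n$ already yields an additive-$E$ approximation, but its space degrades to roughly $\bigO{\frac{dn^2\log^7 n}{E^2}}$, so the direct uniform schedule is what gives the sharper bound.

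The approximation guarantee rests on a \emph{one-sided} rounding of the optimum. Let the longest $d$-near-palindrome have center $c^*$, radius $r^*=\ell_{max}/2$, and left boundary $L^*=c^*-r^*+1$, so that at most $d$ of the pairs $(c^*-i+1,\,c^*+i)$, $1\le i\le r^*$, are mismatched. Let $q$ be the smallest multiple of $g$ with $q\ge L^*-1=c^*-r^*$, so that $q<c^*-r^*+g$. The block $S[q+1..\,2c^*-q]$ is the near-palindrome with the \emph{same} center $c^*$ and radius $r=c^*-q$, hence its mismatched pairs form a subset of those of the optimum and number at most $d$, while its length is $2r>2r^*-2g=\ell_{max}-E$. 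Because its left endpoint is exactly $q+1$ (so $q$ is a stored checkpoint) and its right endpoint $2c^*-q\le c^*+r^*\le n$ is a position the stream reaches, the block is tested by $\check$ and its mismatches returned by $\recover$ at the step $p=2c^*-q$; conditioned on those sketches behaving correctly, the algorithm therefore reports a $d$-near-palindrome of length at least $\ell_{max}-E$. Odd-length palindromes are handled by the symmetric instance, and the case $\ell_{max}<E$ is vacuous (return the trivial one-symbol palindrome).

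For the resource bounds I would simply recount. There are $\bigO{n/g}=\bigO{n/E}$ live checkpoints, and each carries exactly the per-checkpoint sketch of the multiplicative algorithm, of size $\bigO{d\log^6 n}$ bits and updatable/queryable in $\bigO{d\log^5 n}$ time independently of the block length; multiplying gives $\bigO{\frac{dn\log^6 n}{E}}$ space and $\bigO{\frac{dn\log^5 n}{E}}$ time per symbol, as claimed. The probability bound comes from a union bound over all $\check$ and $\recover$ invocations: there are $\bigO{n}$ arriving symbols and $\bigO{n/E}$ checkpoints, hence $\operatorname{poly}(n)$ invocations in all, so instantiating the fingerprints with collision probability $n^{-c}$ for a suitable constant $c$ keeps the total failure probability below $1/n$ while affecting only the hidden logarithmic factors.

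The step I expect to be delicate is precisely this coupling between correctness and the enlarged checkpoint set. Unlike the multiplicative algorithm, where the number of simultaneously active sketches is only $\bigO{\frac{\log n}{\eps\log(1+\eps)}}$, here it can be as large as $\bigO{n/E}$, and a query is launched at \emph{every} arriving symbol; I must therefore confirm that the error analysis of $\check$ and $\recover$ inherited from \thmref{thm:ham} tolerates the resulting $\bigO{n^2/E}$ dependent tests under a single draw of the random base, and that $\recover$ still returns the exact mismatch set, not merely the length, for all of them simultaneously. The one genuinely new combinatorial point is that rounding \emph{only} the left boundary out to a checkpoint while freezing the center at $c^*$ is what preserves the at-most-$d$-mismatch property; rounding both endpoints would shift the center and could destroy the near-palindrome entirely, which is exactly why the asymmetric rounding of the second paragraph is the right move.
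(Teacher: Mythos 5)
Your proposal matches the paper's proof of \thmref{thm:add} essentially verbatim: uniformly spaced checkpoints at every multiple of $\flr{E/2}$ carrying the same first- and second-level fingerprint sketches, a call to $\check$ against each live checkpoint as symbols arrive, the same symmetric-trimming correctness argument (shrinking the optimal palindrome around its \emph{fixed} center to the nearest checkpoint loses fewer than $E$ characters and only deletes mismatched pairs, which the paper states in compressed form), and the same space, update-time, and union-bound accounting. The only cosmetic discrepancy is that you invoke $\recover$ for the mismatch indices, whereas in the paper's one-pass algorithm $\check$ itself returns them ($\recover$ belongs to the two-pass exact algorithm).
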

If two passes over the stream are allowed,  one can find an {\em exact} longest $d$-near-palindrome.

\newcommand{\thmexact}{There exists a two-pass streaming algorithm that returns a $d$-near-palindrome of length $\ell_{max}$, with probability $1-\frac{1}{n}$. 
It uses $\bigO{d^2\sqrt{n}\log^6 n}$ bits of space and $\bigO{d^2\sqrt{n}\log^5 n}$ update time per arriving symbol.
}
\begin{theorem} \thmlab{thm:exact}
\thmexact
\end{theorem}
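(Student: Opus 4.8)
The plan is to reduce the problem to a single per-center task and to solve that task \emph{exactly} in sublinear space by distributing the work across the two passes. Recall that a $d$-near-palindrome is determined by a center together with an arm length, and that for a fixed center the number of mismatches between the forward arm and the reversed backward arm is monotone in the arm length. Hence each of the $\bigO{n}$ centers has a unique maximal arm, the one at which the number of mismatches first exceeds $d$, and $\ell_{max}$ is the largest palindrome length obtained over all centers. Exactly as in the one-pass algorithm, a single maximal arm can be found by the ``kangaroo'' strategy: using Karp--Rabin fingerprints one binary-searches for the first mismatch, skips past it, and repeats until $d+1$ mismatches have accumulated, which costs $\bigO{d\log n}$ fingerprint evaluations; the subroutines $\check$ and $\recover$ already package this computation once the required fingerprints are available. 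The only obstacle to running it inside a stream is that each evaluation needs prefix fingerprints at positions that may lie arbitrarily deep in the already-consumed prefix, and we cannot afford to store all $n$ of them.

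First I would split the search at the length threshold $\sqrt n$. Every $d$-near-palindrome of length at most $\sqrt n$ lies inside a window of $2\sqrt n$ consecutive symbols, so in the first pass I maintain a buffer of the last $\bigO{\sqrt n}$ symbols and, as it slides, run the kangaroo procedure directly on the buffered characters to extract the longest short near-palindrome centered in the buffer; this needs $\bigO{\sqrt n\log n}$ bits and no fingerprints. For near-palindromes longer than $\sqrt n$ I partition $[n]$ into $\bigO{\sqrt n}$ blocks of length $\sqrt n$ and, during the first pass, store the prefix fingerprint at every block boundary, keeping both a forward fingerprint in base $r$ and a backward fingerprint in base $r^{-1}$ so that reversed arms can be compared, together with a $d$-sparse mismatch-recovery sketch for each block. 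This is the component whose cost scales as $\bigO{d^2\sqrt n\,\polylog n}$, the extra factor of $d$ arising from recovering up to $d$ mismatches against the per-block sketches. In parallel I run the additive approximation of \thmref{add} with error parameter $\sqrt n$, which fits in $\bigO{d\sqrt n\log^6 n}$ bits and pins $\ell_{max}$ down to an interval of width $\sqrt n$.

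In the second pass I replay the stream keeping only the current block in a buffer. Because any arm of a long near-palindrome spans several whole blocks, I can now evaluate the fingerprint of an arbitrary arm-segment by composing the relevant stored boundary fingerprints with a roll over the at most $\sqrt n$ partial-block symbols currently buffered; a mismatch located by binary search is first pinned to one block using the boundary fingerprints and then pinpointed inside that block using its recovery sketch. Feeding these evaluations into $\check$ and $\recover$ reproduces the exact maximal arm at each candidate center, and the length interval returned by the first pass confines the relevant arm growth, so that balancing the block length at $\sqrt n$ equalizes the boundary-data storage against the partial-block roll cost and yields the stated space and $\bigO{d^2\sqrt n\log^5 n}$ update-time bounds. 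The main obstacle, and the one the whole construction is built around, is precisely this reversed-arm fingerprint access across far-apart blocks in only $\bigO{\sqrt n}$ memory; once it is in place, correctness holds whenever no Karp--Rabin collision occurs among the $\mathrm{poly}(n)$ comparisons, which I secure with probability $1-\tfrac1n$ by taking the fingerprint prime of size $n^{\Theta(1)}$ and applying a union bound.
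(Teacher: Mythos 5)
Your proposal correctly identifies the two genuine obstacles (sublinear access to fingerprints of arbitrary past segments, and comparing a forward arm against a reversed arm across distant blocks), and your block-boundary snapshots plus per-block sketches are in the same spirit as the paper's checkpoint fingerprints. However, there are two concrete gaps where the construction as described would fail.

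First, you never bound the number of candidate centers that survive the first pass. On highly periodic inputs (e.g., a string close to $ww^Rww^R\ldots$), the number of midpoints of $d$-near-palindromes of length within $\sqrt{n}$ of $\ell_{max}$ can be $\Theta(n)$, so in your second pass you would have to maintain, for each active center, an incrementally built reversed-arm sketch aligned to that center's offset; with linearly many concurrently active centers this state is $\Omega(n)$ bits, blowing the budget, and even the pass-one list of centers is too large to store verbatim. This is exactly the difficulty the paper's structural result (\lemref{lem:periodic}) is built to solve: it shows consecutive long-palindrome midpoints are equally spaced and the intervening substrings are, up to $d$ corruptions, repetitions of a single block $ww^R$, so the whole (possibly linear-size) midpoint list and all inter-midpoint fingerprints compress to $\bigO{d}$ stored fingerprints per checkpoint. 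Without some such compression your space claim does not go through.

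Second, your mechanism for finding the \emph{exact} right end of a palindrome fails at sub-block granularity. The maximal extension pairs newly arriving symbols with positions strictly inside old blocks; boundary prefix fingerprints only expose segments between block boundaries, and a whole-block $d$-sparse recovery sketch cannot be sliced to a sub-block segment (sliding requires prefix fingerprints at the segment's endpoints, which were not stored). Worse, past the true end of the palindrome the string is unconstrained, so the completed block pair straddling the $(d+1)$\st mismatch may contain $\Theta(\sqrt{n})$ mismatches, exceeding any $\bigO{d}$-sparse sketch's recovery capacity, so you cannot locate where the mismatch count first crosses $d+1$. The paper sidesteps this by having the second pass \emph{explicitly buffer} the $\frac{\sqrt{n}}{2}$ characters immediately preceding each surviving checkpoint and candidate midpoint (its arrays $\mathcal{A}$ and $\mathcal{B}$, of total size $\bigO{d\sqrt{n}}$ characters) as they stream by, and then performs the extension by direct character-by-character comparison, with \recover (backed by \lemref{lem:periodic}) used only for the interior region already certified in pass one. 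Your proposal keeps only the current block buffered, so at the moment the right arm is extending, the needed left-side characters are gone; adding the anticipatory character buffers, and the periodicity-based compression above, is precisely what separates your sketch from a complete proof.
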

We complement our results with lower bounds for randomized algorithms.
 
\newcommand{\lowerbounds}{Let $d=o(\sqrt{n})$. 
Any randomized streaming algorithm that returns an estimate $\hat{\ell}$ of the length of the longest $d$-near-palindrome, where $\hat{\ell}\le\ell_{max}\le(1+\eps)\hat{\ell}$, with probability at least $1-\frac{1}{n}$, must use $\Omega\left(d\log n\right)$ bits of space.}
\begin{theorem} \thmlab{thm:lowerbounds}
\lowerbounds
\end{theorem}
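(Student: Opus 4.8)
The plan is to prove the space bound by reduction from a hard one-way communication problem, using the standard fact that a one-pass streaming algorithm using $s$ bits of memory induces a one-way protocol whose single message — Alice's memory state after she has processed her prefix of the stream — has $s$ bits. Concretely, I would reduce from Augmented Indexing $\mathsf{AIND}_m$ on $m=\Theta(d\log n)$ bits: Alice holds $a\in\zo^m$, Bob holds an index $J$ together with the suffix $a_{J+1},\dots,a_m$, and must output $a_J$; its randomized one-way complexity is $\Omega(m)$, and since our algorithm succeeds with probability $1-\tfrac1n\gg\tfrac23$ the induced protocol is correct enough to invoke this bound. If Alice and Bob can encode an $\mathsf{AIND}_m$ instance as a stream of length $\bigO{n}$ so that any $(1+\eps)$-approximation $\hat\ell$ of $\ell_{max}$ determines $a_J$, then $s=\Omega(m)=\Omega(d\log n)$.

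\textbf{Core gadget (one geometric scale).} Fix a center and read outward. I would have Alice write her bits on the left of the center at reversed-geometric distances $2^{t-1},2^{t-2},\dots,2^{0}$ (so bit $1$ is farthest, bit $t$ nearest), each as a symbol in $\zo$, with a fixed filler symbol elsewhere. Over a three-letter alphabet Bob can then isolate $a_J$: at the smaller distances, where I place the bits he knows (the $\mathsf{AIND}$ suffix), he copies Alice's symbols to create matches; at the target distance $2^{t-J}$ he writes $0$, so that pair mismatches iff $a_J=1$; at the larger distances, holding the unknown earlier bits, he writes the third symbol, forcing mismatches regardless of Alice's values. Pre-spending $d-1$ units of the budget on forced mismatches near the center, the final \emph{binding} mismatch — the one at which the budget of $d$ is exhausted — lands one geometric step apart in the two cases, so $\ell_{max}$ changes by a multiplicative factor $\approx 2>1+\eps$ and the estimate $\hat\ell$ separates $a_J=0$ from $a_J=1$. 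A single such scale already gives the $\Omega(\log n)$ bound (the case $d=1$).

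\textbf{Amplifying to the factor $d$.} To reach $\Omega(d\log n)$ I would exploit that only the \emph{global} longest near-palindrome is reported. I place $d$ centers packed in a short window just before the Alice/Bob boundary, each carrying its own reversed-geometric scale whose outer shell reaches into Bob's long suffix. Since the suffix has length $\Theta(n)$, each center still spans a multiplicative radius range of $\Theta(\sqrt n)$ — this is where $d=o(\sqrt n)$ enters — and so independently carries $\Theta(\log n)$ bits, for $\Theta(d\log n)$ bits written within a single prefix/suffix split. Given a query in center $k^\ast$, Bob makes that center's palindrome the unique longest one by crushing every other center (forcing $d+1$ early mismatches there with the padding symbol) while setting up center $k^\ast$ exactly as in the core gadget; the direct-sum structure of Augmented Indexing then forces Alice's message to retain information about all $d$ scales, yielding $s=\Omega(d\log n)$.

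\textbf{Main obstacle.} The delicate step is the last one: because the packed centers share stream positions, I must verify that Bob can \emph{simultaneously} crush all centers $k\neq k^\ast$ and keep center $k^\ast$ pivotal with one assignment of his suffix symbols, which forces a careful choice of center spacing and of the per-center scales so that the relevant shells are effectively decoupled. Alongside this I would check the quantitative points that make the reduction go through: that the pivotal bit always induces a gap exceeding $1+\eps$ (so a constant $\eps$ suffices and no $\eps$-dependence appears in the bound), that the mismatch budget is accounted exactly at every center, and that $d$ scales of range $\Theta(\sqrt n)$ indeed fit in length $\bigO{n}$ — again exactly the regime $d=o(\sqrt n)$.
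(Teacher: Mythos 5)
Your framework (one-way communication, Augmented Indexing) is sound in outline, but the construction as described has two genuine gaps. First, the ``fixed filler symbol elsewhere'' is fatal as stated: a run $f^k$ of a single symbol is itself a perfect palindrome, so with a $\Theta(n)$-length suffix consisting mostly of filler, $\ell_{max}$ would be $\Theta(n)$ regardless of $a_J$, and the estimate $\hat{\ell}$ would carry no information. Any such reduction needs an aperiodic, palindrome-killing background --- this is exactly why the paper pads with $\nu$, the length-$\frac{n}{4}$ prefix of $1^10^11^20^21^30^3\cdots$, and then proves in \lemref{lem:separate} that when $\HAM(x,y)\ge d+1$ no $d$-near-palindrome of length more than $200d^2+\frac{n}{2}$ survives \emph{anywhere} in $s(x,y)=\nu^Rxy^R\nu$; the analogous ``no accidental long near-palindrome'' verification is absent from your gadget and is not routine once a mismatch budget of $d$ is in play. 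Second, and more seriously, the step that actually produces the factor $d$ --- packing $d$ centers and having Bob crush all centers $k\neq k^\ast$ while keeping $k^\ast$ pivotal --- is precisely the part you flag as the ``main obstacle'' and leave unresolved, and the natural accounting goes against you: crushing center $k$ requires $d+1$ mismatches at small radii from $k$, and since the centers sit in a short window, each such crusher position lies at nearly the same radius from $k^\ast$, hence inside the span of $k^\ast$'s intended near-palindrome. With $d-1$ other centers this plants $\Theta(d^2)$ potential mismatches against a budget of only $d$. Unless you exhibit a layout in which every crusher simultaneously matches the $k^\ast$-mirror --- which requires Bob to know Alice's mirrored symbols there, in tension with those positions encoding the unknown prefix bits of the Augmented Indexing instance --- the reduction does not go through as proposed.

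It is worth seeing how the paper sidesteps both problems at once: it never uses multiple scales. It reduces instead to the promise problem of deciding $\HAM(x,y)=d$ versus $d+1$, where $x$ ranges over length-$\frac{n}{4}$ strings with exactly $d$ ones, and proves directly via Yao's principle and a pigeonhole/counting argument on memory configurations (\lemref{lem:lb:mem}) that this needs $\frac{d\log n}{3}$ bits. There the factor $d\log n$ comes from the entropy $\log\binom{n/4}{d}=\Theta(d\log n)$ of placing $d$ mismatch positions (using $d=o(\sqrt{n})$), not from $d$ independent geometric scales; a single padding $s(x,y)=\nu^Rxy^R\nu$ then converts the Hamming gap into a length gap of $n$ versus $200d^2+\frac{n}{2}$, which any $(1+\eps)$-approximation detects. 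To salvage your route you would need both an aperiodic background with a proved analogue of \lemref{lem:separate} and a worked-out decoupling lemma for the $d$ centers; as written, the proposal establishes at best the single-scale $\Omega(\log n)$ case.
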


\newcommand{\lowerboundsb}{Let $d=o(\sqrt{n})$ and $E>d$ be an integer. 
Any randomized streaming algorithm that returns an estimate $\hat{\ell}$ of the length of the longest $d$-near-palindrome, where $\hat{\ell}\le\ell_{max}\le\hat{\ell}+E$, with probability at least $1-\frac{1}{n}$, must use $\Omega\left(\frac{dn}{E}\right)$ bits of space.}
\begin{theorem} \thmlab{thm:lowerboundsb}
\lowerboundsb
\end{theorem}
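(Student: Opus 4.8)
The plan is to prove the bound by a reduction from the \emph{Augmented Indexing} problem on $N=\Theta(dn/E)$ bits, whose one-way randomized communication complexity is $\Omega(N)$ even for protocols that err with constant probability. In Augmented Indexing, Alice holds $x\in\{0,1\}^N$, Bob holds an index $i^\ast$ together with the suffix $x_{i^\ast+1},\dots,x_N$, and Bob must output $x_{i^\ast}$. Given a streaming algorithm $\mathcal{A}$ for estimating $\ell_{max}$, Alice feeds a prefix stream encoding $x$ into $\mathcal{A}$ and sends Bob the memory contents; Bob appends a tailored continuation, reads the reported estimate $\hat\ell$, and decodes $x_{i^\ast}$. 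This yields a one-way protocol whose message length equals the space of $\mathcal{A}$, so a correct decoding forces $S=\Omega(N)=\Omega(dn/E)$. Since $\mathcal{A}$ succeeds with probability $1-\frac1n$, the constant-error communication lower bound applies directly.

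For the encoding I would reuse the center-based gadget behind \thmref{thm:lowerbounds}, but replace its geometric radius scale by an arithmetic one. Fix a single center at the boundary between Alice's and Bob's portions of the stream, so that the pair at radius $r$ consists of one symbol written by Alice and one written by Bob, and a $d$-near-palindrome about this center reaches radius $r$ iff at most $\flr{d/2}$ of the first $r$ pairs mismatch. I place $\Theta(n/E)$ \emph{buckets} at radii spaced $\Theta(E)$ apart, and at each bucket pack $\Theta(d)$ independently toggleable bits using $\Theta(d)$ distinct marker symbols (equivalently, a product alphabet), so that Bob's continuation can switch any single one of them between match and mismatch without disturbing the others. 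Ordering the $N=\Theta(n/E)\cdot\Theta(d)$ coordinates of $x$ by \emph{decreasing} radius makes Bob's known suffix correspond exactly to the region nearer the center. Where the geometric spacing of \thmref{thm:lowerbounds} produces $\Theta(\log n)$ usable buckets and hence $\Omega(d\log n)$, the arithmetic spacing here produces $\Theta(n/E)$ of them and hence the target $\Omega(dn/E)$.

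To decode the target bit $x_{i^\ast}$, sitting in bucket $t^\ast$ and marker layer $j^\ast$, Bob exploits his knowledge of all coordinates nearer the center: he writes his side of those pairs so as to cancel the mismatches in every layer other than $j^\ast$, to set exactly $\flr{d/2}$ mismatches in layer $j^\ast$ just inside bucket $t^\ast$ (poising the mismatch budget), and to open a clean run of matches of length strictly greater than $E$ immediately beyond bucket $t^\ast$. With the budget thus poised, if $x_{i^\ast}=1$ the target pair is the $(\flr{d/2}+1)$-st mismatch and truncates the boundary near-palindrome at radius $\approx t^\ast E$, whereas if $x_{i^\ast}=0$ the near-palindrome survives past the forced match-window and reaches radius exceeding $t^\ast E+E$. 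The two lengths differ by more than $E$, so the guarantee $\hat\ell\le\ell_{max}\le\hat\ell+E$ lets Bob separate the cases by thresholding $\hat\ell$, recovering $x_{i^\ast}$.

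The main obstacle, exactly as in \thmref{thm:lowerbounds}, is guaranteeing that this boundary-centered near-palindrome is the \emph{global} longest, so that $\ell_{max}$ really equals the quantity Bob analyzes rather than some spurious near-palindrome living entirely inside Alice's encoded prefix or inside Bob's continuation. I would control this by interleaving position-dependent separator symbols, drawn from a disjoint alphabet, throughout the construction, so that every off-center window accrues mismatches at a high rate and no competing near-palindrome can approach radius $t^\ast E$. The remaining checks are routine: verifying that the $\Theta(n/E)$ buckets together with their markers and separators fit within a length-$n$ stream (which is where the hypothesis $E>d$ is used, so that the bucket spacing $\Theta(E)$ dominates the per-bucket width $\Theta(d)$), and that the $1-\frac1n$ success probability survives the reduction.
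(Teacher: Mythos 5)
Your overall strategy---a one-way communication reduction from Augmented Indexing on $N=\Theta(dn/E)$ bits, with $\Theta(n/E)$ buckets of $\Theta(d)$ toggleable layers around a single boundary center---is a genuinely different route from the paper's, and it can in principle deliver the bound, but as written the decoding step has a concrete gap. With the coordinates of $x$ ordered by decreasing radius, Bob's known suffix $x_{i^\ast+1},\dots,x_N$ covers only the radii \emph{strictly nearer the center} than the queried layer $j^\ast$. The remaining layers of bucket $t^\ast$ at radii just beyond $j^\ast$ belong to the unknown prefix $x_1,\dots,x_{i^\ast-1}$, so Bob cannot ``cancel'' them or include them in his ``clean run of matches immediately beyond bucket $t^\ast$'': since the two encodings of a bit differ and Bob does not know which one Alice wrote, he cannot guarantee a match on any of those up-to-$\Theta(d)$ pairs sitting at radii in $\left[t^\ast E,\,t^\ast E+O(d)\right]$. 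With the budget poised at $\flr{d/2}$ mismatched pairs, even one uncontrolled mismatch there truncates the $x_{i^\ast}=0$ palindrome at radius $t^\ast E+O(d)$, so the two cases are separated in length by only $O(d)<E$ rather than more than $E$, and thresholding $\hat\ell$ under the guarantee $\hat\ell\le\ell_{max}\le\hat\ell+E$ fails. The standard repair is to make every unknown coordinate contribute a \emph{deterministic} mismatch count irrespective of its value (e.g., encode $0\mapsto 01$ and $1\mapsto 10$ on Alice's side and have Bob answer $00$ against unknown coordinates, so each contributes exactly one mismatched pair), then poise only $k=\flr{d/2}-u$ mismatches where $u$ is the number of unknown layers of bucket $t^\ast$ beyond $j^\ast$, accepting that the $0$-case palindrome ends at bucket $t^\ast+1$; the separation becomes $2E-O(d)>E$ using $E>d$. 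Without some such mechanism the reduction does not go through.

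For contrast, the paper does not use indexing at all: by Yao's principle it suffices to defeat deterministic algorithms on a hard distribution, and the paper takes $x$ uniform over $\{0,1\}^{n'/2}$ with $n'=\Theta(dn/E)$, $y$ at Hamming distance exactly $d$ or $d+1$ from $x$, and shows by a direct pigeonhole/counting argument over memory configurations (\lemref{lem:lb:memb}) that fewer than $n'/4$ bits forces error probability at least $1/n'$ on deciding $\HAM(x,y)\le d$ versus $>d$; it then pads a single gap instance, $s(x,y)=\mathbf{1}^{E}x_1\mathbf{1}^{E/d}x_2\cdots x_{n'/2}y_{n'/2}\cdots\mathbf{1}^{E}$, so that the longest $d$-near-palindrome lengths in the two cases differ by more than $E$. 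Thus the $dn/E$ factor comes from the length $n'$ of one Hamming-threshold instance rather than from many independently decodable cells; your multi-bucket scheme buys modularity (and would extend to constant-error algorithms directly), but it carries the extra obligations above, and note that your hand-waved step---ruling out spurious off-center near-palindromes via separator symbols---is precisely the delicate point in this kind of construction (the paper's own padding argument required care there as well, as its acknowledgments attest).
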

A summary of our results and comparison with related work appears in \tabref{table:results}.
\subsection*{Background and Related Work}
Our techniques extend previous work on the Longest Palindromic Substring Problem, the Pattern Matching Problem, and the $d$-Mismatch Problem in the streaming model.

In the \emph{Longest Palindromic Substring Problem}, the goal is to output a longest palindromic substring of an input of length $n$, while minimizing the computation space. 
Manacher \cite{Manacher75} introduces a linear-time online algorithm that reports whether all symbols seen at the time of query form a palindrome. 
Berenbrink \etal\  \cite{BerenbrinkEMA14} achieve $\bigO{\frac{\log^2 n}{\eps\log(1+\eps)}}$ space for multiplicative error $(1+\eps)$, and show a space lower bound for algorithms with additive error. 
Gawrychowski \etal\  \cite{GawrychowskiMSU16} recently generalize the aforementioned lower bounds for additive error, and also produce a space lower bound of $\Omega\left(\frac{\log n}{\log(1+\eps)}\right)$ for algorithms with multiplicative error $(1+\eps)$.

In the \emph{Pattern Matching Problem}, one is given a pattern of length $m$ and the goal is to output all occurrences of the pattern in the input string, while again minimizing space or update time.
In order to achieve space sublinear in the size of the input, many pattern matching streaming algorithms use Karp-Rabin fingerprints \cite{KarpR87}. 
Porat and Porat \cite{PoratP09} present a randomized algorithm for exact pattern matching using $\bigO{\log m}$ space and $\bigO{\log m}$ update time, which Breslauer and Galil \cite{BreslauerG14} further improve to constant update time. 
For a more comprehensive survey on pattern matching, see \cite{ApostolicoG:1997}. 

In the related \emph{$d$-Mismatch Problem}, one is given a pattern of length $m$ and the goal is to find all substrings of the input that are at most Hamming distance $d$ from the pattern.
A line of exciting work (e.g., \cite{AmirLP04, PoratP09, CliffordEPP11, AndoniGMP13}) culminates in a recent algorithm by Clifford \etal\ \cite{CliffordFPSS16} that uses $\bigO{d^2\,\polylog\,m}$ space and $\bigO{\sqrt{d}\log d+\polylog\,m}$ update time per arriving symbol.

For several other metrics, Clifford \etal\  \cite{CliffordJPS13} show that linear space is necessary for algorithms identifying substrings with distance at most $d$ from a given pattern. 
Similarly, Andoni \etal\ \cite{AndoniGMP13} prove that any sketch estimating the edit distance between two strings requires space almost linear in the inputs. 
For time bounds, Backurs and Indyk \cite{BackursI15} show that the strong Exponential Time Hypothesis implies the general edit distance problem cannot be solved in time better than $n^{2-\eps}$.
On the positive side, Chakraborty \etal\ \cite{ChakrabortyGK16} give a low distortion embedding from edit distance to Hamming distance,  and Belazzougui and Zhang \cite{BelazzouguiZ16} provide the first streaming algorithm for computing edit distance using $\bigO{d^8\log^5 n}$ space, given the promise that the edit distance is at most $d$.

\begin{table*}
\centering
\resizebox{14cm}{!}{
\begin{tabular}[htb]{|c|c|c|c|c|}\hline
&\multicolumn{2}{|c|}{Space for Algorithms} & \multicolumn{2}{|c|}{Lower Bounds}\\\hline
Model & $d$-Near-Palindrome & Palindrome & $d$-Near-Palindrome & Palindrome \\\hline

$1$-Pass, Multiplicative $(1+\eps)$ & $\bigO{\frac{d\log^7 n}{\eps\log(1+\eps)}}$ & $\bigO{\frac{\log^2 n}{\eps\log(1+\eps)}}$\cite{BerenbrinkEMA14} & $\Omega(d\log n)$ & $\Omega\left(\frac{\log n}{\log(1+\eps)}\right)$\cite{GawrychowskiMSU16} \\\hline

$1$-Pass, Additive $E$ & $\bigO{\frac{dn\log^6 n}{E}}$ & $\bigO{\frac{n\log n}{E}}$\cite{BerenbrinkEMA14} & $\Omega\left(\frac{dn}{E}\right)$ & $\Omega\left(\frac{n}{E}\right)$\cite{BerenbrinkEMA14} \\\hline

$2$-Pass, Exact & $\bigO{d^2\sqrt{n}\log^6 n}$  & $\bigO{\sqrt{n}\log n}$\cite{BerenbrinkEMA14} & - & - \\\hline
\end{tabular}
}
\caption{Summary of our results and comparison to related work}
\tablab{table:results}
\end{table*}
\section{Preliminaries}\seclab{sec:prelim}
We denote by $[n]$ the set $\{1, 2, \ldots, n\}$. 
We assume an input stream of length $n$ over alphabet $\Sigma$. 
Given a string $S[1,\dots,n]$, we denote its length by $|S|$, its $i\th$ character by $S[i]$ or $S_i$, and the substring between locations $i$ and $j$ (inclusive) by $S[i,j]$.  

The Hamming distance between $S$ and $T$, denoted $\HAM(S,T)$, is the number of indices whose symbols do not match: $\HAM(S,T)=\Big|\{ i\mid S[i]\ne T[i]\}\Big|$.
We denote the concatenation of $S$ and $T$ by $S\circ T$. 
Each index $i$ such that $S[i]\neq S[n-i+1]$ is a \emph{mismatch}.
We say $S$ is a $d$-\emph{near-palindrome} if $\HAM(S,S^R)\le d$. 
Without loss of generality, our algorithms assume the lengths of $d$-near-palindromes are even, since for any odd length $d$-near-palindrome, we may apply the algorithm to $S[1]S[1]S[2]S[2]\cdots S[n]S[n]$ instead of $S[1,n]$.

\begin{definition}[Karp-Rabin Fingerprint]
For a string $S$, prime $P$ and integer $B$ with $1\le B<P$, the Karp-Rabin forward and reverse fingerprints \cite{KarpR87} are defined as follows: 
\begin{align*}
\phi^F(S)=\left(\sum_{x=1}^{|S|}S[x]\cdot B^x\right)\bmod{P},\qquad
\phi^R(S)=\left(\sum_{x=1}^{|S|}S[x]\cdot B^{-x}\right)\bmod{P}.
\end{align*}
\end{definition}

\noindent  Karp-Rabin Fingerprints have the following easily verifiable properties:
\begin{align*}
1. \quad & \phi^R(S)\cdot B^{|S|+1}=\phi(S^R)\bmod{P} & \text{(reversal)} \\
2. \quad & \phi^F(S[x,y])=B^{1-x}(\phi^F(S[1,y])-\phi^F(S[1,x-1]))\bmod{P} & \text{(sliding)} \\
3. \quad & \phi^R(S[x,y])=B^{x-1}(\phi^R(S[1,y])-\phi^R(S[1,x-1]))\bmod{P} & \text{(sliding)}
\end{align*}
We use Karp-Rabin Fingerprints for certain subpatterns of $S$, as in \cite{CliffordFPSS16}. 
For a string $S$ and integers $a\le b$, define the \emph{first-level subpattern} $S_{a,b}$ to be the subsequence $S[a]S[a+b]S[a+2b]\ldots$. 
In this case, define $S^R_{a,b}=(S_{a,b})^R$ (as opposed to $(S^R)_{a,b}$). 
Similarly, define $S_{a,b}[x,y]=S_{a,b}\cap S[x,y]$ (as opposed to $(S[x,y])_{a,b}$).
Then for $1\le a\le b$, define the fingerprints for $S_{a,b}$ and its reverse:
\begin{align*}
\phi^F_{a,b}(S)=\phi^F(S_{a,b})&=\left(\sum_{x\equiv a\bmod{b}}S[x]\cdot B^{\ceil{x/b}}\right)\bmod{P}\\
\phi^R_{a,b}(S)=\phi^R(S_{a,b})&=\left(\sum_{x\equiv a\bmod{b}}S[x]\cdot B^{-\ceil{x/b}}\right)\bmod{P}
\end{align*}
For an example, see \figref{fig:subpattern}.

\begin{figure*}[htb]
\centering
\begin{tikzpicture}[scale=0.6]
\draw (1cm,0cm) -- (10cm,0cm);
\draw (1cm,1.2cm) -- (10cm,1.2cm);
\filldraw[thick, top color=white,bottom color=red!50!] (1cm,0cm) rectangle+(0.6cm,1.2cm);
\filldraw[thick, top color=white,bottom color=blue!50!] (1.6cm,0cm) rectangle+(0.6cm,1.2cm);
\filldraw[thick, top color=white,bottom color=green!50!] (2.2cm,0cm) rectangle+(0.6cm,1.2cm);
\filldraw[thick, top color=white,bottom color=red!50!] (2.8cm,0cm) rectangle+(0.6cm,1.2cm);
\filldraw[thick, top color=white,bottom color=blue!50!] (3.4cm,0cm) rectangle+(0.6cm,1.2cm);
\filldraw[thick, top color=white,bottom color=green!50!] (4cm,0cm) rectangle+(0.6cm,1.2cm);
\filldraw[thick, top color=white,bottom color=red!50!] (4.6cm,0cm) rectangle+(0.6cm,1.2cm);
\filldraw[thick, top color=white,bottom color=blue!50!] (5.2cm,0cm) rectangle+(0.6cm,1.2cm);
\filldraw[thick, top color=white,bottom color=green!50!] (5.8cm,0cm) rectangle+(0.6cm,1.2cm);
\filldraw[thick, top color=white,bottom color=red!50!] (6.4cm,0cm) rectangle+(0.6cm,1.2cm);
\filldraw[thick, top color=white,bottom color=blue!50!] (7cm,0cm) rectangle+(0.6cm,1.2cm);
\filldraw[thick, top color=white,bottom color=green!50!] (7.6cm,0cm) rectangle+(0.6cm,1.2cm);
\filldraw[thick, top color=white,bottom color=red!50!] (8.2cm,0cm) rectangle+(0.6cm,1.2cm);
\filldraw[thick, top color=white,bottom color=blue!50!] (8.8cm,0cm) rectangle+(0.6cm,1.2cm);
\filldraw[thick, top color=white,bottom color=green!50!] (9.4cm,0cm) rectangle+(0.6cm,1.2cm);
\node at (0.4cm, 0.6cm) {$S$:};

\filldraw[thick, top color=white,bottom color=red!50!] (-2cm,-3cm) rectangle+(0.6cm,1.2cm);
\filldraw[thick, top color=white,bottom color=red!50!] (-1.4cm,-3cm) rectangle+(0.6cm,1.2cm);
\filldraw[thick, top color=white,bottom color=red!50!] (-0.8cm,-3cm) rectangle+(0.6cm,1.2cm);
\filldraw[thick, top color=white,bottom color=red!50!] (-0.2cm,-3cm) rectangle+(0.6cm,1.2cm);
\filldraw[thick, top color=white,bottom color=red!50!] (0.4cm,-3cm) rectangle+(0.6cm,1.2cm);
\draw (1.3cm, -0.1cm) --(-1.7cm, -1.7cm);
\draw (3.1cm, -0.1cm) --(-1.1cm, -1.7cm);
\draw (4.9cm, -0.1cm) --(-0.5cm, -1.7cm);
\draw (6.7cm, -0.1cm) --(0.1cm, -1.7cm);
\draw (8.3cm, -0.1cm) --(0.7cm, -1.7cm);
\node at (-3.2cm, -2.4cm) {$\phi^F_{1,3}(S)$:};

\filldraw[thick, top color=white,bottom color=blue!50!] (3.6cm,-3cm) rectangle+(0.6cm,1.2cm);
\filldraw[thick, top color=white,bottom color=blue!50!] (4.2cm,-3cm) rectangle+(0.6cm,1.2cm);
\filldraw[thick, top color=white,bottom color=blue!50!] (4.8cm,-3cm) rectangle+(0.6cm,1.2cm);
\filldraw[thick, top color=white,bottom color=blue!50!] (5.4cm,-3cm) rectangle+(0.6cm,1.2cm);
\filldraw[thick, top color=white,bottom color=blue!50!] (6cm,-3cm) rectangle+(0.6cm,1.2cm);
\draw (1.9cm, -0.1cm) --(3.9cm, -1.7cm);
\draw (3.7cm, -0.1cm) --(4.5cm, -1.7cm);
\draw (5.5cm, -0.1cm) --(5.1cm, -1.7cm);
\draw (7.3cm, -0.1cm) --(5.7cm, -1.7cm);
\draw (8.9cm, -0.1cm) --(6.3cm, -1.7cm);
\node at (2.4cm, -2.4cm) {$\phi^F_{2,3}(S)$:};

\filldraw[thick, top color=white,bottom color=green!50!] (9.2cm,-3cm) rectangle+(0.6cm,1.2cm);
\filldraw[thick, top color=white,bottom color=green!50!] (9.8cm,-3cm) rectangle+(0.6cm,1.2cm);
\filldraw[thick, top color=white,bottom color=green!50!] (10.4cm,-3cm) rectangle+(0.6cm,1.2cm);
\filldraw[thick, top color=white,bottom color=green!50!] (11cm,-3cm) rectangle+(0.6cm,1.2cm);
\filldraw[thick, top color=white,bottom color=green!50!] (11.6cm,-3cm) rectangle+(0.6cm,1.2cm);
\draw (2.5cm, -0.1cm) --(9.5cm, -1.7cm);
\draw (4.3cm, -0.1cm) --(10.1cm, -1.7cm);
\draw (6.1cm, -0.1cm) --(10.7cm, -1.7cm);
\draw (7.9cm, -0.1cm) --(11.3cm, -1.7cm);
\draw (9.5cm, -0.1cm) --(11.9cm, -1.7cm);
\node at (8.0cm, -2.4cm) {$\phi^F_{3,3}(S)$:};
\end{tikzpicture}
\caption{Karp-Rabin Fingerprints for first-level subpattern.}\figlab{fig:subpattern}
\end{figure*}
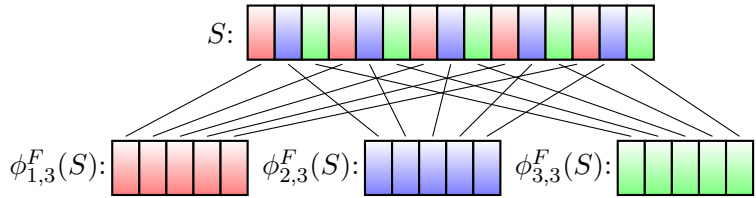

Given a first-level subpattern $T=S_{a,b}=S[a]S[a+b]S[a+2b]\ldots$ and integers $r\le s$, define the \emph{second-level subpattern} $T_{r,s}=T[r]T[r+s]T[r+2s]\ldots$. 
Observe that $T_{r,s}=S_{a+rb,sb}$ and thus, second-level subpatterns are simply more refined first-level subpatterns. 
For an example, see \figref{fig:2nd:subpattern}.

Observe the following properties of fingerprints on first-level and second-level subpatterns: 
\begin{align*}
1. \qquad & \phi_{a,b}^R(S)\cdot B^{|S|+1}=\phi_{|S|-a+1,b}(S^R)\bmod{P} & \text{(reversal)} \\
2. \qquad & \phi_{a,b}^F(S[x,y])=B^{\ceil{(1-x)/b}}(\phi_{a,b}^F(S[1,y])-\phi_{a,b}^F(S[1,x-1]))\bmod{P} & \text{(sliding)} \\
3. \qquad & \phi_{a,b}^R(S[x,y])=B^{\ceil{(x-1)/b}}(\phi_{a,b}^R(S[1,y])-\phi_{a,b}^R(S[1,x-1]))\bmod{P} & \text{(sliding)}
\end{align*}

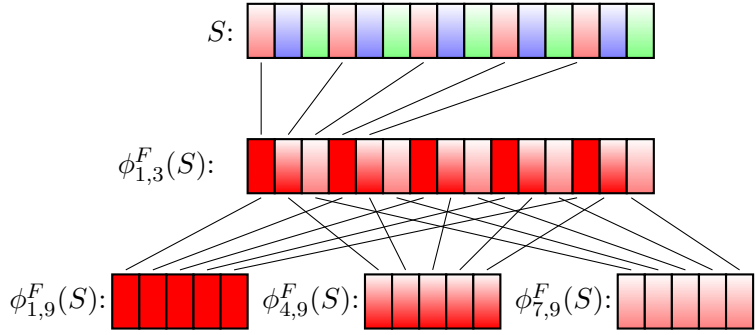
\begin{figure*}[htb]
\centering
\begin{tikzpicture}[scale=0.6]
\draw (1cm,0cm) -- (10cm,0cm);
\draw (1cm,1.2cm) -- (10cm,1.2cm);
\filldraw[thick, top color=white,bottom color=red!50!] (1cm,0cm) rectangle+(0.6cm,1.2cm);
\filldraw[thick, top color=white,bottom color=blue!50!] (1.6cm,0cm) rectangle+(0.6cm,1.2cm);
\filldraw[thick, top color=white,bottom color=green!50!] (2.2cm,0cm) rectangle+(0.6cm,1.2cm);
\filldraw[thick, top color=white,bottom color=red!50!] (2.8cm,0cm) rectangle+(0.6cm,1.2cm);
\filldraw[thick, top color=white,bottom color=blue!50!] (3.4cm,0cm) rectangle+(0.6cm,1.2cm);
\filldraw[thick, top color=white,bottom color=green!50!] (4cm,0cm) rectangle+(0.6cm,1.2cm);
\filldraw[thick, top color=white,bottom color=red!50!] (4.6cm,0cm) rectangle+(0.6cm,1.2cm);
\filldraw[thick, top color=white,bottom color=blue!50!] (5.2cm,0cm) rectangle+(0.6cm,1.2cm);
\filldraw[thick, top color=white,bottom color=green!50!] (5.8cm,0cm) rectangle+(0.6cm,1.2cm);
\filldraw[thick, top color=white,bottom color=red!50!] (6.4cm,0cm) rectangle+(0.6cm,1.2cm);
\filldraw[thick, top color=white,bottom color=blue!50!] (7cm,0cm) rectangle+(0.6cm,1.2cm);
\filldraw[thick, top color=white,bottom color=green!50!] (7.6cm,0cm) rectangle+(0.6cm,1.2cm);
\filldraw[thick, top color=white,bottom color=red!50!] (8.2cm,0cm) rectangle+(0.6cm,1.2cm);
\filldraw[thick, top color=white,bottom color=blue!50!] (8.8cm,0cm) rectangle+(0.6cm,1.2cm);
\filldraw[thick, top color=white,bottom color=green!50!] (9.4cm,0cm) rectangle+(0.6cm,1.2cm);
\node at (0.4cm, 0.6cm) {$S$:};

\filldraw[thick, top color=red,bottom color=red] (1cm,-3cm) rectangle+(0.6cm,1.2cm);
\filldraw[thick, top color=white,bottom color=red!100!] (1.6cm,-3cm) rectangle+(0.6cm,1.2cm);
\filldraw[thick, top color=white,bottom color=red!50!] (2.2cm,-3cm) rectangle+(0.6cm,1.2cm);
\filldraw[thick, top color=red,bottom color=red] (2.8cm,-3cm) rectangle+(0.6cm,1.2cm);
\filldraw[thick, top color=white,bottom color=red!100!] (3.4cm,-3cm) rectangle+(0.6cm,1.2cm);
\filldraw[thick, top color=white,bottom color=red!50!] (4cm,-3cm) rectangle+(0.6cm,1.2cm);
\filldraw[thick, top color=red,bottom color=red] (4.6cm,-3cm) rectangle+(0.6cm,1.2cm);
\filldraw[thick, top color=white,bottom color=red!100!] (5.2cm,-3cm) rectangle+(0.6cm,1.2cm);
\filldraw[thick, top color=white,bottom color=red!50!] (5.8cm,-3cm) rectangle+(0.6cm,1.2cm);
\filldraw[thick, top color=red,bottom color=red] (6.4cm,-3cm) rectangle+(0.6cm,1.2cm);
\filldraw[thick, top color=white,bottom color=red!100!] (7cm,-3cm) rectangle+(0.6cm,1.2cm);
\filldraw[thick, top color=white,bottom color=red!50!] (7.6cm,-3cm) rectangle+(0.6cm,1.2cm);
\filldraw[thick, top color=red,bottom color=red] (8.2cm,-3cm) rectangle+(0.6cm,1.2cm);
\filldraw[thick, top color=white,bottom color=red!100!] (8.8cm,-3cm) rectangle+(0.6cm,1.2cm);
\filldraw[thick, top color=white,bottom color=red!50!] (9.4cm,-3cm) rectangle+(0.6cm,1.2cm);
\draw (1.3cm, -0.1cm) --(1.3cm, -1.7cm);
\draw (3.1cm, -0.1cm) --(1.9cm, -1.7cm);
\draw (4.9cm, -0.1cm) --(2.5cm, -1.7cm);
\draw (6.7cm, -0.1cm) --(3.1cm, -1.7cm);
\draw (8.3cm, -0.1cm) --(3.7cm, -1.7cm);
\node at (-0.8cm, -2.4cm) {$\phi^F_{1,3}(S)$:};

\filldraw[thick, top color=red,bottom color=red] (-2cm,-6cm) rectangle+(0.6cm,1.2cm);
\filldraw[thick, top color=red,bottom color=red] (-1.4cm,-6cm) rectangle+(0.6cm,1.2cm);
\filldraw[thick, top color=red,bottom color=red] (-0.8cm,-6cm) rectangle+(0.6cm,1.2cm);
\filldraw[thick, top color=red,bottom color=red] (-0.2cm,-6cm) rectangle+(0.6cm,1.2cm);
\filldraw[thick, top color=red,bottom color=red] (0.4cm,-6cm) rectangle+(0.6cm,1.2cm);
\draw (1.3cm, -3.1cm) --(-1.7cm, -4.7cm);
\draw (3.1cm, -3.1cm) --(-1.1cm, -4.7cm);
\draw (4.9cm, -3.1cm) --(-0.5cm, -4.7cm);
\draw (6.7cm, -3.1cm) --(0.1cm, -4.7cm);
\draw (8.3cm, -3.1cm) --(0.7cm, -4.7cm);
\node at (-3.2cm, -5.4cm) {$\phi^F_{1,9}(S)$:};

\filldraw[thick, top color=white,bottom color=red!100!] (3.6cm,-6cm) rectangle+(0.6cm,1.2cm);
\filldraw[thick, top color=white,bottom color=red!100!] (4.2cm,-6cm) rectangle+(0.6cm,1.2cm);
\filldraw[thick, top color=white,bottom color=red!100!] (4.8cm,-6cm) rectangle+(0.6cm,1.2cm);
\filldraw[thick, top color=white,bottom color=red!100!] (5.4cm,-6cm) rectangle+(0.6cm,1.2cm);
\filldraw[thick, top color=white,bottom color=red!100!] (6cm,-6cm) rectangle+(0.6cm,1.2cm);
\draw (1.9cm, -3.1cm) --(3.9cm, -4.7cm);
\draw (3.7cm, -3.1cm) --(4.5cm, -4.7cm);
\draw (5.5cm, -3.1cm) --(5.1cm, -4.7cm);
\draw (7.3cm, -3.1cm) --(5.7cm, -4.7cm);
\draw (8.9cm, -3.1cm) --(6.3cm, -4.7cm);
\node at (2.4cm, -5.4cm) {$\phi^F_{4,9}(S)$:};

\filldraw[thick, top color=white,bottom color=red!50!] (9.2cm,-6cm) rectangle+(0.6cm,1.2cm);
\filldraw[thick, top color=white,bottom color=red!50!] (9.8cm,-6cm) rectangle+(0.6cm,1.2cm);
\filldraw[thick, top color=white,bottom color=red!50!] (10.4cm,-6cm) rectangle+(0.6cm,1.2cm);
\filldraw[thick, top color=white,bottom color=red!50!] (11cm,-6cm) rectangle+(0.6cm,1.2cm);
\filldraw[thick, top color=white,bottom color=red!50!] (11.6cm,-6cm) rectangle+(0.6cm,1.2cm);
\draw (2.5cm, -3.1cm) --(9.5cm, -4.7cm);
\draw (4.3cm, -3.1cm) --(10.1cm, -4.7cm);
\draw (6.1cm, -3.1cm) --(10.7cm, -4.7cm);
\draw (7.9cm, -3.1cm) --(11.3cm, -4.7cm);
\draw (9.5cm, -3.1cm) --(11.9cm, -4.7cm);
\node at (8.0cm, -5.4cm) {$\phi^F_{7,9}(S)$:};

\end{tikzpicture}
\caption{Karp-Rabin Fingerprints for second-level subpattern.}\figlab{fig:2nd:subpattern}
\end{figure*}
We also use the following application of the Prime Number Theorem from  \cite{CliffordFPSS16}:

\newcommand{\lemhash}{Given two distinct integers $a,b\in[n]$ and a random prime number $p\in\left[\frac{d}{\beta}\log^2 n,\frac{34d}{\beta}\log^2 n\right]$ where $\beta=\frac{1}{16}$, then $\PPr{a\equiv b\bmod{p}}\le\frac{\beta}{32d}$.}
\begin{lemma}(Adaptation of Lemma 4.1 \cite{CliffordFPSS16})\lemlab{lem:hash}
\lemhash
\end{lemma}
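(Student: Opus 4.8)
The plan is to reduce the congruence $a\equiv b\bmod p$ to a divisibility condition and then bound the collision probability by comparing the number of ``bad'' primes (those forcing a collision) against the total number of primes available in the sampling interval. First I would observe that $a\equiv b\bmod p$ holds if and only if $p$ divides $c:=|a-b|$. Since $a\neq b$ and $a,b\in[n]$, we have $1\le c\le n-1$, so $c$ is a positive integer strictly less than $n$. The key combinatorial fact is that $c$ has few prime divisors: each prime dividing $c$ is at least $2$ and their product divides $c$, so $c$ has at most $\log_2 c<\log_2 n$ distinct prime factors. Consequently, at most $\log_2 n$ of the primes in the interval $I=\left[\frac{d}{\beta}\log^2 n,\frac{34d}{\beta}\log^2 n\right]$ can divide $c$, i.e.\ can trigger the collision. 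Since $p$ is drawn uniformly from the primes in $I$, writing $N$ for the number of such primes we obtain
\[
\PPr{a\equiv b\bmod{p}}\le\frac{\log_2 n}{N},
\]
and it therefore suffices to prove $N\ge\frac{32d\log_2 n}{\beta}$.

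The remaining work, which I expect to be the main obstacle, is to lower bound $N$ sharply enough with the \emph{exact} interval and constant $\beta=\frac{1}{16}$ fixed in the statement. Writing $x=\frac{d}{\beta}\log^2 n$, the interval is $[x,34x]$, and I would invoke an effective form of the Prime Number Theorem, such as the Rosser--Schoenfeld estimates $\frac{x}{\ln x}<\pi(x)<\frac{1.26\,x}{\ln x}$ valid for suitable $x$, to control $N=\pi(34x)-\pi(x)$. The generous multiplicative gap of $34$ between the endpoints is chosen precisely so that the leading term $\frac{34x}{\ln(34x)}$ dominates the subtracted quantity $\pi(x)$ with room to spare, yielding $N=\Omega\!\left(\frac{x}{\ln x}\right)$.

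It then remains to check the quantitative target. Here $\ln x=\Theta(\log\log n+\log d)$, and since the standing assumption $d=o(\sqrt{n})$ forces $\log d<\tfrac12\log n$, we have $\frac{x}{\ln x}=\frac{16d\log^2 n}{\Theta(\log\log n+\log d)}$, which exceeds $\frac{32d\log_2 n}{\beta}=512\,d\log_2 n$ for all sufficiently large $n$. Substituting the resulting bound on $N$ into the displayed inequality gives $\PPr{a\equiv b\bmod{p}}\le\frac{\beta}{32d}$, as required. The delicate point is purely quantitative: one must thread the constants through the effective PNT bounds rather than argue only up to unspecified constants, so that the conclusion holds with the precise range and $\beta$ of the statement; the uniformity of $p$ over the primes of $I$ is already implicit in ``a random prime number,'' so beyond counting no further care about the sampling measure is needed.
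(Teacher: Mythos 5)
Your proposal is correct and follows essentially the same route as the paper: reduce the collision to $p\mid|a-b|$, note that $|a-b|\le n-1$ has at most $\log n$ prime divisors, and lower-bound the number of primes in $\left[\frac{d}{\beta}\log^2 n,\frac{34d}{\beta}\log^2 n\right]$ by $\frac{32d}{\beta}\log n$ via the Rosser--Schoenfeld form of the Prime Number Theorem. The only cosmetic difference is in threading the constants: where you track $\ln x=\Theta(\log\log n+\log d)$ and invoke $\log d<\frac{1}{2}\log n$, the paper simply bounds the denominator by $\log\left(\frac{34d}{\beta}\log^2 n\right)\le\log n$ (valid since $d=o(\sqrt{n})$), which settles your flagged quantitative point directly.
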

\begin{proof}
By the Prime Number Theorem  (Corollary $1$ of \cite{RosserS62}) it follows that the number of primes in $\left[\frac{d}{\beta}\log^2 n,\frac{34d}{\beta}\log^2 n\right]$ is at least
\[\frac{\frac{(34-2)d}{\beta}\log^2 n}{\log\left(\frac{34d}{\beta}\log^2 n\right)}\ge\frac{\frac{32d}{\beta}\log^2 n}{\log n}\ge\frac{32d}{\beta}\log n.\]
If $a\equiv b\bmod{p}$, then $p$ is a divisor of $|a-b|$. 
Furthermore, by assumption, $p$ is prime.  
Thus, the probability that $p$ is one of the prime divisors of $|a-b|\le n-1$ is at most $\frac{\log n}{(32d/\beta)\log n}=\frac{\beta}{32d}$, since $|a-b|$ can have at most $\log n$ prime divisors.
\end{proof}

Finally, we remark that problems in bioinformatics, such as the \emph{RNA Folding Problem}~\cite{SublinearOpen61}, use the following notion of complementary palindromes:
\begin{definition}
Let $f\,:\,\sum\rightarrow\sum$ be a pairing of symbols in the alphabet. A string $S\in\sum^n$ is a complementary palindrome if $S[x]=f(S[n+1-x])$ for all $1\le x\le n$.
\end{definition}

Our algorithms can be modified to recognize complementary palindromes with the same space usage and update time.
Indeed, we only need to modify the forward fingerprints to use $f(S[x])$ instead of $S[x]$:
\[\phi^F_{a,b}(S)=\left(\sum_{x\equiv a\bmod{b}}f(S[x])\cdot B^{\ceil{x/b}}\right)\bmod{P}.\]
\section{Overview and Techniques}
\subsection*{One-pass Multiplicative Approximation Algorithm} 
Our algorithm combines and extends ideas and techniques from the solution to the $d$-Mismatch Problem in \cite{CliffordFPSS16} 
and the solution to the Longest Palindrome Problem in \cite{BerenbrinkEMA14}. 

As the stream progresses, we keep a set of checkpoints $\mathcal{C}$, where each $c\in\mathcal{C}$ is an index for which we search $d$-near-palindromes to begin.   
We also maintain a sliding window that contains the $2d$ most recently seen symbols, as shown in \figref{fig:window}.  
The sliding window identifies any $d$-near-palindrome of length at most $2d$. 
It also guesses that the midpoint of the sliding window is the midpoint of a potential $d$-near-palindrome of length  $>2d$.
\begin{figure*}[htb]
\centering
\begin{tikzpicture}[scale=0.6]
\draw (1cm,0cm) -- (19cm,0cm);
\draw (1cm,1.2cm) -- (19cm,1.2cm);

\draw (1cm,0cm) -- (1cm,1.2cm);
\draw (1.6cm,0cm) -- (1.6cm,1.2cm);
\draw (2.2cm,0cm) -- (2.2cm,1.2cm);
\draw (2.8cm,0cm) -- (2.8cm,1.2cm);
\draw (3.4cm,0cm) -- (3.4cm,1.2cm);
\draw (4cm,0cm) -- (4cm,1.2cm);
\draw (4.6cm,0cm) -- (4.6cm,1.2cm);
\draw (5.2cm,0cm) -- (5.2cm,1.2cm);
\draw (5.8cm,0cm) -- (5.8cm,1.2cm);
\draw (6.4cm,0cm) -- (6.4cm,1.2cm);
\draw (7cm,0cm) -- (7cm,1.2cm);
\draw (7.6cm,0cm) -- (7.6cm,1.2cm);
\draw (8.2cm,0cm) -- (8.2cm,1.2cm);
\draw (8.8cm,0cm) -- (8.8cm,1.2cm);
\draw (9.4cm,0cm) -- (9.4cm,1.2cm);
\draw (10cm,0cm) -- (10cm,1.2cm);
\draw (10.6cm,0cm) -- (10.6cm,1.2cm);
\draw (11.2cm,0cm) -- (11.2cm,1.2cm);
\draw (11.8cm,0cm) -- (11.8cm,1.2cm);
\draw (12.4cm,0cm) -- (12.4cm,1.2cm);
\draw (13cm,0cm) -- (13cm,1.2cm);
\draw (13.6cm,0cm) -- (13.6cm,1.2cm);
\draw (14.2cm,0cm) -- (14.2cm,1.2cm);
\draw (14.8cm,0cm) -- (14.8cm,1.2cm);
\draw (15.4cm,0cm) -- (15.4cm,1.2cm);
\draw (16cm,0cm) -- (16cm,1.2cm);
\draw (16.6cm,0cm) -- (16.6cm,1.2cm);
\draw (17.2cm,0cm) -- (17.2cm,1.2cm);
\draw (17.8cm,0cm) -- (17.8cm,1.2cm);
\draw (18.4cm,0cm) -- (18.4cm,1.2cm);
\draw (19cm,0cm) -- (19cm,1.2cm);
\draw[dashed] (7cm,-0.6cm) -- (7cm,1.8cm);
\draw[dashed] (13cm,-0.6cm) -- (13cm,1.8cm);
\draw[dashed] (7cm,-0.6cm) -- (13cm,-0.6cm);
\draw[dashed] (7cm,1.8cm) -- (13cm,1.8cm);
\node at (10cm,2.2cm){sliding window $2d$};
\end{tikzpicture}
\caption{We maintain a sliding window of size $2d$ to identify $d$-near-palindromes of small length.}\figlab{fig:window}
\end{figure*}
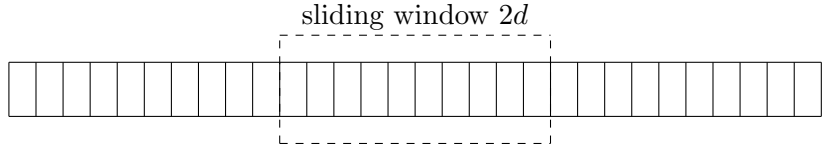
We keep an estimate $\tilde{\ell}$ of the length  $\ell_{max}$ of the longest $d$-near-palindrome seen throughout the stream, as well as its starting index $c_{start}$, and the locations of the mismatches, a set of size at most $d$. 
Upon reading symbol $S[x]$ of the stream, we call procedure $\check$ to see if $S[c_i,x]$ is a $d$-near-palindrome, for each checkpoint $c_i$ such that $x-c_i>\tilde{\ell}$, as in \figref{fig:check}. 
\begin{figure*}[htb]
\centering
\begin{tikzpicture}[scale=0.6]
\draw (1cm,0cm) -- (19cm,0cm);
\draw (1cm,1.2cm) -- (19cm,1.2cm);

\draw (1cm,0cm) -- (1cm,1.2cm);
\draw (1.6cm,0cm) -- (1.6cm,1.2cm);
\draw (2.2cm,0cm) -- (2.2cm,1.2cm);
\draw (2.8cm,0cm) -- (2.8cm,1.2cm);
\draw (3.4cm,0cm) -- (3.4cm,1.2cm);
\draw (4cm,0cm) -- (4cm,1.2cm);
\draw (4.6cm,0cm) -- (4.6cm,1.2cm);
\draw (5.2cm,0cm) -- (5.2cm,1.2cm);
\draw (5.8cm,0cm) -- (5.8cm,1.2cm);
\draw (6.4cm,0cm) -- (6.4cm,1.2cm);
\draw (7cm,0cm) -- (7cm,1.2cm);
\draw (7.6cm,0cm) -- (7.6cm,1.2cm);
\draw (8.2cm,0cm) -- (8.2cm,1.2cm);
\draw (8.8cm,0cm) -- (8.8cm,1.2cm);
\draw (9.4cm,0cm) -- (9.4cm,1.2cm);
\draw (10cm,0cm) -- (10cm,1.2cm);
\draw (10.6cm,0cm) -- (10.6cm,1.2cm);
\draw (11.2cm,0cm) -- (11.2cm,1.2cm);
\draw (11.8cm,0cm) -- (11.8cm,1.2cm);
\draw (12.4cm,0cm) -- (12.4cm,1.2cm);
\draw (13cm,0cm) -- (13cm,1.2cm);
\draw (13.6cm,0cm) -- (13.6cm,1.2cm);
\draw (14.2cm,0cm) -- (14.2cm,1.2cm);
\draw (14.8cm,0cm) -- (14.8cm,1.2cm);
\draw (15.4cm,0cm) -- (15.4cm,1.2cm);
\draw (16cm,0cm) -- (16cm,1.2cm);
\draw (16.6cm,0cm) -- (16.6cm,1.2cm);
\draw (17.2cm,0cm) -- (17.2cm,1.2cm);
\draw (17.8cm,0cm) -- (17.8cm,1.2cm);
\draw (18.4cm,0cm) -- (18.4cm,1.2cm);
\draw (19cm,0cm) -- (19cm,1.2cm);

\draw (4cm,-0.3cm) -- (4cm,-0.7cm);
\node at (4cm, -1.2cm){$c_1$};
\draw [decorate,decoration={brace,mirror,amplitude=10pt}](16cm,2.6cm) -- (4cm,2.6cm);
\draw [dashed] (4cm,1.2cm) -- (4cm, 2.6cm);
\node at (10cm, 3.6cm){$\check(c_1,x)$};
\draw [decorate,decoration={brace,mirror,amplitude=10pt}](16cm,1.2cm) -- (8.8cm,1.2cm);
\draw (8.8cm,-0.3cm) -- (8.8cm,-0.7cm);
\node at (8.8cm, -1.2cm){$c_2$};
\node at (12.7cm, 2.2cm){\tiny{$\check(c_2,x)$}};
\draw (16cm,-0.3cm) -- (16cm,-0.7cm);
\node at (16cm, -1.2cm){$x$};
\end{tikzpicture}
\caption{Upon reading $S[x]$, we check each checkpoint $c_i$ whether $S[c_i,x]$ is a $d$-near-palindrome.}\figlab{fig:check}
\end{figure*}
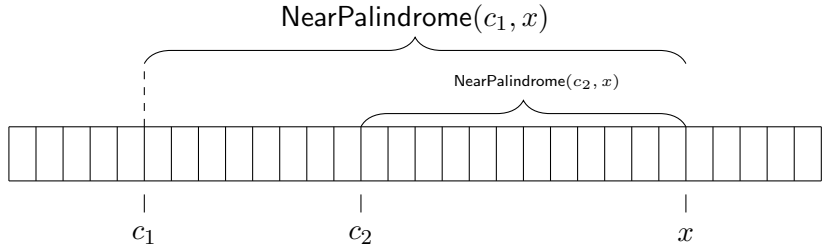
Using the framework of \cite{BerenbrinkEMA14}, we create and update checkpoints throughout the stream so that we find a $d$-near-palindrome of length at least $\frac{\ell_{max}}{1+\eps}$, as in \figref{fig:sandwich}.
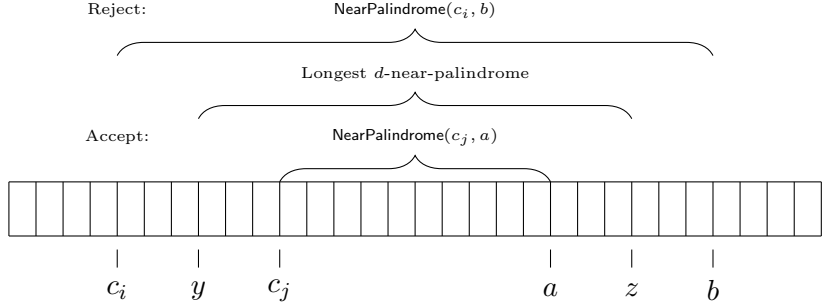
\begin{figure*}[htb]
\centering
\begin{tikzpicture}[scale=0.6]
\draw (1cm,0cm) -- (19cm,0cm);
\draw (1cm,1.2cm) -- (19cm,1.2cm);

\draw (1cm,0cm) -- (1cm,1.2cm);
\draw (1.6cm,0cm) -- (1.6cm,1.2cm);
\draw (2.2cm,0cm) -- (2.2cm,1.2cm);
\draw (2.8cm,0cm) -- (2.8cm,1.2cm);
\draw (3.4cm,0cm) -- (3.4cm,1.2cm);
\draw (4cm,0cm) -- (4cm,1.2cm);
\draw (4.6cm,0cm) -- (4.6cm,1.2cm);
\draw (5.2cm,0cm) -- (5.2cm,1.2cm);
\draw (5.8cm,0cm) -- (5.8cm,1.2cm);
\draw (6.4cm,0cm) -- (6.4cm,1.2cm);
\draw (7cm,0cm) -- (7cm,1.2cm);
\draw (7.6cm,0cm) -- (7.6cm,1.2cm);
\draw (8.2cm,0cm) -- (8.2cm,1.2cm);
\draw (8.8cm,0cm) -- (8.8cm,1.2cm);
\draw (9.4cm,0cm) -- (9.4cm,1.2cm);
\draw (10cm,0cm) -- (10cm,1.2cm);
\draw (10.6cm,0cm) -- (10.6cm,1.2cm);
\draw (11.2cm,0cm) -- (11.2cm,1.2cm);
\draw (11.8cm,0cm) -- (11.8cm,1.2cm);
\draw (12.4cm,0cm) -- (12.4cm,1.2cm);
\draw (13cm,0cm) -- (13cm,1.2cm);
\draw (13.6cm,0cm) -- (13.6cm,1.2cm);
\draw (14.2cm,0cm) -- (14.2cm,1.2cm);
\draw (14.8cm,0cm) -- (14.8cm,1.2cm);
\draw (15.4cm,0cm) -- (15.4cm,1.2cm);
\draw (16cm,0cm) -- (16cm,1.2cm);
\draw (16.6cm,0cm) -- (16.6cm,1.2cm);
\draw (17.2cm,0cm) -- (17.2cm,1.2cm);
\draw (17.8cm,0cm) -- (17.8cm,1.2cm);
\draw (18.4cm,0cm) -- (18.4cm,1.2cm);
\draw (19cm,0cm) -- (19cm,1.2cm);

\draw [decorate,decoration={brace,mirror,amplitude=10pt}](16.6cm,4cm) -- (3.4cm,4cm);
\draw (3.4cm,-0.3cm) -- (3.4cm,-0.7cm);
\node at (3.4cm, -1.2cm){$c_i$};
\draw (16.6cm,-0.3cm) -- (16.6cm,-0.7cm);
\node at (16.6cm, -1.2cm){$b$};
\node at (10cm, 5cm){\tiny{$\check(c_i,b)$}};
\draw [decorate,decoration={brace,mirror,amplitude=10pt}](14.8cm,2.6cm) -- (5.2cm,2.6cm);
\node at (10cm, 3.6cm){\tiny{Longest $d$-near-palindrome}};
\draw (5.2cm,-0.3cm) -- (5.2cm,-0.7cm);
\node at (5.2cm, -1.2cm){$y$};
\draw (14.8cm,-0.3cm) -- (14.8cm,-0.7cm);
\node at (14.8cm, -1.2cm){$z$};
\draw [decorate,decoration={brace,mirror,amplitude=10pt}](13cm,1.2cm) -- (7cm,1.2cm);
\draw (7cm,-0.3cm) -- (7cm,-0.7cm);
\node at (7cm, -1.2cm){$c_j$};
\node at (10cm, 2.2cm){\tiny{$\check(c_j,a)$}};
\draw (13cm,-0.3cm) -- (13cm,-0.7cm);
\node at (13cm, -1.2cm){$a$};
\node at (3.4cm, 2.2cm){\tiny{Accept:}};
\node at (3.4cm, 5cm){\tiny{Reject:}};
\end{tikzpicture}
\caption{The longest $d$-near-palindrome will be sandwiched within checkpoints to provide a $(1+\eps)$-approximation of $\ell_{max}$. That is, $(1+\eps)(a-c_j)\ge(z-y)$.}\figlab{fig:sandwich}
\end{figure*}

The algorithm in \cite{BerenbrinkEMA14} also maintains a list of potential midpoints associated with each checkpoint. 
Although this list can be linear in size, it satisfies nice structural results that can be used to succinctly represent the list of candidate midpoints. 
However, directly adapting these structural results to our setting would incur an extra factor of $d$ in our space complexity.
We avoid this extra factor by circumventing the list of candidate midpoints in the one-pass algorithms altogether.

We now overview the procedure $\check$ that we use repeatedly in our algorithms. 
The procedure returns whether $S[c_i,x]$ is a $d$-near-palindrome, and if so, it returns the corresponding mismatches.

The procedure $\check$ adapts the data structures outlined in \cite{CliffordFPSS16}.
Recall that in the $d$-Mismatch Problem, we are given a pattern $R$ and a text $S$ and the algorithm is required to output all indices $x$ such that 
$\HAM(R, S[x, x+|R|-1])\leq d$. 
While the pattern is fixed in the $d$-Mismatch Problem, here we essentially use variable-length patterns. 
Namely, we check whether $\HAM(S[c_i, x], S^R[c_i, x])\leq d$ for each checkpoint $c_i$ by maintaining dynamic sets of fingerprints. 

The procedure has two stages. 
In the first stage it eliminates strings $T$ with $\HAM(T, T^R)\ge 2d$, while in the second stage it eliminates strings with
 $d<\HAM(T, T^R)< 2d$.
This can be achieved by estimating the distance between $T$ and $T^R$ using fingerprints of equivalence classes modulo different primes.
 
Intuitively, picking random primes distributes the mismatches into different equivalence classes. 
For each prime $p$, the procedure estimates the number of mismatches by comparing the fingerprints of the substrings whose indices are in the same congruence class modulo $p$ with the reverse fingerprints, namely $T_{r,p}$ and $T^R_{r,p}$ for all $1\le r\le p$. Denote by $T_{r,p}$ and $T^R_{r,p}$ the \emph{first-level fingerprints}.

By the second stage we are only left with the strings with a small number of mismatches. In order to recover the mismatches, one needs to  refine each subpattern $\tilde{T}=T_{r,p}$  by picking smaller primes $p'$, and comparing the fingerprints of the strings $\tilde{T}_{r',p'}$ and $\tilde{T}^R_{r',p'}$ for all $1\le r'\le p'$. Denote by $\tilde{T}_{r',p'}$ and $\tilde{T}^R_{r',p'}$ the \emph{second-level fingerprints} (see \figref{fig:2nd:subpattern}).

In the first stage, we sample $2\log n$ primes uniformly at random from $\left[\frac{d}{\beta}\log^2 n,\frac{34d}{\beta}\log^2 n\right]$, where $\beta=1/16$. 
Each prime generates $p$ subpatterns containing positions in the same equivalence class$\pmod{p}$.
Therefore, there are $\bigO{d\,\log^3 n}$ first-level subpatterns. 
In the second stage, we take all primes in $[\log n, 3\log n]$ that together with the primes picked in the first stage generate a total of 
$\bigO{d\,\log^5 n}$ second-level subpatterns.

Finally, we assume throughout the paper that the fingerprints of any subpattern do not fail. 
Since there are at most $n^3$ subpatterns, and the probability that a particular fingerprint fails is at most $\frac{1}{n^5}$ for $P\in[n^5,n^6]$ (by Theorem 1 in \cite{BreslauerG14}), then by a union bound, the probability that no fingerprint fails is least $1-\frac{1}{n^2}$.

Our choice of parameters is more space-efficient compared to the data structure given by \cite{CliffordFPSS16}, which uses $\bigO{d^2\log^7 n}$ space, since we no longer need the starting index to slide. 

We also note that \cite{PoratL07} gives another data structure for determining the Hamming distance between two strings. 
That data structure is more space efficient than the data structure above given by \cite{CliffordFPSS16}, but seemingly does not suffice for our problem, as it does not support concatenation, which is needed for maintaining the checkpoints. 

\subsection*{One-pass Additive Approximation and Two-Pass Exact Algorithms}
To obtain the one-pass additive approximation, we modify our checkpoints, so that they appear in every $\flr{\frac{E}{2}}$ positions. 
Hence, the longest $d$-near-palindrome must have some checkpoint within $\flr{\frac{E}{2}}$ positions of it, and the algorithm will recover a $d$-near-palindrome with length at least $\ell_{max}-E$.

To obtain the two-pass {\em exact} algorithm, we set $E=\sqrt{n}$ and modify the additive error algorithm so that it returns a list $\mathcal{L}$ of candidate midpoints of $d$-near-palindromes. 
Moreover, we show a structural result in \lemref{lem:periodic}, which allows us to compress certain substrings in the first pass, so that the second pass can recover mismatches for any potential $d$-near-palindromes within these substrings.

In the second pass, we carefully keep track of the $\frac{\sqrt{n}}{2}$ characters before the starting positions of long $d$-near-palindromes identified in the first pass. 
We use the compressed information from the first pass to reconstruct the fingerprints and calculate the number of mismatches within these long $d$-near-palindromes identified in the first pass. 
However, the actual $d$-near-palindromes may extend beyond the estimate returned in the first pass. 
Thus, we compare the $\frac{\sqrt{n}}{2}$ characters after the $d$-near-palindromes identified in the first pass with the $\frac{\sqrt{n}}{2}$ characters that we track. 
This allows us to exactly identify the longest $d$-near-palindrome during the second pass.

\subsection*{Lower Bounds} 
To show lower bounds for randomized algorithms solving the $d$-near palindrome problem we use Yao's Principle \cite{Yao77}, and construct distributions for which any deterministic algorithm fails with significant probability unless given a certain amount of space. 
We first show that providing a $(1+\eps)$ approximation to the length of longest $d$-near-palindromes inheritly solves the problem of exactly identifying whether two strings have Hamming distance at most $d$. 
This problem has been useful in proving other related lower bounds \cite{ErgunGSZ17, GrigorescuSZ17} and may be of independent interest. 
We carefully construct hard distributions for this problem, using ideas from \cite{GawrychowskiMSU16}, and show via counting arguments that deterministic algorithms using a little of space will fail with significant probability on inputs from these distributions. 
\section{One-Pass Streaming Algorithm with Multiplicative Error $(1+\eps)$}
In this section, we prove \thmref{thm:ham}. 
Namely, we provide a one-pass streaming algorithm with multiplicative error $(1+\eps)$, using  $\bigO{\frac{d\log^7 n}{\eps\log(1+\eps)}}$ bits of space.
\subsection{Algorithm}
\seclab{sec:1p:alg}
As described in the overview, similar to \cite{BerenbrinkEMA14}, we maintain a sliding window of size $2d$, along with master fingerprints, and a series of checkpoints.
From the sliding window, we observe every $d$-near-palindrome with length at most $2d$, as well as every candidate midpoints.
Then, prior to seeing element $S[x]$ in the stream, we initialize the following in memory:

\begin{mdframed}
Initialization:
\begin{enumerate}
\item
Pick a prime $P$ from $[n^5,n^6]$ and an integer $B<P$ (the modulo and the base of the Karp-Rabin fingerprints, respectively).
\item
For the first-level fingerprints, create set $\mathcal{P}$ consisting of $2\log n$ primes $p_1,p_2,\ldots,p_{2\log n}$ sampled independently and uniformly at random from $\left[\frac{d}{\beta}\log^2 n,\frac{34d}{\beta}\log^2 n\right]$, where $\beta=\frac{1}{16}$.
\item
For the second-level fingerprints, let $\mathcal{Q}$ be the set of primes in $[\log n,3\log n]$.
\item
Initialize a sliding window of size $2d$.
\item
Initialize the sets of \emph{Master Fingerprints}, $\mathcal{F}^F$ and $\mathcal{F}^R$:
\begin{enumerate}
\item
Set $\phi^F_{r,p}(S)=0$, $\phi^R_{r,p}(S)=0$ for all $p\in\mathcal{P}$ and $1\le r\le p$.
\item
Set $\phi^F_{r',pq}(S)=0$, $\phi^R_{r',pq}(S)=0$ for all $p\in\mathcal{P}$, $q\in\mathcal{Q}$ and $1\le r'\le pq$.
\item
Let $\mathcal{F}^F$ be the set of all $\phi^F(S)$.
\item
Let $\mathcal{F}^R$ be the set of all $\phi^R(S)$.
\end{enumerate}
\item
Set $k_0=\frac{\log(1/\alpha)}{\log(1+\alpha)}$, where $\alpha=\sqrt{1+\eps}-1$.
\item
Initialize a list of checkpoints $\mathcal{C}=\emptyset$.
\item
Set the starting index $c_{start}$ to be $1$, the length estimate $\tilde{\ell}$ of the longest $d$-near-palindrome found so far to be $0$, and the at most $d$ mismatched indices $\mathcal{M}=\emptyset$.
\end{enumerate}
\end{mdframed}

We now formalize the steps outlined in the overview. The data structure relies on the procedure \check that we describe and analyze in detail in \secref{sec:check}.

\begin{mdframed}
Maintenance:
\begin{enumerate}
\item
Read $S[x]$. Update the sliding window to $S[x-2d, x]$.
\item
Update the \emph{Master Fingerprints} to be $\mathcal{F}^F(1,x)$ and $\mathcal{F}^R(1,x)$:
\begin{enumerate}
\item
Update the first-level fingerprints: for every $p\in\mathcal{P}$, let $r\equiv x \mod p$, and increment $\phi_{r,p}^F(S)$ by $S[x]\cdot B^{\ceil{x/p}}\bmod{P}$ and increment $\phi_{r,p}^R(S)$ by $S[x]\cdot B^{-\ceil{x/p}}\bmod{P}$.
\item
Update the second-level fingerprints: for every $p\in\mathcal{P}$ and $q\in\mathcal{Q}$, let $r'\equiv x\mod pq$, and increment $\phi_{r',pq}^F(S)$ by $S[x]\cdot B^{\ceil{x/(pq)}}\bmod{P}$ and increment $\phi^R_{r',pq}(S)$ by $S[x]\cdot B^{-\ceil{x/(pq)}}\bmod{P}$.
\end{enumerate}
\item
For all $k\ge k_0$:
\begin{enumerate}
\item
If $x$ is a multiple of $\flr{\alpha(1+\alpha)^{k-2}}$, then add the checkpoint $c=x$ to $\mathcal{C}$. Set $\level{c}=k$, 
$\fingerprint{c}={\cal F}^F(1,x)\cup {\cal F}^R(1,x)$.
\item
If there exists a checkpoint $c$ with $\level{c}=k$ and $c<x-2(1+\alpha)^k$, then delete $c$ from $\mathcal{C}$.
\end{enumerate}
\item
For every checkpoint $c\in\mathcal{C}$ such that $x-c>\tilde{\ell}$, we call  $\check$ (described in \secref{sec:check}) to see if $S[c,x]$ is a $d$-near-palindrome. 
If $S[c,x]$ is a $d$-near-palindrome, then set $c_{start}=c$, $\tilde{\ell}=x-c$ and $\mathcal{M}$ to be the indices returned by $\check$.
\item
If $x=n$, then report $c_{start}$, $\tilde{\ell}$, and $\mathcal{M}$.
\end{enumerate}
\end{mdframed}
\subsection{Procedure \check and Analysis}
\seclab{sec:check}
In this section, we describe and analyze the randomized procedure $\check$ that receives as input a string, and decides whether it is a $d$-near-palindrome or not. Moreover, if the string is a $d$-near-palindrome, $\check$ returns the locations of the mismatched indices. As mentioned,  $\check$ adapts ideas  from  \cite{CliffordFPSS16}  for solving the $k$-mismatch problem.
Our proofs of the properties of $\check$ follow almost verbatim from the statements in \cite{CliffordFPSS16}, with the only difference being that we make the magnitudes of the chosen primes as large as to withstand patterns of length $\bigO{n}$. We also  use the notations from \cite{CliffordFPSS16}, which we introduce next.

Given a string $S[x,y]$, and prime $p_j$ let $\Delta_j(x,y)$ be the number of $r\in [p_j]$ such that the subpatterns $S_{r,p_j}[x,y]$ and $S^R_{r,p_j}[x,y]$ are different. 
Note that we can compute $\Delta_j(x,y)$ from the fingerprints $\mathcal{F}^F(x,y)$ and $\mathcal{F}^R(x,y)$ as the number of indices $r$ such that $\phi^F_{r,p_j}[x,y]\neq B^{k+1}\cdot\phi^R_{r,p_j}[x,y]\bmod{P}$, where $k$ is the length of $S_{r,p_j}[x,y]$. Define $\Delta(x,y)=\max_j\Delta_j(x,y)$. 
We may assume throughout that $S[x,y]$ has even length. 
Next we summarize some useful properties of $\Delta(x,y)$.

\newcommand{\lemreject}{Let $\beta=1/16$.
\begin{enumerate}
\item \label{rej-1}  If $\HAM(S[x,y], S^R[x,y])\le d$, then $\Delta(x,y)\le d$.
\item \label{rej-2} If $\HAM(S[x,y], S^R[x,y])\geq 2d$, then $\Delta(x,y)>(1+\beta)\cdot d$ with probability at least $1-\frac{1}{n^3}$.
\end{enumerate}
}
\begin{lemma}(Adaptation of Lemma 5.1 and Lemma 5.2 \cite{CliffordFPSS16})\lemlab{lem:reject}
\lemreject
\end{lemma}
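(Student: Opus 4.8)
The plan is to bound $\Delta_j(x,y)$ by relating it to the actual number of mismatches and then to use the random choice of primes to amplify a large number of mismatches into a large $\Delta$. Throughout, fix the window $S[x,y]$ and write $m$ for the set of mismatched indices, i.e.\ those $i$ with $S[i]\neq S^R[i]$ within $[x,y]$; so $|m|=\HAM(S[x,y],S^R[x,y])$. The key observation is that for a prime $p_j$, the residue class $r\pmod{p_j}$ is counted in $\Delta_j(x,y)$ exactly when the subpattern $S_{r,p_j}[x,y]$ differs from its reverse, which happens \emph{only if} at least one mismatched index falls into that class (assuming no fingerprint collisions, which we have assumed globally). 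Conversely, a mismatch at index $i$ forces the class $r\equiv i \pmod{p_j}$ to be counted. So $\Delta_j(x,y)$ equals the number of distinct residue classes modulo $p_j$ that contain at least one mismatched index.

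For part~\eqref{rej-1}, I would argue directly: if $\HAM(S[x,y],S^R[x,y])\le d$, then there are at most $d$ mismatched indices, hence at most $d$ distinct residue classes can be hit, so $\Delta_j(x,y)\le d$ for every $j$, giving $\Delta(x,y)=\max_j \Delta_j(x,y)\le d$. This direction is deterministic and needs no probabilistic input; the only subtlety is the fingerprint-collision caveat, handled by the global assumption that no fingerprint fails (probability $1-\tfrac{1}{n^2}$).

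For part~\eqref{rej-2}, the goal is the reverse inequality in the high-mismatch regime, and this is the main obstacle. Suppose $|m|\ge 2d$. I would fix a single prime $p_j$ and estimate the expected number of \emph{collisions}, i.e.\ pairs of mismatched indices $a\neq b$ with $a\equiv b \pmod{p_j}$. By \lemref{lem:hash}, each such pair collides with probability at most $\tfrac{\beta}{32d}$, so the expected number of colliding pairs is at most $\binom{|m|}{2}\cdot\tfrac{\beta}{32d}$. Since the number of distinct residue classes hit is at least $|m|$ minus the number of collisions, a Markov-type / first-moment bound shows that with constant probability $\Delta_j(x,y)\ge |m| - (\text{collisions}) > (1+\beta)d$; concretely, restricting attention to any $2d$ of the mismatches, the expected collision count among them is at most $\binom{2d}{2}\cdot\tfrac{\beta}{32d}\le \tfrac{\beta d}{16}$, which by Markov is below $\tfrac{\beta d}{2}$ except with probability at most $\tfrac18$, leaving at least $2d-\tfrac{\beta d}{2}>(1+\beta)d$ distinct classes. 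Thus a single prime yields $\Delta_j(x,y)>(1+\beta)d$ with probability at least some constant $c>0$.

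To boost the constant success probability to $1-\tfrac{1}{n^3}$, I would use the fact that $\mathcal{P}$ consists of $2\log n$ independently chosen primes and that $\Delta(x,y)=\max_j\Delta_j(x,y)$ takes the maximum. Since each prime independently succeeds with probability at least $c$, the probability that \emph{all} $2\log n$ primes fail is at most $(1-c)^{2\log n}$, which for a suitable constant base is at most $\tfrac{1}{n^3}$; taking the max over $j$ then certifies $\Delta(x,y)>(1+\beta)d$ with the required probability. The delicate points I expect to manage carefully are the exact constants in the collision bound (matching $\beta=\tfrac1{16}$ and the prime range in \lemref{lem:hash}) and the clean accounting of ``distinct classes hit $=$ mismatches $-$ collisions,'' together with confirming that no fingerprint collision spuriously \emph{decreases} $\Delta_j$, which is again covered by the global no-failure assumption.
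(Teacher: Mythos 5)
Your proof is correct and takes essentially the same route as the paper's: part~(1) is the same observation that each residue class counted by $\Delta_j$ must contain a mismatch, and part~(2) is the same first-moment-plus-Markov argument per prime (via \lemref{hash}, applied to an arbitrary set of $2d$ mismatches) amplified over the $2\log n$ independently chosen primes. The only difference is cosmetic bookkeeping: you bound the expected number of colliding \emph{pairs} of mismatches and use ``distinct classes hit $\ge$ mismatches $-$ collisions,'' whereas the paper bounds the expected number of non-$\mathcal{M}$-isolated mismatches and credits only the isolated ones toward $\Delta_j$ --- two equivalent ways of converting few modular collisions into the lower bound $\Delta_j(x,y)>(1+\beta)d$.
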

\begin{proof}
Recall that $\Delta(x,y)=\max_j\Delta_j(x,y)$, where $\Delta_j(x,y)$ is the number of indices $r$ such that the subpatterns $S_{r,p_j}[x,y]$ and $S^R_{r,p_j}[x,y]$ are not the same. Also recall that a mismatch is an index $a$ s.t. $S[x+a]\ne S[y-x-a+1]$. 
Then (\ref{rej-1}) follows from the observation that for every $p_j$, the number of  $r\in [p_j]$ for which  $S_{r,p_j}[x,y]\ne S^R_{r,p_j}[x,y]$  is at most the number of mismatches of $S[x,y]$, and so $\HAM(S[x,y], S^R[x,y]) \geq \Delta(x,y)$. 
We now show that $\Delta(x,y)\le(1+\beta)\cdot d$ w.p. $\le\frac{1}{n^3}$, thus proving (\ref{rej-2}).
Assume that $\HAM(S[x,y], S^R[x,y])\geq 2d$,  and let $\mathcal{M}$ be any set of $2d$ mismatches in $S[x,y]$. 
 
A mismatch $a$  is  {\em $\mathcal{M}$-isolated} under prime $p_j$ if there exists some $r\in[p_j]$ so that $a$  is the only mismatch from $\mathcal{M}$  in the first-level subpattern $S_{r,p_j}[x,y]$. 
Hence, the number of $\mathcal{M}$-isolated mismatches under any prime $p_j$ is a lower bound on $\Delta(x,y)$.

We will show that $\PPr{\Delta(x,y)<(1+\beta)d}\leq 1/n^3.$ 

\begin{claim}
We have $Pr_{p}[\Delta_j(x,y)<(1+\beta)d]<1/8$, over random prime $p$ chosen by the algorithm.
\end{claim}
\begin{proof}
Note that $\Delta_j(x,y)<(1+\beta)d$ if and only if at least $(1-\beta)d$ elements in $\mathcal{M}$ are not $\mathcal{M}$-isolated under $p_j$. 
  By \lemref{lem:hash}, for any $a, b\in\mathcal{M}$, the probability $a\equiv b\pmod{p_j}$ is at most $\frac{\beta}{32d}$.
  Therefore, by a union bound, for a fixed $a\in\mathcal{M}$, $a$ is not $\mathcal{M}$-isolated under $p_j$ w.p. $\frac{\beta}{32d}\cdot (2d)=\beta/16$.
  Thus, the expected number of elements in $\mathcal{M}$ that are not $\mathcal{M}$-isolated is less than $(\beta d)/8$.
  By Markov's inequality, the number of elements in $\mathcal{M}$ that are not $\mathcal{M}$-isolated exceeds $(1-\beta)d$ with probability at most $\beta/(8\cdot (1-\beta))<1/(8\cdot 15)<1/8$.
\end{proof}

From the claim and from the fact that $\Delta(x, y)=\max \Delta_j(x,y)$ it follows that after picking $2\log n$ random primes in $\left[\frac{d}{\beta}\log^2 n,\frac{34d}{\beta}\log^2 n\right]$ we have $\PPr{\Delta(x,y)<(1+\beta)d}\leq (1/8)^{2\log n}\leq n^{-3}$. 
\end{proof}

A position $i\in[x,y]$ is an \emph{isolated mismatch} under $p_j$ if there exists some $r\le p_j$ for which the subpatterns $S_{r,p_j}[x,y]$ and $S^R_{r,p_j}[x,y]$ differ only in position $i$. 
Let $\mathcal{I}_j(x,y)$  be the number of isolated mismatches in $S[x,y]$ under $p_j$, and let $\mathcal{I}(x,y)$ be the union of $\mathcal{I}_j(x,y)$, over all primes $p_j$. 
The next lemma shows that if  $\HAM(S[x,y], S^R[x,y])\leq 2d$ , then  $\mathcal{I}(x,y)$ is precisely $\HAM(S[x,y], S^R[x,y])$ with high probability over the set of primes.

\newcommand{\lemaccept}{If $\HAM(S[x,y], S^R[x,y])\le 2d$, then 
\newline\noindent $\HAM(S[x,y], S^R[x,y])=\mathcal{I}(x,y)$ with probability at least $1-\frac{1}{n^7}$.
}
\begin{lemma}(Adaptation of Lemma 4.2 \cite{CliffordFPSS16})\lemlab{lem:accept}
\lemaccept
\end{lemma}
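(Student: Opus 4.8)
I want to show that when $\HAM(S[x,y],S^R[x,y])\le 2d$, then with high probability over the choice of the second-level primes $\mathcal{Q}$, \emph{every} true mismatch of $S[x,y]$ becomes an isolated mismatch under some prime, so that $\mathcal{I}(x,y)$ recovers the exact mismatch set (and hence the exact count). The key point is that $\mathcal{I}(x,y)$ only ever undercounts: an isolated mismatch is genuinely a position where $S_{r,p_j}[x,y]$ and its reverse differ in exactly one spot, so every element of $\mathcal{I}(x,y)$ is a true mismatch. Thus $\mathcal{I}(x,y)\le \HAM(S[x,y],S^R[x,y])$ always holds, and the content of the lemma is the reverse inequality: no true mismatch is \emph{missed}.

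\medskip

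\textbf{Key steps.} First I would fix the set $\mathcal{N}$ of true mismatches, with $|\mathcal{N}|\le 2d$ by hypothesis. For a single first-level prime $p_j\in\mathcal{P}$, I would apply \lemref{lem:hash}: for any two distinct mismatches $a,b\in\mathcal{N}$, the collision probability $\PPr{a\equiv b\pmod{p_j}}$ is at most $\tfrac{\beta}{32d}$. By a union bound over the at most $2d$ other mismatches, a fixed mismatch $a$ fails to be isolated under $p_j$ with probability at most $(2d)\cdot\tfrac{\beta}{32d}=\tfrac{\beta}{16}$, exactly as in the proof of \lemref{lem:reject}. The subtlety is that isolating a mismatch at the first level (modulo $p_j$) is not enough to \emph{recover} its position: within a first-level subpattern $S_{r,p_j}[x,y]$ we must further refine by a second-level prime $q\in\mathcal{Q}$ so that the single surviving mismatch lands alone in some second-level class $S_{r',p_jq}[x,y]$, making it detectable as an isolated mismatch. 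So the argument must combine the first-level isolation with a second-level refinement step.

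\medskip

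\textbf{The main obstacle.} The hard part will be arguing that the small second-level primes in $\mathcal{Q}=\{\text{primes in }[\log n,3\log n]\}$ suffice to refine a mismatch that is first-level-isolated under some $p_j$ down to a unique second-level position, and to do so with the claimed $1-n^{-7}$ probability. Once a mismatch $a$ is first-level-isolated, it is the \emph{only} mismatch in its class $S_{r,p_j}[x,y]$, so the second-level refinement only needs to separate $a$ from the (possibly many) matching positions, i.e.\ to pin down its residue; here I would use that the composite modulus $p_jq$ induces distinct residues and invoke a CRT-style / Prime-Number-Theorem counting argument (as packaged in \lemref{lem:hash}, applied at the appropriate scale) to bound the failure probability per prime, then amplify by taking all of $\mathcal{Q}$. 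I would then union-bound the failure over all $\le 2d$ mismatches and over the $\bigO{n^2}$ choices of endpoints $(x,y)$, absorbing these polynomial factors into the exponent so that the per-instance guarantee $(1-n^{-7})$ survives. Care is needed because the two randomizations (first-level $\mathcal{P}$ and second-level $\mathcal{Q}$) interact, so I would either condition on the first-level isolation event from \lemref{lem:reject} or, more cleanly, analyze a single mismatch's joint probability of being isolated-and-refined and show it is at least $1-n^{-7}/(2d)$ after taking enough primes; the final bound then follows by a union bound over the mismatch set.
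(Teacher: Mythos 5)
Your first-level argument is exactly the paper's: fix the at most $2d$ true mismatches, use \lemref{lem:hash} plus a union bound over the other mismatches to bound the probability that a fixed mismatch fails to be isolated under a single random prime by a small constant, and your observation that $\mathcal{I}(x,y)$ can only undercount is correct and consistent with the paper (which phrases it as: equality holds iff every mismatch is isolated under some $p_j$). But what you identify as ``the main obstacle'' rests on a misreading of the lemma, and your plan for it would not work. The quantity $\mathcal{I}(x,y)$ is defined purely in terms of \emph{first-level} subpatterns $S_{r,p_j}[x,y]$, so the lemma has nothing to prove about position recovery; recovering \emph{where} an isolated mismatch sits via the second-level fingerprints is a separate statement, \lemref{lem:computemismatch}, and it is \emph{deterministic}: $\mathcal{Q}$ is the set of \emph{all} primes in $[\log n,3\log n]$, their product exceeds $n$ by Theorem 1 of \cite{RosserS62}, and the Chinese Remainder Theorem pins down the unique position. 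There is no randomness in $\mathcal{Q}$, hence no ``two interacting randomizations'' to condition on, and your proposed route --- a \lemref{lem:hash}-style per-prime collision bound at the scale $[\log n,3\log n]$ followed by amplification over $\mathcal{Q}$ --- fails on its own terms: a difference $|a-b|<n$ can be divisible by up to $\log n/\log\log n$ primes in that interval, which is a constant fraction of all primes the interval contains, so no useful per-prime probability bound exists at that scale; it is precisely the exhaustive CRT argument over all of $\mathcal{Q}$ that replaces it.

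With the second-level issue removed, what remains missing from your sketch is the one quantitative step that actually produces the exponent $7$: the $2\log n$ primes in $\mathcal{P}$ are drawn \emph{independently}, so a fixed mismatch fails to be isolated under every one of them with probability at most $(1/16)^{2\log n}=n^{-8}$, and a union bound over the at most $2d\le n$ mismatches gives total failure probability at most $2d/n^8\le n^{-7}$. Your ``after taking enough primes'' gestures at this but never states the independence-based exponentiation, which is where all the probability amplification lives. Finally, your extra union bound over the $\bigO{n^2}$ choices of endpoints $(x,y)$ does not belong in this lemma: the statement is for a fixed substring, and the union over substrings is performed later, in the proof of \thmref{thm:ham}.
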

\begin{proof}
Since $\mathcal{I}(x,y)$ is the union of $\mathcal{I}_j(x,y)$, the number of isolated mismatches in $S[x,y]$ under $p_j$, then $\HAM(S[x,y], S^R[x,y])=\mathcal{I}(x,y)$ if and only if each mismatch is isolated under $p_j$ for some $j$. 

For fixed $a,b\in\mathcal{M}'$, the probability that $a\equiv b\bmod{p_j}$ is at most $1/32d$ by \lemref{lem:hash}.
As before, since there are at most $2d$ mismatches, the probability that $a\equiv b\bmod{p_j}$ for some $b\in\mathcal{M}'$ is at most $1/16$ by a union bound. 
This is the probability that $a$ is not isolated under $p_j$.

Thus, the probability that $a$ is not isolated under any of the random $2\log n$ primes in $\mathcal{P}$ is at most $(1/16)^{2\log n}=1/n^8$. 
Thus, the probability that there is some $a\in\mathcal{M}'$ that is not isolated under any of the primes is at most  $2d/n^8\le 1/n^7$, by another union bound.

Recall that if all mismatches are isolated, then $\HAM(S[x,y], S^R[x,y])=\mathcal{I}(x,y)$, and so the probability that $\HAM(S[x,m], S^R[m+1,y])\neq\mathcal{I}(x,y)$ is at most $1/n^7$. 
\end{proof}

\begin{lemma}\lemlab{lem:computemismatch}(Adaptation of Lemma 4.3 \cite{CliffordFPSS16})
The set of mismatches can be identified using the second-level fingerprints.
\end{lemma}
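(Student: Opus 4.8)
The plan is to recover the mismatches of a string $S[x,y]$ satisfying $\HAM(S[x,y],S^R[x,y])\le 2d$ by \emph{localizing} each isolated mismatch to an exact position. By \lemref{lem:accept} we know that (with high probability over $\mathcal{P}$) every one of the at most $2d$ mismatches is isolated under some first-level prime $p_j\in\mathcal{P}$; that is, for each mismatch $i$ there is a residue $r$ with $p_j$ so that the subpatterns $S_{r,p_j}[x,y]$ and $S^R_{r,p_j}[x,y]$ differ \emph{only} at position $i$. So the first step is, for each prime $p_j$ and each residue $r\in[p_j]$, to detect the residue classes that currently host exactly one isolated mismatch — these are exactly the $r$ for which $\phi^F_{r,p_j}[x,y]\ne B^{k+1}\phi^R_{r,p_j}[x,y]\bmod P$ but the class contains a single disagreement. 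The obstacle is that knowing a mismatch lives in the class $r\pmod{p_j}$ only pins it down to one of roughly $(y-x+1)/p_j$ candidate positions; a first-level fingerprint alone certifies the presence of a lone mismatch in the class but does not reveal \emph{which} element of the class it is.

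The key idea is to use the second-level fingerprints $\mathcal{Q}$ to disambiguate within a first-level class, exactly as in \cite{CliffordFPSS16}. Restricting to a single first-level subpattern $\tilde T=S_{r,p_j}[x,y]$ that is known to contain exactly one mismatch relative to its reverse, I would refine $\tilde T$ by each prime $q\in\mathcal{Q}=[\log n,3\log n]$, using the second-level fingerprints $\phi^F_{r',p_jq}$ and $\phi^R_{r',p_jq}$. The single mismatch falls into exactly one second-level residue class modulo $q$ within $\tilde T$, namely the unique $r'$ for which $\phi^F_{r',p_jq}[x,y]\ne B^{k'+1}\phi^R_{r',p_jq}[x,y]\bmod P$. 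Recording, for each $q\in\mathcal{Q}$, the residue $r'\bmod q$ of the mismatch's index inside $\tilde T$ yields a system of congruences; since the product of the primes in $[\log n,3\log n]$ exceeds $n$ (there are $\Theta(\log n/\log\log n)$ such primes, each of size $\Theta(\log n)$, whose product is $2^{\Omega(\log n)}=\mathrm{poly}(n)$), the Chinese Remainder Theorem uniquely determines the index within the length-$(\le n)$ subpattern $\tilde T$, and hence the absolute position $i$ in $S[x,y]$.

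Concretely, the procedure I would carry out is: (1) compute $\Delta_j$ for each $p_j$ from $\mathcal{F}^F(x,y)$ and $\mathcal{F}^R(x,y)$ to identify the first-level classes $(r,p_j)$ exhibiting a disagreement; (2) by \lemref{lem:accept}, conclude that each true mismatch is isolated in at least one such class, so the isolated classes collectively cover all $\le 2d$ mismatches; (3) for each isolated class $(r,p_j)$, read off the residues $r'\bmod q$ of the offending second-level class for every $q\in\mathcal{Q}$, obtaining the unique mismatch location via CRT; and (4) collect the resulting positions, deduplicating across primes, to output the full mismatch set. Since a mismatch that is isolated under $p_j$ contributes a lone disagreement in its $\tilde T$, the second-level refinement always sees exactly one disagreeing residue per $q$, so the congruence data is consistent and CRT applies cleanly.

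The main obstacle — and the point requiring care — is verifying that the set $\mathcal{Q}$ of second-level primes is rich enough to localize within a first-level subpattern of length up to $\bigO{n}$, i.e.\ that $\prod_{q\in\mathcal{Q}}q > n$, and that the correctness probability from \lemref{lem:accept} (failure $\le 1/n^7$) combines with the global fingerprint-failure bound $1/n^2$ to keep the overall error low. The rest is bookkeeping: each isolated mismatch is recovered independently, and because distinct mismatches may be isolated under different primes we must take the union over all $p_j\in\mathcal{P}$ and over the at most $2d$ recovered indices, which is why the second-level structure generates $\bigO{d\log^5 n}$ subpatterns. I do not expect a subtlety beyond confirming the CRT range bound and that the adaptation of \cite{CliffordFPSS16}'s recovery argument survives the enlarged prime magnitudes (chosen to withstand patterns of length $\bigO{n}$ rather than $\bigO{m}$).
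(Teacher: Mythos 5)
Your proposal is correct and takes essentially the same approach as the paper: each isolated mismatch is the unique disagreeing position in its first-level subpattern, its residue with respect to each second-level prime $q\in\mathcal{Q}$ is read off as the unique disagreeing second-level class, and since the product of the primes in $[\log n,3\log n]$ is at least $n$ (the paper cites Theorem 1 of \cite{RosserS62} for this) the Chinese Remainder Theorem determines the position uniquely. The only cosmetic difference is that the paper phrases the CRT step as a contradiction (a wrongly recovered index $s\ne t$ would be congruent to $t$ modulo every $p_jq_i$, forcing $s=t$), whereas you argue uniqueness directly.
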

\begin{proof}
Using the notion from the proof of  \lemref{lem:accept}, note that if subpattern $S_{r_j,p_j}[x,y]$ contains an isolated mismatch for prime $p_j$ and $r_j\in[p_j]$, then this mismatch is exactly the one position that does not match in the second-level subpattern. 
It remains to show that the algorithm can correctly recover the isolated mismatch through the second-level subpatterns.
Suppose, by way of contradiction, the algorithm recovers some index $s$ not equivalent to the mismatch $t$ isolated under $p_j$. 
Then it follows that both $s$ and $t$ are equivalent to $r_1\bmod{p_jq_1}$, $r_2\bmod{p_jq_2}$, $\ldots$, $r_{|\mathcal{Q}|}\bmod{p_jq_{|\mathcal{Q}|}}$. 
By Theorem 1 of \cite{RosserS62}, the product of the primes $q_i$ is at least $n$. 
Thus, by the Chinese Remainder Theorem, $s=t$, which is a contradiction. 
It follows that the algorithm correctly identifies the location of any isolated mismatches. 
\end{proof}
We are now ready to present the algorithm in full.
\begin{mdframed}
$\check(c_i,x)$: (determines if $S[c_i,x]$ is a $d$-near-palindrome)
\begin{enumerate}
\item
For each $j\in[2\log n]$, initialize $\Delta_j=0$.
\item
For each $j\in[2\log n]$ and $r\in[p_j]$:
\begin{itemize}
\item[]
If $\phi^F_{r,p_j}(S[c_i,x])\neq B^{k+1}\cdot\phi^R_{r,p_j}(S[c_i,x])\bmod{P}$, where $k$ is the length of $S_{r,p_j}[c_i,x]$, then increment $\Delta_j(c_i,x)=\Delta_j(c_i,x)+1$. 
\end{itemize}
\item
Let $\Delta(c_i,x)=\max_j\{\Delta_j(c_i,x)\}$.
\item
If $\Delta(c_i,x)>(1+\beta)\cdot d$, then we immediately reject $S[c_i,x]$. (Recall that $\beta=\frac{1}{16}$.)
\item
Initialize $\mathcal{I}=\emptyset$.
\item
For each mismatch in $S[c_i,x]$, if there exists $q\in\mathcal{Q}$ and such that $\phi^F_{r',q}(S_{r,p}[c_i,x])\neq B^{k'+1}\cdot\phi^R_{r',q}(S_{r,p}[c_i,x])\bmod{P}$, where $k'$ is the length of $S_{r'+rp,pq}[c_i,x]$, for exactly one $r\in[p], r'\in[q]$, then insert the mismatch into $\mathcal{I}(c_i,x)$. (This is the set of \emph{isolated mismatches}.)
\item
If $|\mathcal{I}(c_i,x)|>d$, then we reject $S[c_i,x]$.
\item
Else, if $|\mathcal{I}(c_i,x)|\le d$, then we accept $S[c_i,x]$ and return $\mathcal{I}(c_i,x)$.
\end{enumerate}
\end{mdframed}
\begin{theorem} \thmlab{thm:nearpalindrome}
With probability at least $1-\frac{1}{n^3}$, procedure $\check$ returns whether $S[c_i,x]$ is a $d$-near-palindrome.
\end{theorem}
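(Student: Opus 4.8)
The plan is to argue by cases on the true Hamming distance $h := \HAM(S[c_i,x], S^R[c_i,x])$, showing that the two-stage test accepts exactly when $h \le d$, except on a low-probability event. Throughout I would condition on the global assumption (already adopted in the excerpt) that no Karp-Rabin fingerprint collides, so that every $\Delta_j(c_i,x)$ and every second-level comparison computed by the procedure faithfully reflects the true (in)equality of the corresponding subpatterns; in particular the quantities $\Delta(c_i,x)$ and $\mathcal{I}(c_i,x)$ that drive Steps 4, 7, and 8 are exactly the combinatorial objects analyzed in the earlier lemmas.

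First I would treat the ``yes'' regime $h \le d$, where the goal is acceptance. By part (\ref{rej-1}) of \lemref{lem:reject}, which is deterministic, $\Delta(c_i,x) \le d \le (1+\beta)d$, so the first stage (Step 4) does not reject. Since $h \le 2d$, \lemref{lem:accept} gives $|\mathcal{I}(c_i,x)| = h \le d$ with probability at least $1 - 1/n^7$, and \lemref{lem:computemismatch} ensures that each mismatch isolated under some prime is recovered at its correct location through the second-level fingerprints. Hence the procedure reaches Step 8, accepts, and returns exactly the mismatch set; the only failure mode here has probability at most $1/n^7$.

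Next I would treat the ``no'' regime $h > d$, where rejection is required, splitting it further. If $h \ge 2d$, part (\ref{rej-2}) of \lemref{lem:reject} yields $\Delta(c_i,x) > (1+\beta)d$ with probability at least $1 - 1/n^3$, so Step 4 rejects. If instead $d < h < 2d$, the first stage may fail to reject, but $h \le 2d$ still holds, so \lemref{lem:accept} applies: with probability at least $1 - 1/n^7$ we get $|\mathcal{I}(c_i,x)| = h > d$, and Step 7 rejects. In either subcase the procedure rejects outside an event of probability at most $1/n^3$. Since exactly one of the three regimes holds for a given call, the overall failure probability is the maximum over the cases, namely at most $1/n^3$, giving the claim.

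The crux of the argument—and the reason the test is split into two stages—is the interplay at the threshold $2d$: the first stage is only reliable for certifying $h \ge 2d$, while the mismatch-recovery stage (via \lemref{lem:accept}) is only reliable for $h \le 2d$. Thus the potentially ambiguous band $d < h < 2d$, where neither stage alone is designed to be decisive, must be handled entirely by the second stage, and I expect the main point to get right is verifying that this band is genuinely caught by \lemref{lem:accept} rather than slipping through both filters. The remaining care is purely in the probability bookkeeping—confirming that the $1/n^3$ term from \lemref{lem:reject} dominates the $1/n^7$ terms and that per-call correctness, not a global union bound, is what the statement asserts.
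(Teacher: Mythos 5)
Your proposal is correct and follows essentially the same route as the paper's proof: a case analysis on $h=\HAM(S[c_i,x],S^R[c_i,x])$, invoking \lemref{lem:reject} to reject when $h\ge 2d$ and \lemref{lem:accept} together with \lemref{lem:computemismatch} to decide and recover mismatches when $h\le 2d$, with the band $d<h<2d$ caught by the second stage. Your three-way split is just a finer version of the paper's two cases, and you in fact quote the lemma constants more faithfully than the paper's own proof (which writes $1-\frac{1}{n^5}$ where \lemref{lem:accept} gives $1-\frac{1}{n^7}$, and $(1+\beta)\cdot 2d$ where \lemref{lem:reject} gives $(1+\beta)\cdot d$).
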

\begin{proof}
If $\HAM(S[c_i,x], S^R[c_i,x])>2d$, then by \lemref{lem:reject}, $\Delta(c_i,x)>(1+\beta)\cdot 2d$ with probability at least $1-\frac{1}{n^3}$ and so $\check$ will reject $S[c_i,x]$. 
Conditioned on $\HAM(S[c_i,x], S^R[c_i,x])\le 2d$, by \lemref{lem:accept} $\mathcal{I}(c_i,x)=\HAM(S[c_i,x], S^R[c_i,x])$ with 
probability at least $1-\frac{1}{n^5}$, and so if $\HAM(S[c_i,x], S^R[c_i,x])>d$ the algorithm safely rejects, and otherwise it accepts. Finally, by  \lemref{lem:computemismatch} the entire set of mismatches $\mathcal{I}(c_i,x)$ can be computed from the second-level subpattern fingerprints. 
\end{proof}

\subsection{Correctness and Space Complexity}
\label{sec:checklists}
In this section, we finish the proof of \thmref{thm:ham} by claiming correctness and analyzing the space used by the one-pass streaming algorithm described in \secref{sec:1p:alg}. 
Since we used the spacing of the checkpoints as in \cite{BerenbrinkEMA14}, we have the following properties.
\newcommand{\obscheckpoints}{At reading $S[x]$, for all $k\ge k_0=\ceil{\frac{\log\left(\frac{(1+\alpha)^2}{\alpha}\right)}{\log(1+\alpha)}}$, let $C_{x,k}=\{c\in\mathcal{C}\,|\,\level{c}=k\}$.
\begin{enumerate}
\item
$C_{x,k}\subseteq[x-2(1+\alpha)^k,x]$.
\item
The distance between two consecutive checkpoints of $C_{x,k}$ is $\flr{\alpha(1+\alpha)^{k-2}}$.
\item
$|C_{x,k}|=\ceil{\frac{2(1+\alpha)^k}{\flr{\alpha(1+\alpha)^{k-2}}}}$.
\item 
At any point in the algorithm, the number of checkpoints is $\bigO{\frac{\log n}{\eps\log(1+\eps)}}$.
\end{enumerate}
}
\begin{observation}(\cite{BerenbrinkEMA14}, Observation 16, Lemma 17)\obslab{obs:checkpoints}
\obscheckpoints
\end{observation}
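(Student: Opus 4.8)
The plan is to read off parts (1)--(3) almost directly from the creation and deletion rules in the Maintenance step, and then to obtain the global bound (4) by summing the per-level counts over all relevant levels.

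For part (1), the upper endpoint is immediate: every level-$k$ checkpoint is created at some position $c\le x$, so every element of $C_{x,k}$ is at most $x$. The lower endpoint is exactly the deletion rule: a level-$k$ checkpoint $c$ is discarded the moment the stream reaches a position exceeding $c+2(1+\alpha)^k$, so any surviving $c\in C_{x,k}$ satisfies $c\ge x-2(1+\alpha)^k$. For part (2), since level-$k$ checkpoints are created precisely at positions that are multiples of $\flr{\alpha(1+\alpha)^{k-2}}$, two consecutive surviving checkpoints differ by exactly this spacing. Part (3) then follows by counting: the surviving level-$k$ checkpoints are the multiples of $s_k:=\flr{\alpha(1+\alpha)^{k-2}}$ lying in an interval of length $2(1+\alpha)^k$ from part (1), and there are $\ceil{2(1+\alpha)^k/s_k}$ of them.

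For part (4) I would first argue that the spacing is well-behaved at every level under consideration. The threshold $k_0=\ceil{\log((1+\alpha)^2/\alpha)/\log(1+\alpha)}$ is chosen exactly so that $(1+\alpha)^{k_0-2}\ge 1/\alpha$, hence $s_k\ge 1$ for all $k\ge k_0$; this keeps the denominator in part (3) from collapsing. With $s_k\ge 1$, the per-level count simplifies to $\ceil{2(1+\alpha)^k/\flr{\alpha(1+\alpha)^{k-2}}}=O\!\left(\frac{(1+\alpha)^k}{\alpha(1+\alpha)^{k-2}}\right)=O\!\left(\frac{(1+\alpha)^2}{\alpha}\right)=O(1/\alpha)$, a bound independent of $k$. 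Next I would cap the number of nonempty levels: once $s_k>x$ there is no positive multiple of $s_k$ in $[1,x]$, so no level-$k$ checkpoint is ever created; since $x\le n$, only levels with $\flr{\alpha(1+\alpha)^{k-2}}\le n$ contribute, giving at most $O\!\left(\frac{\log n}{\log(1+\alpha)}\right)$ active levels. Multiplying the $O(1/\alpha)$ per-level bound by the $O(\log n/\log(1+\alpha))$ levels yields a total of $O\!\left(\frac{\log n}{\alpha\log(1+\alpha)}\right)$ checkpoints.

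The final step translates this $\alpha$-bound into the claimed $\eps$-bound. Here $1+\alpha=\sqrt{1+\eps}$ gives the exact identity $\log(1+\alpha)=\tfrac12\log(1+\eps)$, while $\alpha=\sqrt{1+\eps}-1=\frac{\eps}{\sqrt{1+\eps}+1}=\Theta(\eps)$ in the standard regime $\eps=O(1)$; substituting both into $\frac{\log n}{\alpha\log(1+\alpha)}$ yields $O\!\left(\frac{\log n}{\eps\log(1+\eps)}\right)$, as required. I expect the main obstacle to be bookkeeping rather than anything conceptual: one must track the floor $\flr{\alpha(1+\alpha)^{k-2}}$ carefully (this is precisely what the choice of $k_0$ secures, by forcing the spacing to be at least $1$ and thereby justifying the simplification in part (3)), and one must confirm that the estimate $\alpha=\Theta(\eps)$ is genuinely two-sided over the range of $\eps$ under consideration, so the $O(\cdot)$ bound is not hiding a dependence that degrades outside the small-$\eps$ regime.
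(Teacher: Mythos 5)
Your proof is correct, but the comparison here is asymmetric: the paper gives no proof of this observation at all --- it is imported by citation from Observation~16 and Lemma~17 of \cite{BerenbrinkEMA14}, whose checkpoint scheme the one-pass algorithm of \secref{sec:1p:alg} adopts verbatim. Your direct verification from the creation/deletion rules supplies the argument the paper leaves implicit, and it is essentially the argument of the original reference: parts (1)--(3) read off the maintenance rules exactly as you do, and (4) follows from the per-level count $\bigO{1/\alpha}$ times the $\bigO{\log n/\log(1+\alpha)}$ active levels, combined with the identities $\log(1+\alpha)=\tfrac{1}{2}\log(1+\eps)$ and $\alpha=\Theta(\eps)$. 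Three bookkeeping points are worth recording. First, the exact equality in part (3) holds only up to boundary alignment: the number of multiples of $s_k=\flr{\alpha(1+\alpha)^{k-2}}$ in a window of length $2(1+\alpha)^k$ can differ from $\ceil{2(1+\alpha)^k/s_k}$ by one depending on the residue of $x$ modulo $s_k$; this slack is harmless for (4). Second, you correctly work with the observation's $k_0=\ceil{\log\left((1+\alpha)^2/\alpha\right)/\log(1+\alpha)}$, which forces $s_k\ge 1$; note that the initialization box instead sets $k_0=\log(1/\alpha)/\log(1+\alpha)$, under which $\alpha(1+\alpha)^{k_0-2}=(1+\alpha)^{-2}<1$ and the spacing would floor to zero --- an inconsistency in the paper that your choice silently repairs. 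Third, your cap on the number of active levels uses $\log(1/\alpha)=\bigO{\log n}$ (i.e., $\eps\ge 1/\mathrm{poly}(n)$), and, as you yourself flag, $\alpha=\Theta(\eps)$ needs $\eps=\bigO{1}$; both hold in the regime where the bound in (4) is asserted, so no hidden dependence survives.
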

\begin{corollary}
The total space used by the algorithm is $\bigO{\frac{d\log^7 n}{\eps\log(1+\eps)}}$ bits. The update time per arriving symbol is also $\bigO{\frac{d\log^6 n}{\eps\log(1+\eps)}}$.
\end{corollary}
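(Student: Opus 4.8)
The plan is to account separately for every object the algorithm keeps in memory, show that the fingerprint sets attached to the checkpoints dominate, and then bound their total size by multiplying the cost of one such set by the number of live checkpoints from \obsref{obs:checkpoints}. First I would tally the ``cheap'' state: the base and modulus $B,P$ and the counters $c_{start},\tilde\ell$ cost $\bigO{\log n}$ bits; the prime set $\mathcal P$ costs $\bigO{\log^2 n}$ bits, since each of its $2\log n$ primes is at most $\frac{34d}{\beta}\log^2 n$ and hence fits in $\bigO{\log n}$ bits (using $d=o(\sqrt n)$), while $\mathcal Q$ costs $\bigO{\log n}$ bits; the sliding window and the mismatch set $\mathcal M$ each hold $\bigO{d}$ symbols or indices, i.e. $\bigO{d\log n}$ bits. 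None of these reach the target bound.

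The crux is sizing one set $\fingerprint{c}=\mathcal F^F(1,x)\cup\mathcal F^R(1,x)$. I would count the subpatterns directly. Across the $2\log n$ first-level primes, each at most $\frac{34d}{\beta}\log^2 n=\bigO{d\log^2 n}$, the number of residue classes is $\sum_{p\in\mathcal P}p=\bigO{d\log^3 n}$, giving that many first-level fingerprints. For the second level, the residue classes modulo $pq$ number $\sum_{p\in\mathcal P}\sum_{q\in\mathcal Q}pq=\bigl(\sum_{p\in\mathcal P}p\bigr)\bigl(\sum_{q\in\mathcal Q}q\bigr)=\bigO{d\log^3 n}\cdot\bigO{\log^2 n}=\bigO{d\log^5 n}$, using that $\mathcal Q$ has $\bigO{\log n/\log\log n}$ primes of magnitude $\bigO{\log n}$. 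Since every fingerprint lies in $\Z_P$ with $P=\bigO{n^6}$, it fits in $\bigO{\log n}$ bits, so one checkpoint's fingerprint set occupies $\bigO{d\log^6 n}$ bits, dominated by the second-level part.

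Now I would invoke part (4) of \obsref{obs:checkpoints}, which bounds the number of live checkpoints at any time by $\bigO{\frac{\log n}{\eps\log(1+\eps)}}$. Multiplying, the checkpoints consume $\bigO{\frac{\log n}{\eps\log(1+\eps)}}\cdot\bigO{d\log^6 n}=\bigO{\frac{d\log^7 n}{\eps\log(1+\eps)}}$ bits, which subsumes the master fingerprint sets $\mathcal F^F,\mathcal F^R$ (themselves only $\bigO{d\log^6 n}$ bits) and all the cheap state, establishing the space bound.

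For update time, the dominant work per symbol is the calls to $\check$. Working in the word-RAM model where arithmetic on $\bigO{\log n}$-bit words costs $\bigO{1}$, a single invocation $\check(c_i,x)$ performs one forward/reverse fingerprint comparison per second-level subpattern (via the sliding identities, reusing the stored master and checkpoint fingerprints), so it runs in $\bigO{d\log^5 n}$ time. Since $\check$ is called for at most $\bigO{\frac{\log n}{\eps\log(1+\eps)}}$ checkpoints, this totals $\bigO{\frac{d\log^6 n}{\eps\log(1+\eps)}}$. I would then check that the remaining per-symbol work is subsumed: the master-fingerprint update touches one residue class per prime (pair), only $\bigO{\log^2 n}$ updates; and although copying a fresh $\fingerprint{c}$ costs $\bigO{d\log^6 n}$, a checkpoint at level $k$ is created only once every $\flr{\alpha(1+\alpha)^{k-2}}$ symbols, so summing the geometric series over levels gives amortized $\bigO{1/\alpha}=\bigO{1/\eps}$ creations per symbol, hence $\bigO{\frac{d\log^6 n}{\eps}}$ amortized copy cost, well within the stated bound. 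The main obstacle I anticipate is precisely this bookkeeping: getting the second-level subpattern count right (the product $\bigl(\sum_p p\bigr)\bigl(\sum_q q\bigr)$, not a sum) and arguing that checkpoint creation does not blow up the per-symbol time, for which the geometric spacing in part (2) of \obsref{obs:checkpoints} is essential.
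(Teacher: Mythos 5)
Your proposal is correct and follows essentially the same route as the paper's proof: count the $\bigO{d\log^5 n}$ first- and second-level subpatterns, charge $\bigO{\log n}$ bits per fingerprint residue modulo $P\in[n^5,n^6]$, and multiply by the $\bigO{\frac{\log n}{\eps\log(1+\eps)}}$ live checkpoints from \obsref{obs:checkpoints}, with the update time dominated by one constant-time comparison per subpattern per checkpoint in $\check$. Your only addition is the explicit amortized accounting of the $\bigO{d\log^6 n}$ cost of copying $\fingerprint{c}$ at checkpoint creation via the geometric level spacing --- a bookkeeping step the paper's proof silently glosses over, and a worthwhile one since a single arriving symbol can trigger creations at several levels at once.
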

\begin{proof}
The first-level and second-level Karp-Rabin fingerprints consist of integers modulo $P$ for each of the $\bigO{d\log^5 n}$ subpatterns. 
Since $P\in[n^5,n^6]$, then $\bigO{d\log^6 n}$ bits of space are necessary for each fingerprint. 
Furthermore, by \obsref{obs:checkpoints}, there are $\frac{\log n}{\eps\log(1+\eps)}$ checkpoints, so the total space used is $\bigO{d\log^7 n}$ bits.
For each arriving symbol $S[x]$, the algorithm checks possibly the fingerprints of each checkpoint whether the substring is a $d$-near-palindrome. 
There are $\bigO{\frac{\log n}{\eps\log(1+\eps)}}$ checkpoints, each with fingerprints of size $\bigO{d\log^5 n}$.
Each subpattern of a fingerprint may be compared in constant time, so the overall update time is $\bigO{\frac{d\log^6 n}{\eps\log(1+\eps)}}$.
\end{proof}
We now show correctness and analyze the space complexity of the one-pass streaming algorithm described in \secref{sec:1p:alg}. 
\begin{proofof}{\thmref{thm:ham}}
Let $\ell_{max}$ be the length of the longest $d$-near-palindrome, $S[x,x+\ell_{max}-1]$, with midpoint $m$. 
Let $k$ be the largest integer so that $2(1+\alpha)^{k-1}<\ell_{max}$, where $\alpha=\sqrt{1+\eps}-1$. 
Let $y=m+(1+\alpha)^{k-1}$ so that $x<y<x+\ell_{max}-1$. 
By \obsref{obs:checkpoints}, there exists a checkpoint in the interval $[y-2(1+\alpha)^{k-1},y]$. 
Furthermore, \obsref{obs:checkpoints} implies consecutive checkpoints of level ${k-1}$ are separated by distance $\flr{\alpha(1+\alpha)^{k-2}}$. 
Thus, there exists a checkpoint $c$ in the interval $\left[y-2(1+\alpha)^{k-1},y-2(1+\alpha)^{k-1}+\alpha(1+\alpha)^{k-3}\right]$. 
If procedure $\check$ succeeds for this checkpoint on position $m+(m-c)$, then the output $\tilde{\ell}$ of the algorithm is at least 
\[2(m-c)\ge 2m-2y+4(1+\alpha)^{k-1}-2\alpha(1+\alpha)^{k-3}=2(1+\alpha)^{k-1}-2\alpha(1+\alpha)^{k-3}.\]
Comparing this output with $\ell_{max}$,
\[\frac{\ell_{max}}{\tilde{\ell}}\le\frac{2(1+\alpha)^k}{2(1+\alpha)^{k-1}-2\alpha(1+\alpha)^{k-3}}=\frac{(1+\alpha)^3}{(1+\alpha)^2-\alpha}\le(1+\alpha)^2=1+\eps.\]
Thus, if procedure $\check$ succeeds for all substrings then $\tilde{\ell}\le\ell_{max}\le(1+\eps)\tilde{\ell}$. 
Taking \thmref{thm:nearpalindrome} and a simple union bound over all $\bigO{n^2}$ possible substrings, procedure $\check$ succeeds for all substrings with probability at least $1/n$, and the result follows. 
\end{proofof}
\section{One-Pass Streaming Algorithm with Additive Error $E$}
\seclab{sec:additive}
In this section, we prove \thmref{thm:add}, showing a one-pass streaming algorithm which uses $\bigO{\frac{dn\log^6 n}{E}}$ bits of space.
The initialization of the algorithm is the same as that in \secref{sec:1p:alg} for the one-pass streaming algorithm with multiplicative error $(1+\eps)$.

\begin{mdframed}
Maintenance:
\begin{enumerate}
\item
Read $S[x]$. Update the sliding window to $S[x-2d, x]$.
\item
Update the \emph{Master Fingerprints} to be $\mathcal{F}^F(1,x)$ and $\mathcal{F}^R(1,x)$:
\begin{enumerate}
\item
For the first-level fingerprints: for every $p\in\mathcal{P}$, let $r\equiv x \mod p$, and increment $\phi_{r,p}^F(S)$ by $S[x]\cdot B^{\ceil{x/p}}\bmod{P}$ and increment $\phi_{r,p}^R(S)$ by $S[x]\cdot B^{-\ceil{x/p}}\bmod{P}$.
\item
For the second-level fingerprints: for every $p\in\mathcal{P}$ and $q\in\mathcal{Q}$, let $r'\equiv x\mod pq$, and increment $\phi_{r',pq}^F(S)$ by $S[x]\cdot B^{\ceil{x/(pq)}}\bmod{P}$ and increment $\phi^R_{r',pq}(S)$ by 
\newline $S[x]\cdot B^{-\ceil{x/(pq)}}\bmod{P}$.
\end{enumerate}
\item
If $x$ is a multiple of $\flr{\frac{E}{2}}$, then add the checkpoint $c=x$ to $\mathcal{C}$. Set $\fingerprint{c}={\cal F}^F(1,x)\cup {\cal F}^R(1,x)$.
\item
For every checkpoint $c\in\mathcal{C}$ such that $x-c>\tilde{\ell}$, we call procedure $\check$ to see if $S[c,x]$ is a near-palindrome. 
If $S[c,x]$ is a near-palindrome, then set $c_{start}=c$, $\tilde{\ell}=x-c$ and $\mathcal{M}$ to be the indices returned by $\check$.
\item
If $x=n$, then report $c_{start}$, $\tilde{\ell}$, and $\mathcal{M}$.
\end{enumerate}
\end{mdframed}

\begin{corollary}
The algorithm uses $\bigO{\frac{dn\log^6 n}{E}}$ bits of space and $\bigO{\frac{dn\log^5 n}{E}}$ time per arriving symbol.
\end{corollary}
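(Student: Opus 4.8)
The plan is to repeat the space and time accounting of the multiplicative algorithm almost verbatim, replacing only the bound on the number of checkpoints. First I would count the checkpoints. The additive maintenance procedure inserts a checkpoint precisely at each position that is a multiple of $\flr{E/2}$ and, unlike the multiplicative algorithm, never deletes one. Consequently, throughout the entire stream the set $\mathcal{C}$ contains at most $\flr{n/\flr{E/2}}=\bigO{n/E}$ checkpoints.

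For the space bound, I would recall from the overview that the first- and second-level fingerprints together consist of $\bigO{d\log^5 n}$ subpatterns, each stored as a residue modulo $P$ with $P\in[n^5,n^6]$ and hence occupying $\bigO{\log n}$ bits. Thus each stored set $\fingerprint{c}$ uses $\bigO{d\log^6 n}$ bits. Multiplying by the $\bigO{n/E}$ checkpoints gives a total of $\bigO{\frac{dn\log^6 n}{E}}$ bits; the sliding window of $2d$ symbols, the $\bigO{\log n}$ sampled primes, and the single pair of master fingerprints are all dominated by this quantity.

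For the update time, upon reading $S[x]$ the algorithm updates the $\bigO{d\log^5 n}$ master subpattern fingerprints, each in constant time, and then invokes $\check$ for every checkpoint $c$ with $x-c>\tilde{\ell}$. Since each subpattern comparison inside $\check$ takes constant time, one invocation costs $\bigO{d\log^5 n}$, and in the worst case $\check$ is run for all $\bigO{n/E}$ checkpoints, giving update time $\bigO{\frac{dn\log^5 n}{E}}$ per arriving symbol.

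The only genuine subtlety, and the step I would flag, is the checkpoint count: because the additive procedure has no deletion step, \obsref{obs:checkpoints} no longer applies, and the bound $\bigO{n/E}$ must instead be derived directly from the uniform spacing $\flr{E/2}$. Once this count is fixed, both estimates follow by substituting it for the factor $\frac{\log n}{\eps\log(1+\eps)}$ appearing in the corresponding analysis of the multiplicative algorithm.
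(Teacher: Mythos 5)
Your proposal is correct and follows essentially the same accounting as the paper: count $\bigO{n/E}$ checkpoints from the uniform $\flr{\frac{E}{2}}$ spacing, charge $\bigO{d\log^6 n}$ bits per stored fingerprint set (since there are $\bigO{d\log^5 n}$ subpatterns, each a residue modulo $P\in[n^5,n^6]$), and charge $\bigO{d\log^5 n}$ per invocation of $\check$ across all checkpoints for the update time. Your flagged subtlety is also handled the same way in the paper, which derives the $\frac{2n}{E}+1$ checkpoint count directly from the spacing rather than from \obsref{obs:checkpoints}.
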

\begin{proof}
Each of the Karp-Rabin fingerprints consist of $\bigO{d\log^5 n}$ integers modulo $P$. 
Since $P\in[n^5,n^6]$, then $\bigO{d\log^6 n}$ bits of space are necessary for each fingerprint. 
Each checkpoint is spaced $\flr{\frac{E}{2}}$ positions apart, so there are at most $\frac{2n}{E}+1$ checkpoints, and the total space required is $\bigO{\frac{dn\log^6 n}{E}}$ bits.
For each arriving symbol $S[x]$, the algorithm checks each checkpoint and possibly the fingerprints of each checkpoint to check whether the substring is a near-palindrome. 
Since there are $\bigO{\frac{n}{E}}$ checkpoints, each containing fingerprints of size $\bigO{d\log^5 n}$, and each subpattern of a fingerprint may be compared in constant time, then the overall update time per arriving symbol is $\bigO{\frac{dn\log^5 n}{E}}$.
\end{proof}

The correctness of the algorithm follows immediately from the spacing of the checkpoints, and the correctness of procedure $\check$.
\begin{proofof}{\thmref{thm:add}}
For each $x,y$, procedure $\check$ returns, with probability at least $1-\frac{1}{n^3}$, whether $S[c_i,x]$ is a $d$-near-palindrome. 
Thus by a simple union bound over all possible substrings of the stream, $\check$ succeeds with probability at least $1-\frac{1}{n}$. 
Because the checkpoints are separated by distance $\flr{\frac{E}{2}}$, the longest $d$-near-palindrome can begin at most $\flr{\frac{E}{2}}-1$ characters before a checkpoint. 
Hence, the algorithm outputs some $\tilde{\ell}$ such that $\tilde{\ell}\ge\ell_{max}-E$.
\end{proofof}
\section{Two-Pass  {\em Exact} Streaming Algorithm}\seclab{sec:twopass}
In this section, we prove \thmref{thm:exact}.
Namely, we present a two-pass streaming algorithm which returns the longest $d$-near-palindrome with space $\bigO{d^2\sqrt{n}\log^6 n}$. 

Recall that we assume the lengths of $d$-near-palindromes are even.
Thus, for any substring $S[x,y]$ of even length, we define its {\em midpoint} $m=\flr{\frac{x+y}{2}}$.
Upon reading $x$, we say that $x-\sqrt{n}$ is a candidate midpoint if the sliding window $S[x-2\sqrt{n},x]$ is a $d$-near-palindrome.

First, we modify the one-pass streaming algorithm with additive error in \secref{sec:additive} so that it returns a list $\mathcal{L}$ of candidate midpoints of $d$-near-palindromes with length at least $\ell-\frac{\sqrt{n}}{2}$, where $\ell$ is an estimate of the maximum length output by the algorithm.  
However, we show in \lemref{lem:periodic} that the string has a periodic structure which allows us to keep only $\bigO{d}$ fingerprints in order to recover the fingerprint for any substring between two midpoints. 

In the second pass, we explicitly keep the $\frac{\sqrt{n}}{2}$ characters before the starting positions and candidate midpoints of ``long'' $d$-near-palindromes identified in the first pass. 
We use a procedure $\recover$ to exactly identify the number and locations of mismatches within the $d$-near-palindromes identified in the first pass. 
We then use the $\frac{\sqrt{n}}{2}$ characters to extend the near-palindromes until the number of mismatches exceed $d+1$.

For an example, see \figref{fig:recover}.
\begin{figure*}[htb]
\centering
\begin{tikzpicture}[scale=0.6]
\draw (1cm,0cm) -- (19cm,0cm);
\draw (1cm,1.2cm) -- (19cm,1.2cm);

\draw (1cm,0cm) -- (1cm,1.2cm);
\draw (1.6cm,0cm) -- (1.6cm,1.2cm);
\draw (2.2cm,0cm) -- (2.2cm,1.2cm);
\draw (2.8cm,0cm) -- (2.8cm,1.2cm);
\draw (3.4cm,0cm) -- (3.4cm,1.2cm);
\draw (4cm,0cm) -- (4cm,1.2cm);
\draw (4.6cm,0cm) -- (4.6cm,1.2cm);
\draw (5.2cm,0cm) -- (5.2cm,1.2cm);
\draw (5.8cm,0cm) -- (5.8cm,1.2cm);
\draw (6.4cm,0cm) -- (6.4cm,1.2cm);
\draw (7cm,0cm) -- (7cm,1.2cm);
\draw (7.6cm,0cm) -- (7.6cm,1.2cm);
\draw (8.2cm,0cm) -- (8.2cm,1.2cm);
\draw (8.8cm,0cm) -- (8.8cm,1.2cm);
\draw (9.4cm,0cm) -- (9.4cm,1.2cm);
\draw (10cm,0cm) -- (10cm,1.2cm);
\draw (10.6cm,0cm) -- (10.6cm,1.2cm);
\draw (11.2cm,0cm) -- (11.2cm,1.2cm);
\draw (11.8cm,0cm) -- (11.8cm,1.2cm);
\draw (12.4cm,0cm) -- (12.4cm,1.2cm);
\draw (13cm,0cm) -- (13cm,1.2cm);
\draw (13.6cm,0cm) -- (13.6cm,1.2cm);
\draw (14.2cm,0cm) -- (14.2cm,1.2cm);
\draw (14.8cm,0cm) -- (14.8cm,1.2cm);
\draw (15.4cm,0cm) -- (15.4cm,1.2cm);
\draw (16cm,0cm) -- (16cm,1.2cm);
\draw (16.6cm,0cm) -- (16.6cm,1.2cm);
\draw (17.2cm,0cm) -- (17.2cm,1.2cm);
\draw (17.8cm,0cm) -- (17.8cm,1.2cm);
\draw (18.4cm,0cm) -- (18.4cm,1.2cm);
\draw (19cm,0cm) -- (19cm,1.2cm);
\filldraw[thick, shading=radial,inner color=white, outer color=gray!, opacity=1] (2.8cm,0cm) rectangle+(0.6cm,1.2cm);
\filldraw[thick, shading=radial,inner color=white, outer color=gray!, opacity=1] (3.4cm,0cm) rectangle+(0.6cm,1.2cm);
\filldraw[thick, shading=radial,inner color=white, outer color=gray!, opacity=1] (4cm,0cm) rectangle+(0.6cm,1.2cm);
\filldraw[thick, shading=radial,inner color=white, outer color=gray!, opacity=1] (4.6cm,0cm) rectangle+(0.6cm,1.2cm);
\filldraw[thick, shading=radial,inner color=white, outer color=gray!, opacity=1] (5.2cm,0cm) rectangle+(0.6cm,1.2cm);
\filldraw[thick, shading=radial,inner color=white, outer color=gray!, opacity=1] (5.8cm,0cm) rectangle+(0.6cm,1.2cm);
\filldraw[thick, shading=radial,inner color=white, outer color=gray!, opacity=1] (6.4cm,0cm) rectangle+(0.6cm,1.2cm);
\filldraw[thick, shading=radial,inner color=white, outer color=gray!, opacity=1] (13cm,0cm) rectangle+(0.6cm,1.2cm);
\filldraw[thick, shading=radial,inner color=white, outer color=gray!, opacity=1] (13.6cm,0cm) rectangle+(0.6cm,1.2cm);
\filldraw[thick, shading=radial,inner color=white, outer color=gray!, opacity=1] (14.2cm,0cm) rectangle+(0.6cm,1.2cm);
\filldraw[thick, shading=radial,inner color=white, outer color=gray!, opacity=1] (14.8cm,0cm) rectangle+(0.6cm,1.2cm);
\filldraw[thick, shading=radial,inner color=white, outer color=gray!, opacity=1] (15.4cm,0cm) rectangle+(0.6cm,1.2cm);
\filldraw[thick, shading=radial,inner color=white, outer color=gray!, opacity=1] (16cm,0cm) rectangle+(0.6cm,1.2cm);
\filldraw[thick, shading=radial,inner color=white, outer color=gray!, opacity=1] (16.6cm,0cm) rectangle+(0.6cm,1.2cm);
\draw [decorate,decoration={brace,mirror,amplitude=10pt}](13cm,1.2cm) -- (7cm,1.2cm);
\draw (13cm,-0.3cm) -- (13cm,-0.7cm);
\node at (13cm, -1.2cm){$c_i+longest(c_i)$};
\draw (7cm,-0.3cm) -- (7cm,-0.7cm);
\node at (7cm, -1.2cm){$c_i$};
\node at (10cm, 2.2cm){All mismatches kept from first-pass};
\draw [decorate,decoration={brace,mirror,amplitude=10pt}](16cm,3.6cm) -- (4cm,3.6cm);
\node at (10cm, 4.6cm){Longest $d$-near-palindrome};
\draw [dashed] (16cm,1.2cm) -- (16cm, 3.6cm);
\draw [dashed] (4cm,1.2cm) -- (4cm, 3.6cm);
\draw [decorate,decoration={brace,mirror,amplitude=10pt}](2.8cm,-1.6cm) -- (7cm,-1.6cm);
\draw (2.8cm,-0.3cm) -- (2.8cm,-0.7cm);
\node at (2.8cm, -1.2cm){$c_i-\frac{\sqrt{n}}{2}$};
\node at (4.9cm, -2.5cm){Characters kept in $\mathcal{A}$};
\draw [decorate,decoration={brace,mirror,amplitude=10pt}](13cm,-1.6cm) -- (17.2cm,-1.6cm);
\node at (15.1cm, -2.5cm){Compare these characters with those kept in $\mathcal{A}$};
\end{tikzpicture}
\caption{The second pass allows us to find the longest $d$-near-palindrome by explicitly comparing characters.}\figlab{fig:recover}
\end{figure*}
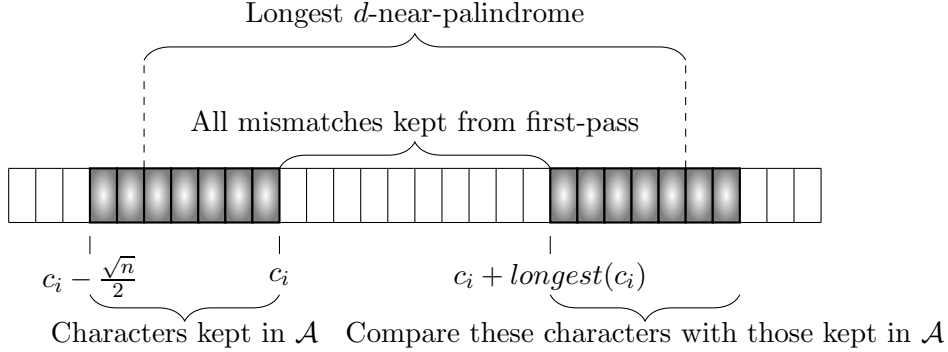
We first describe a structural property of a series of overlapping $d$-near-palindromes, showing that they are ``almost'' periodic. 
\begin{definition}
A string $S$ is said to have period $\pi$ if $S[j]=S[j+\pi]$ for all $j=1,\ldots,|S|-\pi$.
\end{definition}
The following structural result is a generalization of a structural result about palindromes from \cite{BerenbrinkEMA14} and demonstrates two properties. 
The first property shows that the midpoints of long near-palindromes are equally spaced, and thus the entire set can be represented succinctly after the first pass, even if it is linear in size. 
The second property shows a repetitive nature of the string that allows the fingerprint reconstruction of many substrings just by storing a small number of fingerprints. 
\newcommand{\lemperiodic}{Let $m_1<m_2<\ldots<m_h$ be indices in $S$ that are consecutive midpoints of $d$-near-palindromes of length $\ell^*$, for some integer $\ell^*>0$. If $m_h-m_1\le\ell^*$, then
\begin{enumerate}
\item
\label{condition1}
$m_1,m_2,\ldots,m_h$ are equally spaced in $S$, so that $|m_2-m_1|=|m_i-m_{i+1}|$ for all $i\in[h-1]$.
\item
\label{condition2}
For each $1\le i\le h$, there exists string $E_i$ with at most $d$ nonzero entries such that $E_i+S[m_1+1,m_i]$ is a prefix of $ww^Rww^R\ldots$ of length at least $\ell^*$, for some string $w$ of length $|w|=m_2-m_1$.
\end{enumerate}
}
\begin{lemma} \lemlab{lem:periodic}
\lemperiodic
\end{lemma}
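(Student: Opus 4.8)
The plan is to mirror the exact-palindrome structure theorem of \cite{BerenbrinkEMA14} and then track the mismatches carefully. For each $i$, write the center of symmetry of the near-palindrome at $m_i$ as $m_i+\tfrac12$ and let $\rho_i(p)=2m_i+1-p$ be the reflection about it, so that $S$ being a $d$-near-palindrome of length $\ell^*$ about $m_i$ means $S[p]=S[\rho_i(p)]$ for every $p$ in the block $[m_i-\tfrac{\ell^*}{2}+1,\,m_i+\tfrac{\ell^*}{2}]$, with at most $d$ exceptional mismatch pairs. Set $g=m_2-m_1$. The key algebraic observation is that $\rho_2\circ\rho_1$ is the translation $p\mapsto p+2g$; composing the two reflections therefore shows that $S$ agrees with its shift by $2g$ throughout the common region, up to the mismatches contributed by the two near-palindromes. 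This approximate period $2g$ drives both parts, and the reflections $\rho_1,\rho_2$ generate an infinite-dihedral action whose reflection centers sit on the lattice $m_1+\tfrac12+\mathbb{Z}g$.

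\textbf{Part (\ref{condition2}).} I would first produce a single exact period-$2g$ palindrome that is $d$-close to $S$ on the overlap. Define $w=S[m_1+1,m_2]$ (so $|w|=g$) and $W=ww^Rww^R\ldots$ of length $\ell^*$; since an even-length palindrome of length $2g$ is exactly $vv^R$ with $v$ its first half, $W$ is the unique period-$2g$ string that is simultaneously an exact palindrome and extends $w$. Taking $E_i$ to be the string that rewrites $S[m_1+1,m_i]$ into the length-$(m_i-m_1)$ prefix of $W$, the block decomposition of $W$ shows, block by block, that $S[m_1+1,m_i]$ can disagree with this prefix only where a near-palindrome relation fails: the first block is $w$ by definition, the second block equals $w^R$ off exactly the mismatches of the near-palindrome at $m_2$, and each higher block is controlled by the approximate period. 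The content of the claim is that all of these disagreements collapse into a single pool of at most $d$ positions, so that $\|E_i\|_0\le d$.

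\textbf{Part (\ref{condition1}).} Granting the period-$2g$ palindrome $W$ from Part (\ref{condition2}), I would observe that $W=ww^Rww^R\ldots$ has exact palindromic centers precisely at the lattice points $m_1+\tfrac12+kg$, spaced $g$ apart, and at no others (unless $w=w^R$, in which case $g$ would not have been the minimal gap). Because each $m_i$ is the midpoint of a genuine length-$\ell^*$ near-palindrome of $S$, and $S$ lies within $d$ of $W$ on a region of length at least $\ell^*$—here the hypothesis $m_h-m_1\le\ell^*$ guarantees the overlap is long—an off-lattice center would force $W$ to be almost palindromic about a non-center, creating far more than $d$ mismatches. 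Hence every $m_i$ lies on $\{m_1+kg\}$, and since the $m_i$ are \emph{consecutive} midpoints with first gap exactly $g$, no lattice point is skipped; thus $|m_{i+1}-m_i|=g$ for all $i$, which is the asserted equal spacing.

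\textbf{Main obstacle.} The delicate point is the error bookkeeping in Part (\ref{condition2}): a naive block-by-block count charges up to $d$ mismatches to each of the $h-1$ near-palindromes and yields a bound of order $hd$ rather than $d$. The resolution I would pursue is to show that the mismatch pattern is itself $2g$-periodic, so that the defects certified by the near-palindromes at $m_2,\ldots,m_h$ are exactly the translates (by multiples of $2g$) of the at most $d$ mismatches of the near-palindrome at $m_1$; equivalently, $S$ agrees with the symmetrization of the \emph{single} near-palindrome at $m_1$ off a set that is invariant under the period. Establishing this consistency—promoting the approximate period to an exact statement about where $S$ may deviate from $W$—is the crux, and is where the long overlap $m_h-m_1\le\ell^*$ together with $d=o(\sqrt n)$ must be leveraged to rule out spurious, non-periodic mismatches.
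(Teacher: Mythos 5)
Your high-level picture (the two reflections about $m_1$ and $m_2$ compose to a translation by $2g$, and $S$ is close to $ww^Rww^R\ldots$) matches the paper's, but both halves of your plan have genuine gaps, and in each case the missing ingredient is the hypothesis that $m_1,\ldots,m_h$ are \emph{consecutive} midpoints, which the paper exploits quite differently than you do. For part (1), your claim that an off-lattice midpoint would force $W$ to be ``almost palindromic about a non-center, creating far more than $d$ mismatches'' does not hold as stated: if $w$ is itself within $O(d)$ of a palindrome, then $W$ admits off-lattice near-centers with only $O(d)$ mismatches, and your parenthetical escape (minimality of the gap fails only when $w=w^R$ exactly) does not cover this near-palindromic case. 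The paper's argument is different: assuming $m_j-m_1=qg+r$ with $0<r<g$, the inductive periodic structure forces an exact palindromic prefix of $ww^R$ or $w^Rw$ of length $2r$, which transports through the near-periodic string to a $d$-near-palindrome of length at least $\ell^*$ whose midpoint lies in $[m_1+1,m_1+r]\subseteq[m_1+1,m_2-1]$ --- contradicting that no midpoint exists strictly between the consecutive midpoints $m_1$ and $m_2$. Degenerate $w$ is thus excluded by the absence of intermediate midpoints, not by a mismatch count inside $W$.

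For part (2), you correctly identify the accumulation problem (naive block-by-block accounting charges up to $d$ per near-palindrome, giving $O(hd)$), but your proposed fix --- that the mismatch set is exactly $2g$-periodic, so all defects are translates of the at most $d$ mismatches at $m_1$ --- is left unproven, is stronger than needed, and is likely false, since the mismatch positions of distinct near-palindromes need not align under translation. The resolution in the paper is bookkeeping, not periodicity of defects: the quantifier in part (2) is per-$i$, with no single correction shared across all $i$, and in the induction the string $E_j$ is extracted afresh from the one $d$-near-palindrome centered at $m_j$ (whose window reaches back past $m_1$ because $m_h-m_1\le\ell^*$), so each $E_j$ inherits only that palindrome's own budget of at most $d$ mismatches and nothing accumulates over $j$. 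Two further issues: your appeal to $d=o(\sqrt n)$ is extraneous (the lemma is purely combinatorial and the paper's proof never uses it), and your plan is circular as ordered --- you derive part (1) from the $W$ built in part (2), while controlling the blocks of $W$ beyond the first two already presupposes the equal spacing of part (1); the paper avoids this by establishing (1) and (2) simultaneously in a single induction on $j$.
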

\begin{proof}
Note that $m_2$ is a midpoint of a $d$-near-palindrome of length at least $\ell^*$, so there exists a string $E_2$ with at most $d$ nonzero entries such that $E_2+S[m_1+1,2m_2-m_1]$ is a palindrome of length at least $\ell^*$.

Inductively, we assume that \ref{condition1} and \ref{condition2} hold up to $m_{j-1}$. 
First, we argue that $|m_j-m_1|$ is a multiple of $|m_2-m_1|=|w|$. 
Suppose, by way of contradiction, that $m_j=m_1+|w|\cdot q+r$ for some integers $q\ge 0$ and $0<r<|w|$.
Since $m_h-m_1\le\ell^*$, then $[m_1+1,m_{j-1}+\ell^*]$ contains $m_j$. 
From our inductive hypothesis, $m_j-r$ is an index where either $w$ or $w^R$ begins. 
This implies that the prefix of $ww^R$ or $w^Rw$ of size $2r$ is a palindrome. 
By assumption, there exists $E_{j-1}$ with at most $d$ nonzero entries such that $E_{j-1}+S[m_1+1,m_h]$ is a prefix of $ww^Rww^R\ldots$ of length at least $\ell^*$. 
Thus, the interval $[m_1+1,m_1+r]$ contains a midpoint of a $d$-near-palindrome with length at least $\ell^*$. 
However, there is no such midpoint in the interval $[m_1+1,m_2-1]$, an interval with length greater than $r$, which is a contradiction. 

Thus, $m_j=m_{j-1}+|w|\cdot q$. 
Since $m_j$ is a midpoint of a $d$-near-palindrome, then \ref{condition2} follows. 
But then $m_{j-1}+|w|$ is the midpoint of a $d$-near-palindrome of length at least $\ell^*$. 
Specifically, $S[m_{j-1}+|w|-\ell^*+1,m{j-1}+|w|+\ell^*]$ is the desired $d$-near-palindrome. 
Hence, $m_j=m_{j-1}+|w|$, satisfying \ref{condition1}, and the induction is complete.
\end{proof}
In the first pass, we specify that the algorithm has sliding window size $2\sqrt{n}$. 
Thus, if the longest $d$-near-palindrome has length less than $2\sqrt{n}$, the algorithm can identify it.
Otherwise, if the longest $d$-near-palindrome has length at least $2\sqrt{n}$, then the algorithm finds at most $\frac{\sqrt{n}}{2}$ non-overlapping $d$-near-palindromes of length at least $\ell-\eps\sqrt{n}$. 
Hence, $\bigO{d^2\sqrt{n}\log^6 n}$ is enough space to store the fingerprints for the substrings between any two candidate midpoints, as well as between checkpoints $s_i\in\mathcal{L}$ and midpoints. 
The first pass of the algorithm appears below, omitting the details for when the longest $d$-near-palindrome has length at most $2\sqrt{n}$ and is therefore recognized by the sliding window.

\begin{mdframed}
First pass:
\begin{enumerate}
\item
Read $S[x]$. Set $m=x-\sqrt{n}$. Update the sliding window to $S[x-2\sqrt{n}, x]$.
\item
Update the \emph{Master Fingerprints} to be $\mathcal{F}^F(1,x)$ and $\mathcal{F}^R(1,x)$:
\begin{enumerate}
\item
For the first-level fingerprints: for every $p\in\mathcal{P}$, let $r\equiv x \mod p$, and increment $\phi_{r,p}^F(S)$ by $S[x]\cdot B^{\ceil{x/p}}\bmod{P}$ and increment $\phi_{r,p}^R(S)$ by $S[x]\cdot B^{-\ceil{x/p}}\bmod{P}$.
\item
For the second-level fingerprints: for every $p\in\mathcal{P}$ and $q\in\mathcal{Q}$, let $r'\equiv x\mod pq$, and increment $\phi_{r',pq}^F(S)$ by $S[x]\cdot B^{\ceil{x/(pq)}}\bmod{P}$ and increment $\phi^R_{r',pq}(S)$ by 
\newline $S[x]\cdot B^{-\ceil{x/(pq)}}\bmod{P}$.
\end{enumerate}
\item
If $x$ is a multiple of $\flr{\frac{\eps\sqrt{n}}{2}}$, then add the checkpoint $c=x$ to $\mathcal{C}$. Set $\fingerprint{c}=\mathcal{F}^F(1,x)\cup \mathcal{F}^R(1,x)$, $longest(c)=0$.
\item
For every checkpoint $c\in\mathcal{C}$ such that $x-c\ge\tilde{\ell}-\frac{\sqrt{n}}{2}$, we call procedure $\check$ to see if $S[c,x]$ is a near-palindrome. 
If $S[c,x]$ is a near-palindrome, then set $longest(c)=x-c$. If $x-c>\tilde{\ell}$, set $\tilde{\ell}=x-c$.
\item
If $[x-2\sqrt{n},x]$ is a $d$-near-palindrome:
\begin{enumerate}
\item
Add $m$ to $L_{c'}$, the list of candidate midpoints for the most recent checkpoint $c'$.
\item
If $|L_{c'}|=0$, store the first-level and second-level fingerprints of $S[c'+1,x]$.
\item
Else, let $m_i$ be the largest index in $L_{c'}$.
\begin{enumerate}
\item
If the first-level and second-level fingerprints of $S[m_i,m]$ match those of some other entry $S[m_j,m_{j+1}]$ stored in $L_{c'}$ and the set of indices for $m_j$ is less than $d$, add $m$ to the set of indices for $m_j$.
\item
Else, if the first-level and second-level fingerprints of $S[m_i,m]$ do not match those of any other entry stored in $L_{c'}$, then add the first-level and second-level fingerprints of $S[m_i,m]$ into $L_{c'}$, along with the index $m$.
\end{enumerate}
\end{enumerate}
\item
If $x=n$, then remove all $c\in\mathcal{C}$ such that $longest(c)<\tilde{\ell}-\sqrt{n}$.
Report $\tilde{\ell}$, $\mathcal{C}$, and $\{L_c\}$.
\end{enumerate}
\end{mdframed}
Before we proceed to the second pass, we describe procedure $\recover(m_i,m_j,L_c)$ which either outputs that $S[m_i,m_j]$ is not a $d$-near-palindrome, or returns the number of mismatches, as well as their indices. 
The procedure crucially relies on structural result from \lemref{lem:periodic} to reconstruct the fingerprints of $S[m_i,m_j]$ from fingerprints stored by the first pass. 
From the reconstructed fingerprints, the subroutine can then determine whether $S[m_i,m_j]$ is a $d$-near-palindrome, and identify the location of the mismatches, if necessary. 
The details of procedure $\recover$ in full is below:

\begin{mdframed}
$\recover(m_i,m_j,L_c)$: (determines whether $S[m_i,m_j]$ is a $d$-near-palindrome and outputs the indices and hence, number, of mismatches if it is)
\begin{enumerate}
\item
Construct the first-level and second-level fingerprints of $S[m_i,m_j]$:
\begin{enumerate}
\item
$\phi^F_{a,b}(S[m_i,m_j])=\sum_{k=i}^{j-1}B^{t}\cdot\phi^F_{a,b}(S[m_k,m_{k+1}])\bmod{P}$, where $t$ is the length of the subpattern $S_{a,b}[m_k,m_{k+1}]$.
\item
$\phi^R_{a,b}(S[m_i,m_j])=\sum_{k=i}^{j-1}B^{-t}\cdot\phi^R_{a,b}(S[m_k,m_{k+1}])\bmod{P}$, where $t$ is the length of the subpattern $S^R_{a,b}[m_k,m_{k+1}]$.
\end{enumerate}
\item
Call procedure $\check(S[m_i,m_j])$ to see whether $S[m_i,m_j]$ is a $d$-near-palindrome:
\begin{enumerate}
\item
If $S[m_i,m_j]$ is not a $d$-near-palindrome, reject $S[m_i,m_j]$.
\item
Else, accept $S[m_i,m_j]$. Output $\mathcal{I}$, the set of mismatches output by $\check$.
\end{enumerate}
\end{enumerate}
\end{mdframed}
Before the second pass, we first prune the list of checkpoints $\mathcal{C}$ to greedily include only those who are the starting indices for $d$-near-palindromes of length at least $\tilde{\ell}-\frac{\sqrt{n}}{2}$ and do not overlap with other $d$-near-palindromes already included in the list.
In the second pass, the algorithm keeps track of the $\frac{\sqrt{n}}{2}$ characters before $c$, for each starting index $c\in\mathcal{C}$. 
We call procedure $\recover$ to fully recover the mismatches in a region following $c$.
After reading the last symbol in the region, we compare each subsequent symbol with the corresponding symbol before $c$, counting the total number of mismatches. 
When the total number of mismatches reaches $d+1$ after seeing character $S[c+k+j+1]$, where $k$ is the size of the region, then the previous symbol is the end of the near-palindrome. 
Hence, the near-palindrome is $S[c-j,c+k+j]$, and if $k+2j>\tilde{l}$, then we update the information for $\tilde{\ell}$ accordingly. 
For an example, see \figref{fig:recoverb}.
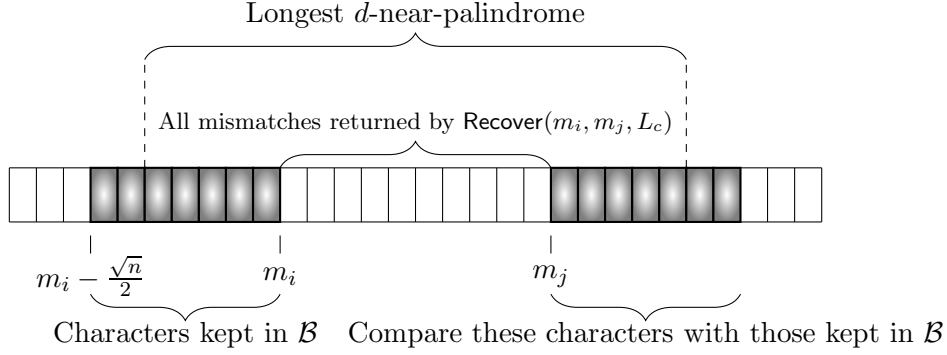
\begin{figure*}[htb]
\centering
\begin{tikzpicture}[scale=0.6]
\draw (1cm,0cm) -- (19cm,0cm);
\draw (1cm,1.2cm) -- (19cm,1.2cm);

\draw (1cm,0cm) -- (1cm,1.2cm);
\draw (1.6cm,0cm) -- (1.6cm,1.2cm);
\draw (2.2cm,0cm) -- (2.2cm,1.2cm);
\draw (2.8cm,0cm) -- (2.8cm,1.2cm);
\draw (3.4cm,0cm) -- (3.4cm,1.2cm);
\draw (4cm,0cm) -- (4cm,1.2cm);
\draw (4.6cm,0cm) -- (4.6cm,1.2cm);
\draw (5.2cm,0cm) -- (5.2cm,1.2cm);
\draw (5.8cm,0cm) -- (5.8cm,1.2cm);
\draw (6.4cm,0cm) -- (6.4cm,1.2cm);
\draw (7cm,0cm) -- (7cm,1.2cm);
\draw (7.6cm,0cm) -- (7.6cm,1.2cm);
\draw (8.2cm,0cm) -- (8.2cm,1.2cm);
\draw (8.8cm,0cm) -- (8.8cm,1.2cm);
\draw (9.4cm,0cm) -- (9.4cm,1.2cm);
\draw (10cm,0cm) -- (10cm,1.2cm);
\draw (10.6cm,0cm) -- (10.6cm,1.2cm);
\draw (11.2cm,0cm) -- (11.2cm,1.2cm);
\draw (11.8cm,0cm) -- (11.8cm,1.2cm);
\draw (12.4cm,0cm) -- (12.4cm,1.2cm);
\draw (13cm,0cm) -- (13cm,1.2cm);
\draw (13.6cm,0cm) -- (13.6cm,1.2cm);
\draw (14.2cm,0cm) -- (14.2cm,1.2cm);
\draw (14.8cm,0cm) -- (14.8cm,1.2cm);
\draw (15.4cm,0cm) -- (15.4cm,1.2cm);
\draw (16cm,0cm) -- (16cm,1.2cm);
\draw (16.6cm,0cm) -- (16.6cm,1.2cm);
\draw (17.2cm,0cm) -- (17.2cm,1.2cm);
\draw (17.8cm,0cm) -- (17.8cm,1.2cm);
\draw (18.4cm,0cm) -- (18.4cm,1.2cm);
\draw (19cm,0cm) -- (19cm,1.2cm);
\filldraw[thick, shading=radial,inner color=white, outer color=gray!, opacity=1] (2.8cm,0cm) rectangle+(0.6cm,1.2cm);
\filldraw[thick, shading=radial,inner color=white, outer color=gray!, opacity=1] (3.4cm,0cm) rectangle+(0.6cm,1.2cm);
\filldraw[thick, shading=radial,inner color=white, outer color=gray!, opacity=1] (4cm,0cm) rectangle+(0.6cm,1.2cm);
\filldraw[thick, shading=radial,inner color=white, outer color=gray!, opacity=1] (4.6cm,0cm) rectangle+(0.6cm,1.2cm);
\filldraw[thick, shading=radial,inner color=white, outer color=gray!, opacity=1] (5.2cm,0cm) rectangle+(0.6cm,1.2cm);
\filldraw[thick, shading=radial,inner color=white, outer color=gray!, opacity=1] (5.8cm,0cm) rectangle+(0.6cm,1.2cm);
\filldraw[thick, shading=radial,inner color=white, outer color=gray!, opacity=1] (6.4cm,0cm) rectangle+(0.6cm,1.2cm);
\filldraw[thick, shading=radial,inner color=white, outer color=gray!, opacity=1] (13cm,0cm) rectangle+(0.6cm,1.2cm);
\filldraw[thick, shading=radial,inner color=white, outer color=gray!, opacity=1] (13.6cm,0cm) rectangle+(0.6cm,1.2cm);
\filldraw[thick, shading=radial,inner color=white, outer color=gray!, opacity=1] (14.2cm,0cm) rectangle+(0.6cm,1.2cm);
\filldraw[thick, shading=radial,inner color=white, outer color=gray!, opacity=1] (14.8cm,0cm) rectangle+(0.6cm,1.2cm);
\filldraw[thick, shading=radial,inner color=white, outer color=gray!, opacity=1] (15.4cm,0cm) rectangle+(0.6cm,1.2cm);
\filldraw[thick, shading=radial,inner color=white, outer color=gray!, opacity=1] (16cm,0cm) rectangle+(0.6cm,1.2cm);
\filldraw[thick, shading=radial,inner color=white, outer color=gray!, opacity=1] (16.6cm,0cm) rectangle+(0.6cm,1.2cm);

\draw [decorate,decoration={brace,mirror,amplitude=10pt}](13cm,1.2cm) -- (7cm,1.2cm);
\draw (13cm,-0.3cm) -- (13cm,-0.7cm);
\node at (13cm, -1.2cm){$m_j$};
\draw (7cm,-0.3cm) -- (7cm,-0.7cm);
\node at (7cm, -1.2cm){$m_i$};
\node at (10cm, 2.2cm){\footnotesize{All mismatches returned by $\recover(m_i,m_j,L_c)$}};
\draw [decorate,decoration={brace,mirror,amplitude=10pt}](16cm,3.6cm) -- (4cm,3.6cm);
\node at (10cm, 4.6cm){Longest $d$-near-palindrome};
\draw [dashed] (16cm,1.2cm) -- (16cm, 3.6cm);
\draw [dashed] (4cm,1.2cm) -- (4cm, 3.6cm);
\draw [decorate,decoration={brace,mirror,amplitude=10pt}](2.8cm,-1.6cm) -- (7cm,-1.6cm);
\draw (2.8cm,-0.3cm) -- (2.8cm,-0.7cm);
\node at (2.8cm, -1.2cm){$m_i-\frac{\sqrt{n}}{2}$};
\node at (4.9cm, -2.5cm){Characters kept in $\mathcal{B}$};
\draw [decorate,decoration={brace,mirror,amplitude=10pt}](13cm,-1.6cm) -- (17.2cm,-1.6cm);
\node at (15.1cm, -2.5cm){Compare these characters with those kept in $\mathcal{B}$};
\end{tikzpicture}
\caption{The second pass allows us to find the longest $d$-near-palindrome by explicitly comparing characters.}\figlab{fig:recoverb}
\end{figure*}
We describe the second algorithm below, again omitting the case for when the longest $d$-near-palindrome has length at most $2\sqrt{n}$ and is therefore immediately recognized by the sliding window in the first pass. 
Recall that $\mathcal{C}$ has already been pruned in the first pass to only include checkpoints serving as the start of $d$-near-palindromes of length at least $\tilde{\ell}-\sqrt{n}$. 
We further prune $\mathcal{C}$ by removing checkpoints causing overlapping $d$-near-palindromes.

\begin{mdframed}
Preprocessing:
\begin{itemize}
\item[]
For each $c\in\mathcal{C}$:
\begin{itemize}
\item[]
If $c'<c<c'+\tilde{\ell}-\sqrt{n}$ for some other $c'\in\mathcal{C}$, then remove $c$. 
\end{itemize}
\end{itemize}
Second pass:
\begin{enumerate}
\item
Maintain a sliding window of size $2\sqrt{n}$ and set $\ell=\tilde{\ell}$ from the first pass. 
\item
Initialize $\mathcal{A}$ to be an empty array of size $\sqrt{n}$. It will dynamically contain the $\frac{\sqrt{n}}{2}$ characters before $c\in\mathcal{C}$ reported in the first pass.
\item
Initialize $\mathcal{B}$ to be an empty array of size $d\sqrt{n}$. It will dynamically contain the $\frac{\sqrt{n}}{2}$ characters before each of the at most $d$ different substrings between midpoints in each $L_c$.
\item
If $x=c-\frac{\sqrt{n}}{2}-j$ for some $c\in\mathcal{C}$ and $1\le j\le\frac{\sqrt{n}}{2}$, insert $S[x]$ into $\mathcal{A}$.
\item
If $x=m_i-\frac{\sqrt{n}}{2}-j$ for some $c\in\mathcal{C}, m_i\in L_c$ which has not been recorded, and $1\le j\le\frac{\sqrt{n}}{2}$, insert $S[x]$ into $\mathcal{B}$.
\item
If there exists $m_i,m_j\in L_c$ for some $c\in\mathcal{C}$ such that $x=m_j$ and $x-m_i\ge\ell-\sqrt{n}$, then call procedure $\recover(m_i,x,L_c)$ is see whether $S[m_i,x]$ is a $d$-near-palindrome.
\begin{itemize}
\item[]
If $S[m_i,x]$ is a $d$-near-palindrome, allocate space for $mismatches((m_i+x)/2)$ and set it to be the number of mismatches in $S[m_i,x]$. Also, keep the indices of the mismatches returned by procedure $\recover(m_i,x,L_c)$.
\end{itemize}
\item
If there exists $m\in L_c$ for some $c\in\mathcal{C}$ such that $m+\frac{\ell}{2}<x<m+\frac{\ell}{2}+\frac{\sqrt{n}}{2}$ and $S[x]\neq S[m-(x-m)+1]$ (which is stored in $\mathcal{A}$):
\begin{enumerate}
\item
If $mismatches(m)<d$, then insert $x$ into the set of mismatches and increment $mismatches(m)$.
\item
If $mismatches(m)=d$:
\begin{enumerate}
\item
If $2x-2m-2>\tilde{\ell}$, set $\tilde{\ell}=2x-2m-2$, $start=2m-x+1$, and $\mathcal{M}$ to be the set of mismatches.
\item
Deallocate space for $mismatches(m)$.
\item
If $m$ is the largest midpoint in $L_c$, remove the $\frac{\sqrt{n}}{2}$ characters in $\mathcal{A}$ before $c$.
\end{enumerate}
\end{enumerate}
\item
If there exists $m_i,m_j\in L_c$ for some $c\in\mathcal{C}$ such that $m_j-m_i\ge\ell-\sqrt{n}$, $0<x-m_j<\frac{\sqrt{n}}{2}$, $mismatches((m_i+m_j)/2)\le d$ and $S[x]\neq S[m_i-(x-m_j)+1]$ (which is stored in $\mathcal{B}$):
\begin{enumerate}
\item
If $mismatches((m_i+m_j)/2)<d$, increment $mismatches((m_i+m_j)/2)$, and insert $x$ into the set of mismatches.
\item
If $mismatches((m_i+m_j)/2)=d$:
\begin{enumerate}
\item
If $2x-m_i-m_j>\tilde{\ell}$, set $\tilde{\ell}=2x-m_i-m_j$, $start=m_i+m_j-x$, and $\mathcal{M}$ to be the set of mismatches.
\item
Deallocate space for $mismatches((m_i+m_j)/2)$.
\item
If $x>c+\ell+\sqrt{n}$, remove the characters in $\mathcal{B}$ associated with $c$.
\end{enumerate}
\end{enumerate}
\item
If $x=n$, then output $\tilde{\ell}$, $start$, and $\mathcal{M}$.
\end{enumerate}
\end{mdframed}

The correctness of the algorithm follows from the first pass recognizing the longest $d$-near-palindrome, possibly with the exception of up to $\frac{\sqrt{n}}{2}$ characters before checkpoints (and the corresponding $\frac{\sqrt{n}}{2}$ characters at the end of the $d$-near-palindrome. 
The first pass can keep this information by storing at most $\bigO{d}$ fingerprints, by \lemref{lem:periodic}. 
Then, procedure $\recover$ can fully recover the mismatches found in the first pass by reconstructing the fingerprints. 
Since the second pass dynamically keeps the $\frac{\sqrt{n}}{2}$ characters before checkpoints and candidate midpoints, then the remaining (at most $\sqrt{n}$) characters of the longest $d$-near-palindrome are explicitly checked and recognized. 
Therefore, the second pass returns exactly the longest $d$-near-palindrome.

\begin{lemma}
The total space used by the algorithm is $\bigO{d^2\sqrt{n}\log^6 n}$ bits. The update time per arriving symbol is $\bigO{d^2\sqrt{n}\log^5 n}$.
\end{lemma}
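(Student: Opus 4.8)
The plan is to charge every object the algorithm stores or touches against two unit costs. A complete set of first- and second-level fingerprints consists of $\bigO{d\log^5 n}$ subpattern values, each an integer modulo $P\in[n^5,n^6]$ and hence $\bigO{\log n}$ bits; so one such set occupies $\bigO{d\log^6 n}$ bits and can be updated, reconstructed, or compared against another set in $\bigO{d\log^5 n}$ time. Every fingerprint object in the two passes (the master fingerprints, each $\fingerprint{c}$, each gap fingerprint stored in a list $L_c$, and the fingerprints rebuilt by $\recover$) is exactly one such set, so it suffices to bound the number of fingerprint sets retained and the number processed per arriving symbol.

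First I would dispatch the cheap components. The master fingerprints $\mathcal{F}^F,\mathcal{F}^R$ form a single set of $\bigO{d\log^6 n}$ bits. Since checkpoints are placed $\Theta(\sqrt n)$ positions apart over a stream of length $n$, there are $\bigO{\sqrt n}$ checkpoints, each storing one set $\fingerprint{c}$, for $\bigO{d\sqrt n\log^6 n}$ bits total. In the second pass, $\mathcal{A}$ holds $\frac{\sqrt n}{2}$ symbols and $\mathcal{B}$ holds the $\frac{\sqrt n}{2}$ symbols preceding each of the at most $d$ distinct inter-midpoint substrings of the active list, so the arrays use $\bigO{d\sqrt n}$ symbols; the sets $\mathcal{M}$ and the $mismatches(\cdot)$ counters carry only $\bigO{d}$ entries per active palindrome. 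All of these are dominated by the target bound.

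The crux is bounding the lists $L_c$, whose number of candidate midpoints can be $\Omega(n)$. Here I would invoke \lemref{lem:periodic}: the midpoints of overlapping $d$-near-palindromes of length $\ell^*$ lying within $\ell^*$ of one another are equally spaced, and the enclosed string is Hamming-close to a periodic string $ww^Rww^R\cdots$ whose block length $|w|$ equals the common midpoint spacing. Consequently almost every inter-midpoint block $S[m_i,m_{i+1}]$ equals $w$ or $w^R$ and therefore shares one of only two fingerprints; a block can deviate only where one of the $\bigO{d}$ error positions falls, so each list needs at most $\bigO{d}$ distinct gap-fingerprint sets, with the surviving midpoint indices recorded succinctly as an arithmetic progression. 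This is precisely the compression carried out in Step 5(c) of the first pass. Hence the lists store $\bigO{\sqrt n}\cdot\bigO{d}=\bigO{d\sqrt n}$ fingerprint sets, i.e. $\bigO{d^2\sqrt n\log^6 n}$ bits, which dominates every other term and yields the claimed space bound.

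For the update time, processing a symbol updates the master fingerprints in $\bigO{d\log^5 n}$ time, while the dominant cost is the reconstructions performed by $\recover$. A single call $\recover(m_i,m_j,L_c)$ rebuilds a fingerprint set as a telescoping sum over the $\bigO{\sqrt n}$ inter-midpoint blocks between $m_i$ and $m_j$, each block contributing a set operation of cost $\bigO{d\log^5 n}$, for $\bigO{d\sqrt n\log^5 n}$ per call; since a symbol may trigger such a reconstruction for each of the $\bigO{d}$ distinct substrings tracked in the active list, the per-symbol time is $\bigO{d}\cdot\bigO{d\sqrt n\log^5 n}=\bigO{d^2\sqrt n\log^5 n}$, subsuming the $\bigO{d\sqrt n\log^5 n}$ cost of the first-pass checkpoint checks. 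The main obstacle throughout is the list-size blow-up: without the periodic structure of \lemref{lem:periodic}, the candidate midpoints alone would force $\Omega(n)$ fingerprint sets, so establishing the $\bigO{d}$-per-list compression is where the real work lies, and the remaining counts are routine.
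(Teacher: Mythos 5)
Your accounting matches the paper's proof essentially step for step: the same $\bigO{d\log^6 n}$-bit cost per fingerprint set, the same $\bigO{\sqrt{n}}$ checkpoints each compressed to $\bigO{d}$ stored fingerprints via \lemref{lem:periodic}, and the same treatment of the arrays $\mathcal{A}$ and $\mathcal{B}$, yielding the claimed $\bigO{d^2\sqrt{n}\log^6 n}$ bits. The only cosmetic difference is that you charge the update time chiefly to the $\recover$ reconstructions while the paper charges it to comparing the up-to-$d$ fingerprints held at each of the $\sqrt{n}$ checkpoints, but both charging schemes are routine and yield the same $\bigO{d^2\sqrt{n}\log^5 n}$ bound (and you helpfully spell out the $\bigO{d}$-fingerprints-per-list compression that the paper's proof merely asserts).
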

\begin{proof}
Each of the Karp-Rabin fingerprints consist of $\bigO{d\log^5 n}$ integers modulo $P$. 
Since $P\in[n^5,n^6]$, then $\bigO{d\log^6 n}$ bits of space are necessary for each fingerprint. 
There are $\sqrt{n}$ checkpoints, each of which may require $d$ fingerprints due to the compression allowed by the structural result. 
Hence, the space used by the fingerprints across all checkpoints is $\bigO{d^2\sqrt{n}\log^6 n}$ bits. 
Note that the algorithm also keeps $2\sqrt{n}$ characters in $\mathcal{A}$ and $d\sqrt{n}$ characters in $\mathcal{B}$, so the space usage by the algorithm follows.

For each arriving symbol $S[x]$, the algorithm checks each checkpoint and possibly the fingerprints of each checkpoint to check whether the substring is a $d$-near-palindrome. 
Since there are $\sqrt{n}$ checkpoints, each containing up to $d$ fingerprints of size $\bigO{d\log^5 n}$, and each subpattern of a fingerprint may be compared in constant time, then the overall update time is $\bigO{d^2\sqrt{n}\log^5 n}$.
\end{proof}
\section{Lower Bounds}
\seclab{sec:lb}
\begin{remindertheorem}{\thmref{thm:lowerbounds}}
\lowerbounds
\end{remindertheorem}
\begin{proofof}{\thmref{thm:lowerbounds}}
By Yao's Minimax Principle \cite{Yao77}, to show a $\Omega(d\log n)$ lower bound for randomized algorithms, it suffices to show a distribution over inputs such that every deterministic algorithm using less than $\frac{d\log n}{3}$ bits of memory fails with probability at least $\frac{1}{n}$.

We use an approach similar to \cite{GawrychowskiMSU16} who showed lower bounds for palindromes. 
Let $X$ be the set of binary strings of length $\frac{n}{4}$ with $d$ many $1$'s. 
Given $x\in X$, let $Y_x$ be the set of binary strings of length $\frac{n}{4}$ with either $\HAM(x,y)=d$ or $\HAM(x,y)=d+1$. 
We pick $(x,y)$ uniformly at random from $(X,Y_x)$. 

\begin{lemma}
\lemlab{lem:lb:mem}
Given an input $x\circ y$, any deterministic algorithm $\mathcal{D}$ which uses less than $\frac{d\log n}{3}$ bits of memory cannot correctly output whether $\HAM(x,y)=d$ or $\HAM(x,y)=d+1$ with probability at least $1-\frac{1}{n}$.
\end{lemma}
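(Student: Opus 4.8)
The plan is to prove \lemref{lem:lb:mem} by an encoding (fooling–set) argument over the hard distribution $(X,Y_x)$, in the style of \cite{GawrychowskiMSU16}. Fix any deterministic algorithm $\mathcal{D}$ that uses fewer than $\frac{d\log n}{3}$ bits. The central object is the memory configuration $\mu(x)$ that $\mathcal{D}$ holds immediately after reading the prefix $x$ (the first $n/4$ symbols). Since $\mathcal{D}$ is deterministic, its entire behavior on the suffix $y$ depends only on $\mu(x)$; equivalently, each configuration $\mu$ commits $\mathcal{D}$ to a fixed answer function $g_\mu(y)\in\{d,d+1\}$, and correctness on the distribution demands $g_{\mu(x)}(y)=\HAM(x,y)$ for (almost) every $x$ and every $y\in Y_x$. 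The number of distinct configurations is at most $2^{d\log n/3}=n^{d/3}$.

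First I would lower bound the number of prefixes the algorithm must keep apart. We have $|X|=\binom{n/4}{d}\ge\left(\tfrac{n}{4d}\right)^{d}$, and since $d=o(\sqrt n)$ we have $\tfrac{n}{4d}>n^{1/3}$ for all large $n$, so $|X|>n^{d/3}$: strictly more prefixes than configurations, indeed by a factor that grows faster than any polynomial in $n$. By pigeonhole there exist distinct $x\ne x'$ with $\mu(x)=\mu(x')$, and more quantitatively all but a $o(1/n)$ fraction of $x\in X$ collide with some other prefix, so a uniformly random prefix is ``ambiguous'' with probability $1-o(1/n)$.

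The crux is to convert such a collision into actual error mass on $(X,Y_x)$. For colliding $x,x'$ the transcripts of $\mathcal{D}$ on $x\circ y$ and $x'\circ y$ coincide for every continuation $y$, so $\mathcal{D}$ outputs the same bit $g_\mu(y)$ on both prefixes. The single committed function $g_\mu$ must simultaneously equal $\HAM(x,\cdot)$ on $Y_x$ and $\HAM(x',\cdot)$ on $Y_{x'}$; I would argue that when many prefixes are routed to the same $\mu$, no choice of $g_\mu$ can satisfy all these constraints, so a $\ge\frac1n$ fraction of suffixes $y$ must witness $g_\mu(y)\ne\HAM(x,y)$. Averaging this error over the $1-o(1/n)$ mass of ambiguous prefixes then yields overall failure probability at least $\frac1n$, contradicting the assumed success probability $1-\frac1n$ and completing the Yao argument \cite{Yao77} of \thmref{thm:lowerbounds}.

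The main obstacle is exactly this separation step, because of a weight/parity constraint: both $x$ and $x'$ carry precisely $d$ ones, so $\HAM(x,y)\equiv\HAM(x',y)\pmod 2$, and hence $\HAM(x,y)=\HAM(x',y)$ on the common domain $Y_x\cap Y_{x'}$. Thus the two target functions agree wherever both are legal, and the separation must instead be driven by ``one-sided'' suffixes that are legal for one prefix but not the other, exploiting the positions in the symmetric difference of $x$ and $x'$ (and the reversal built into the near-palindrome reduction of \thmref{thm:lowerbounds}, which turns a comparison of $x$ against $y$ into the palindrome defect). Pinning down a clean $\frac1n$ lower bound on the measure of such separating suffixes, uniformly over all colliding pairs, is the technical heart of the proof; by comparison the state-counting and pigeonhole steps are routine.
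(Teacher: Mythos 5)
Your opening moves (configuration counting, $|X|=\binom{n/4}{d}\ge(n/(4d))^d$ versus $n^{d/3}$ states, pigeonhole to get colliding prefixes) coincide with the paper's, but the step you explicitly defer---``pinning down a clean $\frac1n$ lower bound on the measure of separating suffixes''---is a genuine gap, and your own parity observation shows it cannot be closed for this distribution. Push your observation one step further: a one-sided suffix $y\in Y_x\setminus Y_{x'}$ imposes a constraint on $g_\mu(y)$ only through $x$, so one-sided suffixes can never produce a conflict either; combined with your remark that the target functions agree on $Y_x\cap Y_{x'}$, the entire constraint family on $g_\mu$ is satisfiable. Indeed it is satisfied by a single $\mu$-independent rule: since every $x\in X$ has weight exactly $d$, we have $\HAM(x,y)\equiv d+|y|\pmod 2$, so under the promise $\HAM(x,y)\in\{d,d+1\}$ the correct answer is determined by the parity of $|y|$ alone (it is $d$ iff $|y|$ is even). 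A deterministic streaming algorithm maintaining a position counter and one parity bit---$\bigO{\log n}$ bits, below the $\frac{d\log n}{3}$ budget once $d\ge 4$---is therefore correct with probability $1$ on the support. So the ``technical heart'' you leave open is not merely hard: no argument can extract $\frac{1}{n}$ error mass from the $\{d,d+1\}$ distribution, and \lemref{lem:lb:mem} as stated is false for growing $d$.

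It is worth knowing that the paper's own proof takes exactly the one-sided route you gesture at and stumbles on this same point: for a colliding pair it picks $y$ agreeing with $x$ on $\mathcal{I}$ with all $d$ mismatches outside, so $\HAM(x,y)=d$ while $\HAM(x',y)=d+\HAM(x,x')\ge d+2$, and then charges an error to $s(x,y)$ or $s(x',y)$. But $\HAM(x',y)\ge d+2$ means $y\notin Y_{x'}$, so the input $s(x',y)$ carries zero probability under the stated distribution, and the claimed count of $\frac{|X|}{4}\binom{n/4-2d}{d}$ support errors does not follow---consistent with the parity algorithm above. The lemma (and with it \thmref{thm:lowerbounds}) can be repaired by redefining $Y_x$ to distances $\{d,d+2\}$: parities then match, yet the answers may differ, and for any colliding pair with $\HAM(x,x')=2t$ one can choose $y$ to disagree with $x$ on $t-1$ of the $2t$ positions of the symmetric difference and on $d-t+1$ positions outside $\mathcal{I}$, giving $\HAM(x,y)=d$ and $\HAM(x',y)=d+2$, so that $y\in Y_x\cap Y_{x'}$ and the shared configuration forces an error on a genuine support point. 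In short, your proposal is incomplete at its declared crux, but the obstruction you isolated is real: it defeats both your plan and, as written, the paper's proof of this lemma.
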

\begin{proof}
Note that $|X|=\binom{n/4}{d}$.
By Stirling's approximation, $|X|\ge\left(\frac{n}{4d}\right)^d$. 
Since $d=o(\sqrt{n})$, then $|X|\ge\left(\frac{n}{16}\right)^{d/2}$.

Because $\mathcal{D}$ uses less than $\frac{d\log n}{3}$ bits of memory, then $\mathcal{D}$ has at most $2^{\frac{d\log n}{3}}=n^{d/3}$ unique memory configurations. 
Since $|X|\ge\left(\frac{n}{16}\right)^{d/2}$, then there are at least $\frac{1}{2}(|X|-n^{d/3})\ge\frac{|X|}{4}$ pairs $x,x'$ such that $\mathcal{D}$ has the same configuration after reading $x$ and $x'$.
We show that $\mathcal{D}$ errs on a significant fraction of these pairs $x,x'$.

Let $\mathcal{I}$ be the positions where either $x$ or $x'$ take value $1$, so that $d+1\le|\mathcal{I}|\le 2d$.
Observe that if $\HAM(x,y)=d$, but $x$ and $y$ do not differ in any positions of $\mathcal{I}$, then $\HAM(x',y)>d$. 
Recall that $\mathcal{D}$ has the same configuration after reading $x$ and $x'$, so then $\mathcal{D}$ has the same configuration after reading $s(x,y)$ and $s(x',y)$.
But since $\HAM(x,y)=d$ and $\HAM(x',y)>d$, then the output of $\mathcal{D}$ is incorrect for either $s(x,y)$ or $s(x',y)$.

For each pair $(x,x')$, there are $\binom{n/4-|\mathcal{I}|}{d}\ge\binom{n/4-2d}{d}$ such $y$ with $\HAM(x,y)=d$, but $x$ and $y$ do not differ in any positions of $\mathcal{I}$. 
Hence, there are $\frac{|X|}{4}\binom{n/4-2d}{d}$ strings $s(x,y)$ for which $\mathcal{D}$ errs. 
We note that there is no overcounting because the output of $\mathcal{D}$ can be correct for at most one $s(x_i,y)$ for all $x_i$ mapped to the same configuration. 
Recall that $y$ satisfies either $\HAM(x,y)=d$ or $\HAM(x,y)=d+1$ so that there are $|X|\left(\binom{n/4}{d}+\binom{n/4}{d+1}\right)$ strings $s(x,y)$ in total. 
Thus, the probability of error is at least
\begin{align*}
\frac{\frac{|X|}{4}\binom{n/4-2d}{d}}{|X|\left(\binom{n/4}{d}+\binom{n/4}{d+1}\right)}
&=\frac{1}{4} \cdot \frac{\binom{n/4-2d}{d}}{\binom{n/4+1}{d+1}}
=\frac{(d+1)}{4}\frac{(n/4-3d+1)\ldots (n/4-2d)}{(n/4-d+1) \ldots  (n/4+1)}\\
&\ge\frac{d+1}{n+4}\left(\frac{n/4-3d+1}{n/4-d+1}\right)^d =\frac{d+1}{n+4}\left(1-\frac{2d}{n/4-d+1}\right)^d \\
&\ge\frac{d+1}{n+4}\left(1-\frac{2d^2}{n/4-d+1}\right)\ge\frac{1}{n}
\end{align*}
where the last line holds for large $n$, from Bernoulli's Inequality and $d=o(\sqrt{n})$.
\end{proof}
Define an infinite string $1^10^11^20^21^30^3\ldots$, and let $\nu$ be the prefix of length $\frac{n}{4}$. 
Given $x$ and $y$ from the above distribution, define string $s(x,y)=\nu^R xy^R\nu$ so that $s(x,y)$ is a $d$-near-palindrome of length $n$ if $\HAM(x,y)\le d$. 
\newcommand{\lemseparate}{If $\HAM(x,y)\ge d+1$, then the longest $d$-near-palindrome of $s(x,y)$ has length at most $200d^2+\frac{n}{2}$.
}
\begin{lemma} \lemlab{lem:separate}
\lemseparate
\end{lemma}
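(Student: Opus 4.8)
The plan is to show that any $d$-near-palindrome substring $P$ of $s(x,y)=\nu^R x y^R \nu$ whose length $L$ exceeds $\frac n2+200d^2$ would be forced to contain more than $d$ mismatches, contradicting that it is a $d$-near-palindrome. Since any palindrome of length at most $\frac n2$ already meets the claimed bound, I only need to rule out the case $L>\frac n2$, in which $P$ straddles the midpoint and can be described by its center $\gamma$ together with the shift $\delta=\gamma-\frac{n+1}{2}$ of that center from the true middle. Because $s(x,y)^R=\nu^R y x^R\nu$ has the same shape with the roles of $x$ and $y$ interchanged, and $\HAM(x,y)$ is symmetric, I may assume $\delta\ge 0$, i.e. the center lies in the right half. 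The reflection $\rho(p)=2\gamma-p$ witnessing the near-palindrome then induces forced symbol comparisons that I classify according to whether the reflected positions land in the structured quarters ($\nu^R,\nu$) or in the sparse middle quarters ($x,y^R$).

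First I would dispose of $\delta=0$: here $\rho$ pairs position $\frac n4+i$ (holding $x[i]$) with position $\frac{3n}4+1-i$ (holding $y[i]$), so the middle half of $P$ alone contributes exactly $\HAM(x,y)\ge d+1$ mismatches, already too many. This is the ``designed'' obstruction and is immediate.

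The heart is $\delta\ge 1$, where I would exploit two features of $\nu=1^10^11^20^2\cdots$: (i) its maximal runs have strictly increasing lengths, the run through position $q$ having length $\Theta(\sqrt q)$; and (ii) among its first $M$ symbols exactly $\frac M2\pm O(\sqrt M)$ are $1$'s. When $P$ still reaches into $\nu^R$ (which one checks happens whenever $\delta\lesssim d^2$), the reflection carries a length-$M$ prefix window of $\nu^R$ onto $\nu$ with relative shift $2\delta$, so palindromicity demands $\nu[m]=\nu[m+2\delta]$ throughout; counting disagreements run-boundary by run-boundary yields $\Omega(\delta\sqrt M)$ mismatches when $2\delta\le\sqrt M$ and $\Omega(M)$ when $2\delta>\sqrt M$, and the hypothesis $L>\frac n2+200d^2$ forces $M=\Omega(d^2)$, so either estimate exceeds $d$. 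When instead $\delta$ is so large that the left end of $P$ has slid out of $\nu^R$ into the $x$-quarter, I would use feature (ii): the first $\Theta(\delta)$ symbols of $\nu$ (the right end of $P$) carry $\Theta(\delta)$ ones yet are reflected onto a window of $x$, which holds at most $d$ ones in total, forcing at least $\Theta(\delta)-d>d$ mismatches since $\delta=\Omega(d^2)$ in this regime. In every case $P$ accumulates more than $d$ mismatches, so no such $P$ exists and the longest $d$-near-palindrome has length at most $\frac n2+200d^2$.

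I expect the main obstacle to be the quantitative mismatch count for $\nu$ against a nonzero shift of itself: one must argue that the increasing run lengths prevent $\nu$ from agreeing with its shift by $2\delta$ over any window longer than $O(d^2)$ with fewer than $d$ errors, handling uniformly both the small-shift regime (sparse mismatches of density $\approx 2\delta/\sqrt q$ clustered at run boundaries) and the large-shift regime (density $\approx\tfrac12$). The crossover $\delta=\Theta(d^2)$, where the window remaining inside $\nu^R$ becomes short just as the $\nu$-versus-$x$ comparison becomes long, is the place at which the two estimates must be balanced; the slack in the constant $200$ is exactly what allows a single $O(d^2)$ threshold to dominate both mechanisms simultaneously.
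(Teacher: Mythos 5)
Your skeleton is the same as the paper's: parametrize by the displacement $\delta$ of the palindrome's center from the middle of $s(x,y)$, dispose of $\delta=0$ via $\HAM(x,y)\ge d+1$, handle small $\delta$ by comparing $\nu$ against its own shift by $2\delta$ using the growing run lengths, and handle large $\delta$ by counting $1$'s of $\nu$ against the sparse middle quarters. The genuine gap is where you place the crossover and what you assert in the intermediate regime. You switch to the ones-counting mechanism only when the left end of $P$ leaves $\nu^R$, i.e.\ $\delta=\Omega(d^2)$, so for all $\delta\lesssim d^2$ you lean entirely on the self-comparison bound, and in the sub-regime $2\delta>\sqrt{M}$ you assert $\Omega(M)$ mismatches without proof. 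That assertion is precisely the hard case, not a routine check: on a window ending near position $q$, the string $\nu$ is locally near-periodic with period $\Theta(\sqrt{q})$, so a shift $2\delta$ that is close to a multiple of the local period produces long stretches of agreement; the disagreement density is governed by the slow drift of the run lengths, not by any ``density $\approx\tfrac12$'' heuristic, and extracting even $d+1$ mismatches uniformly over $d\ll\delta\ll d^2$ would require a careful phase-drift argument that your sketch does not supply. You flag this yourself as the main obstacle, but flagging it does not close it, and your proposed balancing at $\delta=\Theta(d^2)$ is exactly what forces you into this regime.

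The paper avoids the problem by placing the crossover at $\delta=\Theta(d)$, and the reason it can is a fact your regime-(b) computation already contains but does not exploit: for \emph{every} $\delta\ge 1$ (not only $\delta=\Omega(d^2)$), the first $\min(2\delta,\,\delta+100d^2)$ symbols of $\nu$ are reflected into the middle quarters holding $x$ and $y^R$, which carry $O(d)$ ones in total, while a prefix of $\nu$ of length $M$ carries $\tfrac{M}{2}-O(\sqrt{M})$ ones. Hence the sparse-counting mechanism already yields more than $d$ mismatches once $\delta\ge Cd$ for a modest constant $C$, and the self-comparison of $\nu$ is needed only for shifts $2\delta=O(d)$, where the run-boundary count over the available window of length $\Omega(d^2)$ (runs of length exceeding the shift appear well before index $100d^2$) gives the bound without ever entering your problematic sub-regime. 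Restructuring your case split this way — ones-counting for $\delta\gtrsim d$, shifted self-comparison only for $\delta\lesssim d$ — repairs the argument and recovers the paper's proof; as written, your proof is incomplete in the band $Cd\lesssim\delta\lesssim d^2$.
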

\begin{proof}
Suppose, by way of contradiction, that the longest $d$-near-palindrome of $s(x,y)$ has length at least $200d^2+\frac{n}{2}$. 
Since $\nu$ has length $\frac{n}{4}$ and $\HAM(x,y)>d$, then the midpoint $m$ of the longest $d$-near-palindrome of $s(x,y)$ lies within $x$ or $y$. 
Suppose that the midpoint is in $x$, so that $m<\frac{n}{2}$. 
We consider the cases where $m<\frac{n}{2}-8d$ and $m\ge\frac{n}{2}-8d$.

If $m<\frac{n}{2}-8d$, then at least $8d$ characters of $\nu$ coincide with characters of $xy$ in the reverse. 
However, the final $8d$ characters of $\nu$ contain at least $4d$ many $1$'s while the characters of $xy$ contain at most $2d+1$ many $1$'s, and so the Hamming distance is at least $2d-1$, which is a contradiction. 
On the other hand, if $m\ge\frac{n}{2}-8d$, then at least $200d^2$ characters of $\nu$ and $\nu^R$ coincide.
But because $m<\frac{n}{2}$, then the midpoint is closer to the end of $\nu^R$ than the beginning of $\nu$. 
Hence, for $k>8d$, each consecutive run of $k$ many $1$'s in $\nu^R$ corresponds with a $0$ in $\nu$. 
But then by the time $\nu^R$ has a consecutive run of $10d$ many $1$'s, the Hamming distance is at least $2d-1$, which is a contradiction. 
Since $\nu^R$ has a consecutive run of $10d$ many $1$'s by the index $(10d)^2=100d^2$, then the longest $d$-near-palindrome has length at most $200d^2+\frac{n}{2}$.

A similar argument follows if $m\ge\frac{n}{2}-8d$, so that the midpoint is in $y$.
\end{proof}
Since $d=o(\sqrt{n})$, then any algorithm with approximation factor $(1+\eps)$ can distinguish whether the longest $d$-near-palindrome in $s(x,y)$ has length $n$ or at most $200d^2+\frac{n}{2}$, for large $n$ and small and constant $\eps$. 
In turn, this algorithm can distinguish between $\HAM(x,y)=d$ and $\HAM(x,y)>d$ by \lemref{lem:separate}.
However, by \lemref{lem:lb:mem}, any algorithm using less than $\frac{d\log n}{3}$ bits of memory cannot distinguish between $\HAM(x,y)=d$ and $\HAM(x,y)>d$ with probability at least $1-1/n$.
 
Therefore, $\Omega(d\log n)$ bits of memory are necessary to $(1+\eps)$-approximate the length of the longest $d$-near-palindrome with probability at least $1-\frac{1}{n}$.
\end{proofof}

\begin{remindertheorem}{\thmref{thm:lowerboundsb}}
\lowerboundsb
\end{remindertheorem}
\begin{proofof}{\thmref{thm:lowerboundsb}}
We use a similar strategy as in \thmref{thm:lowerbounds} and analyze deterministic algorithms using less than $\frac{dn}{12E}$ memory, on a special hard distribution of inputs.

For $n'>0$, which we pick shortly, let $X$ be the set of binary strings of length $\frac{n'}{2}$. 
Given $x\in X$, let $Y_x$ be the set of binary strings of length $\frac{n'}{2}$ with either $\HAM(x,y)=d$ or $\HAM(x,y)=d+1$. 
We pick $(x,y)$ uniformly at random from $(X,Y_x)$.

\begin{lemma}
\lemlab{lem:lb:memb}
Given an input $x\circ y$, any deterministic algorithm $\mathcal{D}$ which uses less than $\frac{n'}{4}$ bits of memory cannot correctly output whether $\HAM(x,y)\le d$ or $\HAM(x,y)>d+1$ with probability at least $1-\frac{1}{n'}$, for $d=o(\sqrt{n})$.
\end{lemma}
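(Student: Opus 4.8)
The plan is to follow the template of \lemref{lem:lb:mem}, but with $X$ now the full cube $\{0,1\}^{n'/2}$, so that $|X| = 2^{n'/2}$ and the information content of $x$ is linear in $n'$. By Yao's principle, already set up in the proof of \thmref{thm:lowerbounds}, it suffices to show that every deterministic $\mathcal{D}$ reading $x\circ y$ with fewer than $\frac{n'}{4}$ bits of memory errs with probability at least $\frac{1}{n'}$ over the distribution $(X, Y_x)$. First I would note that after reading the prefix $x$ the algorithm is in one of at most $2^{n'/4}$ memory configurations, and $2^{n'/4} \ll |X|$. Pigeonholing exactly as in \lemref{lem:lb:mem} yields at least $\frac{1}{2}\bigl(2^{n'/2} - 2^{n'/4}\bigr) \ge \frac{|X|}{4}$ pairs $x, x'$ that collide to the same configuration; since the suffix $y$ is then processed identically, $\mathcal{D}$ must output the same answer on $x\circ y$ and $x'\circ y$ for every $y$.

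Next, for each colliding pair I would produce many promise inputs $y$ that force an error. Let $\mathcal{I}$ be the set of coordinates where $x$ and $x'$ differ. Because moving $y$ across $\mathcal{I}$ changes the two distances in a controlled way — precisely $\HAM(x',y) = \HAM(x,y) + |\mathcal{I}| - 2a$, where $a$ counts the coordinates of $\mathcal{I}$ on which $y$ already differs from $x$ — I can select $y$ so that $\HAM(x,y)$ and $\HAM(x',y)$ straddle the decision threshold. Any such $y$ lies in the promise yet demands opposite answers, so the common output of $\mathcal{D}$ is wrong on at least one of $x\circ y,\ x'\circ y$. Counting the admissible completions gives a binomial number of witnesses per pair, and the "at most one correct preimage per configuration-and-$y$ pair" observation from \lemref{lem:lb:mem} ensures these contributions are added without duplication. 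I would then divide the total count of error-inducing inputs by the support size $|X|\cdot|Y_x|$ and estimate the resulting ratio of binomial coefficients with Stirling's approximation and Bernoulli's inequality, exactly as in \lemref{lem:lb:mem}, to conclude the error probability is at least $\frac{1}{n'}$ whenever $d = o(\sqrt{n})$ and $n'$ is large. This single-instance bound would then be amplified to the $\Omega\bigl(\frac{dn}{E}\bigr)$ bound of \thmref{thm:lowerboundsb} by instantiating $n'$ in terms of $E$ and laying $\Theta\bigl(\frac{n}{E}\bigr)$ independent copies of the hard instance end to end, so that the memory costs add.

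The step I expect to be the crux is guaranteeing abundant witnessing $y$ for \emph{every} colliding pair, not just conveniently structured ones. In \lemref{lem:lb:mem} the fixed Hamming weight of $x$ pinned the symmetric difference to $|\mathcal{I}| \le 2d$, which is exactly what kept $\binom{n'/2 - |\mathcal{I}|}{d}$ comparable to $\binom{n'/2}{d}$; with $X = \{0,1\}^{n'/2}$ a colliding pair can have $|\mathcal{I}|$ anywhere up to $n'/2$, and the per-pair witness count no longer carries over for free. A second, related pitfall is that a separation of only one — deciding $\HAM = d$ against $\HAM = d+1$ — is solvable with $O(1)$ bits by tracking the parity of the Hamming distance, which is precisely why the statement separates $\HAM \le d$ from $\HAM > d+1$: the gap must be wide enough to defeat such parity shortcuts. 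Reconciling these two pressures — a separation wide enough to kill low-memory tricks, yet witnesses plentiful for pairs of every symmetric-difference size — is where the real work lies; concretely I would aim to show that the number of straddling $y$ remains $\Omega\bigl(\binom{n'/2}{d}\bigr)$ uniformly in $|\mathcal{I}|$, which is the one estimate that does not transfer directly from the fixed-weight setting.
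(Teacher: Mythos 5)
Your skeleton coincides with the paper's proof of \lemref{lem:lb:memb}: the same full cube $X=\{0,1\}^{n'/2}$, the same pigeonhole giving at least $\frac{1}{2}(2^{n'/2}-2^{n'/4})\ge\frac{|X|}{4}$ colliding pairs, and the same straddling identity $\HAM(x',y)=\HAM(x,y)+|\mathcal{I}|-2a$. However, the estimate you explicitly defer --- that the witness count stays $\Omega\left(\binom{n'/2}{d}\right)$ uniformly in $|\mathcal{I}|$ --- is the entire content of the paper's argument, and it is closed by a specific device your proposal never names: the \emph{two} promise distances are deployed on opposite sides of the midpoint of $\mathcal{I}$. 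For $i<|\mathcal{I}|/2$ the paper takes $y$ at distance exactly $d$ from $x$ ($i$ flips inside $\mathcal{I}$, $d-i$ outside, so $\HAM(x',y)=d+|\mathcal{I}|-2i>d$), while for $i>|\mathcal{I}|/2$ it takes $y$ at distance exactly $d+1$ ($i$ inside, $d+1-i$ outside, so $\HAM(x',y)\le d$). Lowering $d+1-i$ to $d-i$ termwise, the two sums cover all of Vandermonde's identity except the single middle term, yielding at least $\binom{n'/2}{d}-\binom{|\mathcal{I}|}{|\mathcal{I}|/2}\binom{n'/2-|\mathcal{I}|}{d-|\mathcal{I}|/2}$ witnesses; the subtracted term vanishes outright when $|\mathcal{I}|>2d$ and is at most $\binom{2}{1}\binom{n'/2-2}{d-1}=\binom{n'/2}{d}\frac{2d(n'/2-d)}{(n'/2)(n'/2-1)}\le\frac{1}{\sqrt{n'}}\binom{n'/2}{d}$ for $d=o(\sqrt{n})$. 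This is exactly the uniformity you correctly identified as not transferring from the fixed-weight setting of \lemref{lem:lb:mem}, so the crux of your plan is left genuinely open. The endgame is also more elementary than you project: no Stirling approximation is used in \lemref{lem:lb:memb}; the error probability is a direct ratio of binomial coefficients bounded below by $\frac{1}{32d}\ge\frac{1}{n'}$.

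Two further points. First, your amplification step diverges from the paper and fails as a mechanism: concatenating $\Theta\left(\frac{n}{E}\right)$ independent copies does not make streaming memory costs add, since a one-pass algorithm can reuse its space across copies. The paper instead stretches a \emph{single} hard instance, interleaving blocks $\mathbf{1}^{E/d}$ between consecutive symbols of $x$ and of $y^R$ and padding with $\mathbf{1}^E$ at both ends, so that a Hamming gap of one becomes a length gap exceeding $E$; the $\Omega\left(\frac{dn}{E}\right)$ bound then follows from $n'=\Theta\left(\frac{dn}{E}\right)$ in the single-instance lemma. Second, your parity observation is sharp and cuts deeper than you may realize: $w(x)+w(y)\equiv\HAM(x,y)\bmod 2$ is maintainable in $O(1)$ bits, so distinguishing $\HAM=d$ from $\HAM=d+1$ on the promise support $\{d,d+1\}$ really is easy, and this constrains how errors may be counted. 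A common memory configuration forces a common answer only for a common suffix $y$, in which case $\HAM(x',y)=d+|\mathcal{I}|-2i$ can land at $d+2$ or beyond, \emph{outside} the support, where an ``error'' contributes nothing to the distributional failure probability (for the parity algorithm every colliding pair has $|\mathcal{I}|$ even, so every such witness is off-promise). The paper's write-up pairs $(x,y)$ with $(x',y')$ whose suffixes differ, so any fully rigorous execution of this plan --- yours or the paper's --- must ensure the error witnesses lie in the support, e.g., by widening it to include larger distances, which the reduction in \thmref{thm:lowerboundsb} tolerates since $s(x,y)$ has only short $d$-near-palindromes whenever $\HAM(x,y)>d$. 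In short: you follow the paper's route, but the decisive Vandermonde count is missing and the amplification mechanism is wrong.
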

\begin{proof}
Because $\mathcal{D}$ uses less than $ \frac{n'}{4}$ bits of memory, then $\mathcal{D}$ has at most $2^{n'/4}$ unique memory configurations.
Since $|X|=2^{n'/2}$, then there are at least $\frac{1}{2}(|X|-2^{n'/4})\ge\frac{|X|}{4}$ pairs $x,x'$ such that $\mathcal{D}$ has the same configuration after reading $x$ and $x'$. 
We show that $\mathcal{D}$ errs on a significant fraction of these pairs $x,x'$.

Let $\mathcal{I}$ be the positions where $x$ and $x'$ differ, so that $\HAM(x,x')=|\mathcal{I}|>0$.
Consider $i\neq\frac{|\mathcal{I}|}{2}$, so that either $i>\frac{|\mathcal{I}|}{2}$ or $i<\frac{|\mathcal{I}|}{2}$.
If $i<\frac{|\mathcal{I}|}{2}$, let $y$ differ from $x$ in $i$ positions (where $i\le d$) of $\mathcal{I}$ and in $d-i$ positions outside of $\mathcal{I}$.
Then $\HAM(x,y)=d$, but $\HAM(x',y)>d$.  
Similarly, let $y'$ differ from $x'$ in the $i$ positions of $\mathcal{I}$ and $d-i$ positions outside of $\mathcal{I}$ so that $\HAM(x',y')=d$, but $\HAM(x,y')>d$. 
Hence, $\mathcal{D}$ errs on either $(x,y)$ or $(x',y')$. 

There are at least $\sum_{i=0}^{\frac{|\mathcal{I}|}{2}-1}\binom{|\mathcal{I}|}{i}\binom{n'/2-|\mathcal{I}|}{d-i}$ such $y$ for each pair $(x,x')$.

Similarly, if $i>\frac{|\mathcal{I}|}{2}$, let $y$ differ from $x$ in the $i$ positions (where $i\le d$) of $\mathcal{I}$ and in $d+1-i$ positions outside of $\mathcal{I}$. 
Then $\HAM(x,y)=d+1$, but $\HAM(x',y)\le d$. 
Similarly, let $y'$ differ from $x'$ in the $i$ positions of $\mathcal{I}$ and $d+1-i$ positions outside of $\mathcal{I}$ so that $\HAM(x', y')=d$, but $\HAM(x',y')>d$. 
Hence, $\mathcal{D}$ errs on either $(x,y)$ or $(x',y')$. 
There are at least $\sum_{\frac{|\mathcal{I}|}{2}+1}^{d}\binom{|\mathcal{I}|}{i}\binom{n'/2-|\mathcal{I}|}{d+1-i}$ such $y$ for each pair $(x,x')$.

The total number of such $y$ is therefore at least:
\small
\begin{align*}
\sum_{i=0}^{\frac{|\mathcal{I}|}{2}-1}\binom{|\mathcal{I}|}{i}\binom{n'/2-|\mathcal{I}|}{d-i}+\sum_{\frac{|\mathcal{I}|}{2}+1}^{d}\binom{|\mathcal{I}|}{i}\binom{n'/2-|\mathcal{I}|}{d+1-i}
&\ge\sum_{i=0}^{\frac{|\mathcal{I}|}{2}-1}\binom{|\mathcal{I}|}{i}\binom{n'/2-|\mathcal{I}|}{d-i}+\sum_{\frac{|\mathcal{I}|}{2}+1}^{d}\binom{|\mathcal{I}|}{i}\binom{n'/2-|\mathcal{I}|}{d-i}\\
&\ge\left(\sum_{i=0}^d\binom{|\mathcal{I}|}{i}\binom{n'/2-|\mathcal{I}|}{d-i}\right)-\binom{|\mathcal{I}|}{|\mathcal{I}|/2}\binom{n'/2-|\mathcal{I}|}{d-|\mathcal{I}|/2}.
\end{align*}
Applying Vandermonde's identity, the total number of such $y$ is at least $\binom{n'/2}{d}-\binom{|\mathcal{I}|}{|\mathcal{I}|/2}\binom{n'/2-|\mathcal{I}|}{d-|\mathcal{I}|/2}$. 
Recall that $\mathcal{I}$ is the number of indices in which $x$ and $x'$ differ, so $|\mathcal{I}|\ge1$. 
Thus, 
\begin{align*}
\binom{|\mathcal{I}|}{|\mathcal{I}|/2}\binom{n'/2-|\mathcal{I}|}{d-|\mathcal{I}|/2}\le\binom{2}{1}\binom{n'/2-2}{d-1}
=\binom{n'/2}{d}\frac{2d(n'/2-d)}{(n'/2)(n'/2-1)}\le\binom{n'/2}{d}\frac{1}{\sqrt{n'}},
\end{align*}
\normalsize
where the last inequality comes from $d=o(\sqrt{n})$.
Therefore, for each pair $(x,x')$ the total number of errors by $\mathcal{D}$, as it cannot distinguish between $(x,y)$ and $(x',y')$, is at least $\frac{1}{2}\binom{n/2}{d}\left(1-\frac{1}{\sqrt{n}}\right)\ge\frac{1}{4}\binom{n/2}{d}$.

Since there are $\frac{|X|}{4}$ pairs of $(x,x')$, then there are at least $\frac{|X|}{16}\binom{n/2}{d}$ pairs $(x,y)$ for which $\mathcal{D}$ errs. 
Recall that $y$ satisfies either $\HAM(x,y)=d$ or $\HAM(x,y)=d+1$ so that there are $|X|\left(\binom{n'/2}{d}+\binom{n'/2}{d+1}\right)$ pairs $(x,y)$ in total. 
Thus, the probability of error is at least
\[\frac{\frac{|X|}{16}\binom{n'/2}{d}}{|X|\left(\binom{n'/2}{d}+\binom{n'/2}{d+1}\right)}\ge\frac{\binom{n'/2}{d}}{16\left(\binom{n'/2}{d}+(d+1)\binom{n'/2}{d}\right)}\ge\frac{1}{32d}.\]
Therefore for $d=o(\sqrt{n'})$, $\mathcal{D}$ fails with probability at least $\frac{1}{n'}$.
\end{proof}
Given strings $x$ and $y$ from the above distribution, define string 
\[s(x,y)=\textbf{1}^Ex_1\textbf{1}^{\frac{E}{d}}x_2\ldots \textbf{1}^{\frac{E}{d}}x_{n'/2}y_{n'/2}\textbf{1}^{\frac{E}{d}}\ldots y_2\textbf{1}^{\frac{E}{d}}y_1\textbf{1}^E,\]
where $x_i$ represents the $i$\th character of $x$ and $\textbf{1}^\ell$ represents $\ell$ repetitions of $1$. 
Let $s(x,y)$ be an input to $\mathcal{D}$ so that $s(x,y)$ has length $n=\left(\frac{E}{d}+1\right)(n'-2)+2E+2\le\frac{3E}{d}n'$. 
Note if $\HAM(x,y)\le d$, then $s(x,y)$ is a $d$-near-palindrome of length $\left(\frac{E}{d}+1\right)(n'-2)+2E+2$. 
However, if $\HAM(x,y)>d$, then the longest $d$-near-palindrome of $s(x,y)$ has length at most $\left(\frac{E}{d}+1\right)(n'-2)$. 
Consequently, any algorithm with additive error $E$ can be run on $s(x,y)$ to distinguish between $\HAM(x,y)\le d$ and $\HAM(x,y)\ge d+1$.
However, by \lemref{lem:lb:memb}, any algorithm using less than $\frac{n'}{4}$ bits of memory cannot distinguish between $\HAM(x,y)=d$ and $\HAM(x,y)>d$ with probability at least $1-\frac{1}{n'}>1-\frac{1}{n}$. 
Since $\frac{dn}{12}\le\frac{n'}{4}$, then it follows that for $d=o(\sqrt{n})$ and $E>d$, any randomized streaming algorithm which returns an additive approximation of $E$ to the length of the longest $d$-near-palindrome, with probability at least $1-\frac{1}{n}$, uses $\Omega\left(\frac{dn}{E}\right)$ space.
\end{proofof}
\section*{Acknowledgments}
We would like to thank Funda Erg{\"{u}}n, Tatiana Kuznetsova, and Qin Zhang for helpful discussions and pointers. 
We would also like to thank anonymous reviewers for pointing out an error in the proof of \thmref{thm:lowerboundsb}.
\def\shortbib{0}
\bibliographystyle{alpha}
\bibliography{references}
\end{document}